\newtheorem{thm}{Theorem}[section]
\newtheorem{Prop}[thm]{Proposition}
\newtheorem{lem}[thm]{Lemma}
\newtheorem{defi}[thm]{Definition}
 \newtheorem{Thm}{Theorem}[section]
 \newtheorem{Rmk}[thm]{Remark}
 \newtheorem{Lem}[thm]{Lemma}
  \newtheorem{Def}[thm]{Definition}
\def\bS {\mathbb{S}}
 \def\N {\mathbb{N}}
\def\R {\mathbb{R}}
\def\Q {\mathbb{Q}}
\def\T {\mathbb{T}}
\def\Z {\mathbb{Z}}
\def\p{\bold{p}}
\def\cB {\mathcal{B}}
\def\cD {\mathcal{D}}
\def\cH {\mathcal{H}}
\def\eps {{\varepsilon}}
\def\indc {{\bf 1}}
\def\la {\langle}
\def\ra {\rangle}
\def\d {{\partial}}
\newcommand{\Div}{\operatorname{div}}
\newcommand{\supp}{\operatorname{supp}}
\newcommand{\Ker}{\operatorname{Ker}}
\newcommand{\ba}{\begin{aligned}}
\newcommand{\ea}{\end{aligned}}
\newcommand{\be}{\begin{equation}}
\newcommand{\ee}{\end{equation}}
\numberwithin{equation}{section}
\begin{document}


\title{Attractors for two dimensional waves with  homogeneous Hamiltonians  of degree 0}
\author{Yves  Colin de Verdi\`ere 
\footnote{Universit\'e Grenoble-Alpes 
Institut Fourier,
 Unit{\'e} mixte
 de recherche CNRS-UGA 5582,
 BP 74, 38402-Saint Martin d'H\`eres Cedex (France);
{\color{blue} {\tt yves.colin-de-verdiere@univ-grenoble-alpes.fr}}}~~ \& \\
Laure Saint-Raymond
 \footnote{Ecole Normale Sup\'erieure de Lyon,  UMPA,  UMR CNRS-ENSL 5669,  46, all\'ee d'Italie,  69364-Lyon Cedex 07
(France) {\color{blue} {\tt laure.saint-raymond@ens-lyon.fr}}}
}

\maketitle 
\begin{abstract} In domains with topography, inertial and internal  waves exhibit interesting features.
 In particular, numerical and lab experiments show that, in two dimensions, 
for generic forcing frequencies, these waves concentrate on attractors. The goal of this paper is to analyze mathematically this behavior, 
using tools from spectral theory and microlocal analysis.

\end{abstract}


\section{Physical background}

The mathematical problem which will be discussed in this paper is motivated by the physical observation that, 
in presence of topography, forcing inertial or internal waves leads to the formation of singular geometric patterns,
which accumulate most of the energy.

\subsection{Inertial and internal waves}

Inertial and internal waves are of upmost importance in oceanic flows. They describe small departures from equilibrium in an incompressible fluid submitted respectively  to the Coriolis force (due to the Earth rotation), or to the combination of density stratification and  gravity. These waves are very well described when the effect of boundaries is neglected, assuming for instance that the fluid is contained in a parallelelipedic box with zero flux condition at the boundary (or equivalently in a periodic box).
For the sake of completeness, we recall here the equations governing these waves, and their usual decomposition as a superposition of plane waves using the Fourier transform.

\medskip
\noindent
$\bullet$
Let us first consider the case of inertial waves, assuming that the density of the fluid  is homogeneous $\rho(t,x) = \rho_0$. 
The incompressibility constraint states $\Div u=0$, where   $u$ is  the bulk velocity of the fluid.

The (linearized) conservation of momentum provides
$$ \rho_0 (\d_t u + u\wedge \Omega)+ \nabla p = 0 $$
denoting by $p$ the pressure, and by $\Omega$ the rotation vector. Note that there is no convection term $u\cdot \nabla u$ here as it is expected to be negligible for small fluctuations.
Assuming for the sake of simplicity that $\rho_0 = 1$ and taking the divergence of this equation, we get
$$\Div (u\wedge \Omega) +\Delta p = 0$$
from which we deduce that the pressure is given by
$$p = (-\Delta)^{-1} \Div (u\wedge \Omega) \,.$$
We therefore  end up with the dynamical equation
\begin{equation}
\label{inertial}
\d_t u + (I+ \nabla (-\Delta)^{-1}\Div)  (u\wedge \Omega) = 0\,.
\end{equation}

If the rotation is constant (say in the vertical direction $\Omega = e_3$), we rewrite the incompressibility constraint  in Fourier variables
$$ p_h \cdot \hat u_{p,h} + p_3 \hat u_{p,3} = 0,$$
using the subscript $h$ to denote  the horizontal component,
and  take the Fourier transform of (\ref{inertial}) to get
$$ 
\d _t \hat u_{p,h}+ \begin{pmatrix} {p_1p_2 \over |p|^2} & 1- {p_1^2\over |p|^2} \\ -1+{p_2^2\over |p|^2} & -{p_1p_2 \over |p|^2} \end{pmatrix} \hat u_{p,h} = 0\,.$$
The matrix  has eigenvalues $\pm i |p_3|/|p|$. In other words, the solution to the wave equation can be obtained as a superposition of plane waves $\exp (i(p\cdot x - \omega t))$ with dispersion relation
$$ \omega = \pm {|p_3| \over |p|}\,.$$

\medskip
\noindent
$\bullet$ The model leading to internal waves is a little bit more complicated. It describes an incompressible  fluid which at equilibrium is stratified in density with stable profile $\bar \rho (x) = \bar \rho(x_3)$ with $\bar \rho '(x_3) <0$. 
Small perturbations will create both a velocity field $u$ and a fluctuation of the density $\rho = \bar \rho +\eta$.

The incompressibility constraint states as previously $\Div u = 0$. If the fluctuation around equilibrium is small, the conservation of mass
$$\d_t \rho +\Div (\rho u) =0$$
gives at leading order
$$\d_t \eta + u\cdot \nabla \bar \rho = 0\,.$$

The (linearized) conservation of momentum states
$$\bar \rho \d_t u +\eta g e_3 +\nabla p = 0$$
denoting by $g$ the gravity constant, and by $p$ the pressure. As previously, the pressure is computed thanks to the incompressibility condition
$$-\Div \left( {1\over \bar \rho} \nabla  p\right)  = g \d_3 {\eta \over \bar \rho} \,.$$
In most physical systems, the variations of $\bar \rho$ are very small compared to its average $\rho_0$, and count only for the buoyancy term.
The pressure is then given by
$$ p = (-\Delta)^{-1} (g\d_3\eta)\,.$$
The system of equations can be therefore reduced to get the Boussinesq approximation
$$
\left\{
\begin{aligned}
\d_t \eta +  u_3 \bar \rho '(x_3)  = 0,\\
  \d_t u_h  + {g \over \rho_0}\nabla_h (-\Delta)^{-1}\d_3 \eta =0 , \\
  \d_t u_3 + { g\over \rho_0} (I + \d_{33} (-\Delta)^{-1}) \eta=0\,.
\end{aligned}
\right.
$$

Assuming in addition that the stratification is affine so that $\bar \rho'$ is a constant, and taking the Fourier transform of this system,
we obtain that 
$$ 
\d _t \begin{pmatrix} \hat \eta_p \\ \hat u_{p,3}\end{pmatrix} + \begin{pmatrix} 0& \bar \rho' \\ {g\over \rho_0} \left(1- {p_3^2\over |p|^2}\right)  & 0 \end{pmatrix} \begin{pmatrix} \hat \eta_p \\ \hat u_{3,h}\end{pmatrix} = 0\,.$$
The solution can be expressed as a sum of plane waves with dispersion relation
$$\omega = \pm N {|p_h| \over |p|}\,,$$
where $N= ( - g \bar \rho' /\rho_0)^{1/2}$ denotes the Brunt-V\"ais\"al\"a frequency.

\medskip

\begin{Rmk}
In both cases, the pressure is obtained by solving the Laplace equation. In a non periodic domain, we expect the solution to depend on the boundary conditions.
The Leray projection onto divergence free vector fields defined by 
$$P = (I+ \nabla (-\Delta)^{-1}\Div)$$
is a non local (pseudo-differential) operator.
\end{Rmk}

\subsection{The effect of topography}

When the domain is not periodic (or when the rotation or the Brunt-V\"ais\"al\"a frequency are not constant), the previous analysis fails because we cannot use the Fourier transform,
and it becomes complicated to handle the Leray projection. In general, the zero-flux condition on the boundary is incompatible with any decomposition on special functions.
Actually we will see that the waves exhibit a very different behaviour.

The lab experiments conducted by physicists, especially in the groups of Maas and  Dauxois \cite{MBS,dauxois, brouzet}, consist in analyzing the response of a stratified fluid to some monochromatic forcing in different geometries.
In \cite{brouzet} for instance, the domain is a 2D trapezium and the symmetry breaking is obtained by introducing a sloping boundary. 

\begin{figure} [h] 
\centering
\includegraphics[width=5cm]{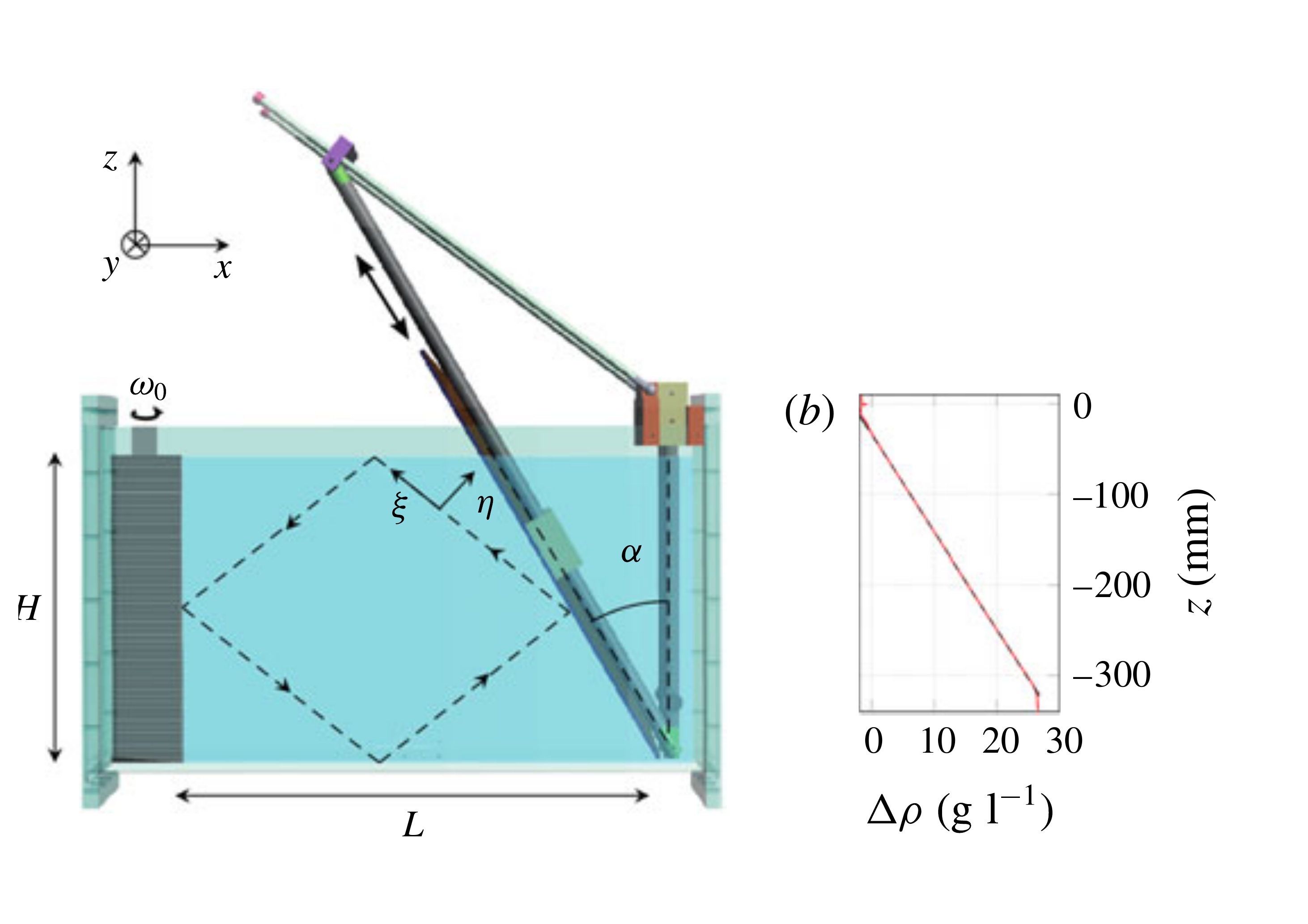} 
\caption{\small Experimental setup in \cite{brouzet}. }
\label{fig: exp-setup}
\end{figure}

 \medskip
\noindent
$\bullet$ In the absence of topography (i.e. in a rectangular box which can be seen as a torus $\T^2$), the system is completely integrable and the response to the forcing can be predicted using Fourier analysis as previously.

$\bullet$  If the forcing is non resonant, i.e. if the forcing frequency $\omega_0$ is different from $\pm |\p_h||/|p|$ for all $p$ which are excited, then the system will oscillate according to the following equation
$$ \d_t \hat v_{p,\pm} = \hat f_{p,\pm} \exp (i(\pm  \omega_p-\omega_0) t)$$
where $ \hat v_{p,\pm}$ is the amplitude of the wave $\exp (i(p\cdot x \mp \omega_p t))$.
We then obtain 
$$  \hat v_{p,\pm} =   \hat f_{p,\pm} \left( {\exp (i(\pm  \omega_p-\omega_0) t)- 1 \over i(\pm  \omega_p-\omega_0)}\right) \,.$$

Generically, if the forcing is regularly distributed over the domain (i.e. $f\in C^\infty(\T^2)$), the solution will be  regular: it is typically the case under the diophantine condition
$$\forall \omega_p \neq \omega_0, \quad  |\omega_p -\omega_0| \geq C|p|^{-\alpha}$$
which is satisfied for almost all tori.

$\bullet$ If the forcing is resonant, i.e. if the forcing frequency $\omega_0=\pm |p_h||/|p|$ for some $p$ which is excited, the fluid pumps some energy and the amplitude of the resonant mode has a linear growth in time
$$  \hat v_{p,\pm} =   \hat f_{p,\pm} t \,.$$
But, under the same generic condition, the solution is still regular in $x$ since the resonant mode has a nice oscillating structure $\exp (i(p\cdot x))$.

For the sake of completeness, we give in appendix the detailed computations leading to these regularity results.

\medskip
\noindent
$\bullet$ What is observed in the presence of topography is completely different. By PIV methods (following the displacements of small markers in the fluid), one can get a measurement of the velocity field. It happens that  this velocity field is very singular with respect to the spatial variable $x$, which is unexpected from the simple model in the periodic box. The energy concentrates on some geometric patterns, which are broken lines in the 2D trapezoidal geometry with affine stratification. 

Furthermore,
some branches of these attractors are more energetic than others, which seems to indicate that there is a focusing mechanism due to the
  reflection on the slope.
The geometry of these attractors, especially the number of branches, depends on the forcing frequency and on the slope.

\begin{figure} [h] 
\centering
\includegraphics[width=5cm]{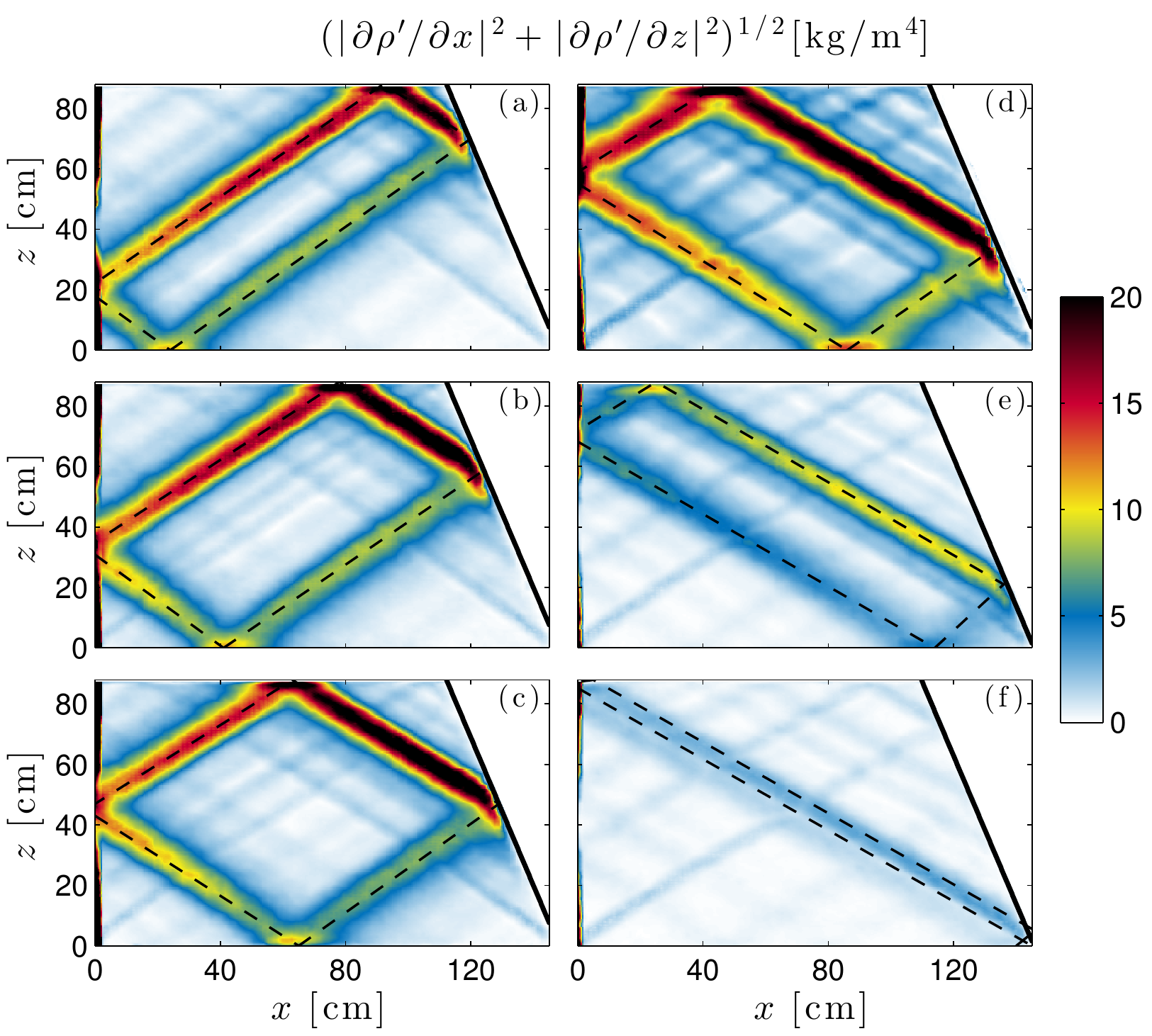} 
\caption{\small Experimental observation of attractors.\\
From the PhD thesis of C. Brouzet \cite{brouzet}, under the supervision of T. Dauxois (ENS Lyon, 2016) }
\label{fig: exp-attractor}
\end{figure}

\begin{Rmk}
In lab experiments, the effect of viscous dissipation is not negligible, which is important to explain that the energy varies also along each single branch. But we will not consider this effect here.
\end{Rmk}

\subsection{Classification of attractors}

 \medskip
\noindent
$\bullet$ The geometric pattern observed in the lab experiments coincide with the limit cycles obtained from the following geometric ray tracing
\begin{itemize}
\item start from any point $x_0$ in the domain,
\item draw the ray of direction $\phi$ such that  $\sin \phi =\omega_0/N $,
\item when the ray hits the boundary, make a reflection preserving $|\sin \phi|$,
\item continue this construction to obtain the limit cycle (or limit point).
\end{itemize}
This procedure is the counterpart of the geometric optics for electromagnetic or acoustic waves, the crucial difference here being that the frequency $\omega_0$  prescribes the direction of the propagation 
instead of the modulus of the wavelength.

 Without entering into the details of this geometric approximation, let us just explain the intuition behind it. The idea is to look at the propagation  of a wave packet, i.e. to seek  for a solution in the form
 $$ \exp \left( i{ p(t)\cdot x\over \eps}  - \frac12 (x-x(t))^2\right) $$
 which is localized both in space (around $x(t)$) and in frequency (around $p(t)$). 
 Unlike the   plane waves obtained in the first paragraph, the time frequency in this Ansatz is not quantified, and the amplitude now depends both on $t$ and $x$.
 
 \begin{Rmk}
  Of course, it is impossible to prescribe both the localization  in  $x$ and the localization in  $p$ (which is the well-known uncertainty principle in quantum physics). The main assumption behind the geometric approximation is that there is a scale separation (measured here by the factor $\eps$), and  $p$ is the Fourier variable corresponding to $x/\eps$.
  \end{Rmk}
 
 Plugging this Ansatz in the dynamical equations, we find that the propagation of the wave packet is given at leading order by the Hamiltonian equations
 $$ {dx_i\over dt } = {\d h \over \d p_i}, \quad {dp_i\over dt} = -{\d h \over \d x_i},$$
 where the Hamiltonian
is defined by the dispersion relation $h(x,p) =\omega (x,p)$.
 
 For internal waves in 2D with an affine stratification (say $N=1$), we find that $$\left\{\begin{aligned}
& \frac{dx_1}{ dt}  = \frac{p_3^2}{ |p|^3}, \quad &\frac{dp_1}{ dt} = 0,\\
 & \frac{dx_3}{ dt}  =- \frac{p_3p_1}{ |p|^3}, \quad  &\frac{dp_3}{ dt } = 0\,.
  \end{aligned}\right.
  $$
  Note that the group velocity is orthogonal to the wavenumber  $p$, and that
 its modulus is conversely proportional to $|p|$, which are  typical  features from dynamics with  Hamiltonian homogeneous of degree 0 (i.e. invariant by the homotheties $p\mapsto \lambda p$).
 This means that, on the energy level $\omega = \omega_0$, we have that the wavenumber $p$ satisfies
 $${|p_1|\over |p|} = \omega_0$$
 and the direction  of propagation, which is orthogonal to $p$, makes an angle $\phi=\pm  \arcsin \omega_0$  with respect to the horizontal.

  \medskip
\noindent
$\bullet$ Using this ray tracing method, it is possible to construct  numerically the  trajectories, and investigate systematically their  long time behavior.
 
Three scenarios appear depending on the slope $\alpha$ and on the forcing frequency $\omega_0= \sin \phi$, 
which are 
\begin{itemize}
\item convergence to a limit cycle
\item  concentration in a corner,
\item no emergence of a pattern
\end{itemize}

Note that, both  at the experimental and numerical levels, one cannot observe very long patterns because they will be generically rather dense in the domain and the accuracy of the measurements is finite.

\begin{figure} [h] 
\centering
\includegraphics[width=5.5cm]{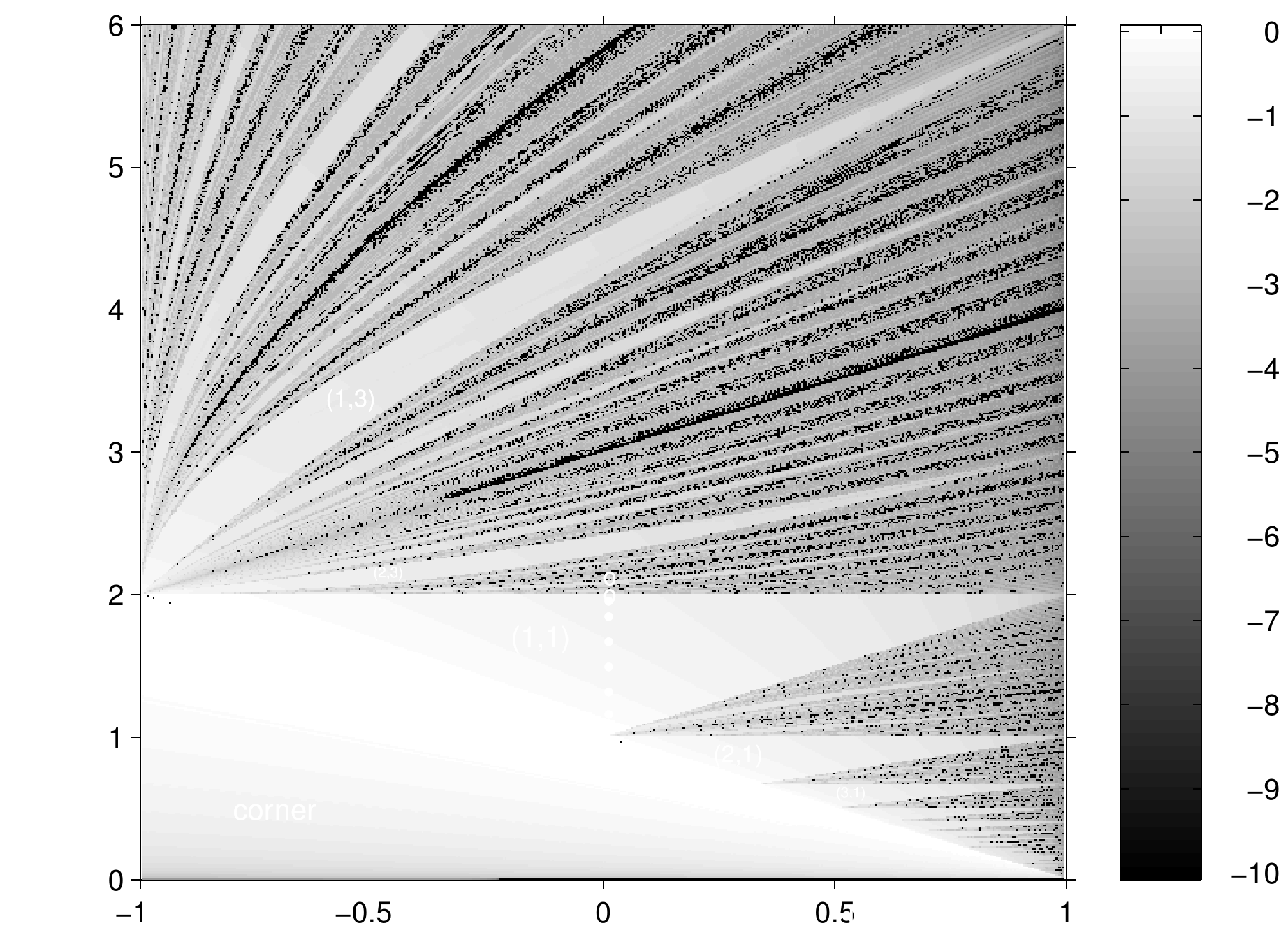} 
\caption{\small  
Greyscale: Lyapunov exponent of the trajectories.\\
White regions: attractors.\\
Black regions: no pattern emerges from the ray tracing.\\
From L. Maas, D. Benielli, J. Sommeria, F. Lam \cite{MBS}
}
\label{fig: bifurcation}
\end{figure}

At the theoretical level, the long time behavior of the Hamiltonian dynamics can be characterized by exhibiting a Poincar\'e section, and looking at the Poincar\'e return map.
If $ \phi >\alpha$, a Poincar\'e section is $x_3= 0$ and it is easy to see that all trajectories will focus on a corner.
If $ \phi<\alpha$, a Poincar\'e section is the lateral boundary $x_1= x_1^+(x_3) $.
The Poincar\'e return map $\pi_{\alpha,\phi}$ is a continuous map on  the circle $\bS^1$ (an homeomorphism) preserving the orientation. Such a map $\pi$  admits a crucial dynamical invariant $\rho(\pi) \in \T$, referred to as the rotation number, which has been introduced by Poincar\'e. It  
 can be defined for instance as  follows: for any $c,c' \in \bS^1\,$ 
$$ \rho (\pi)=  \lim_{n\to \infty} \frac1n |\{ j \in [0, n-1] \,/\, \pi^j (c')
 \in [c, \pi (c)[\} .$$
(we refer to  the first chapter of \cite{demello} for a brief presentation of the combinatorial theory of Poincar\'e).
Let us state the main properties of the  rotation number $\rho (\pi)$.
\begin{itemize}
\item When the rotation number $\rho (\pi)$ is rational, $\pi$ has periodic points (having all the same period)
 and any orbit is asymptotic to a periodic orbit.
\item When the rotation number $\rho = \rho (\pi) $ is irrational, $\pi$ is semi-conjugated to the rotation $R_\rho : \theta \mapsto \theta  + \rho $, i.e.
 there exists a monotone
 map $h$ such that $h \circ \pi= R_\rho \circ h$. Note that, in general $h$ is not a bijection: the inverse
 image of some point may be an interval.
 
  In \cite{De32}, Denjoy 
proved that, as soon as $\pi$ is smooth enough, the map $h$ is a nice continuous change of variables  and $ \pi = h^{-1} \circ R_\rho \circ h $.
\end{itemize}

For generic families of $C^2$ circle diffeomorphisms depending on one parameter,
 it has been proved by Arnold and Hermann (see Chapter 4 of \cite{demello}) that the rotation number
 is locally constant (as a function of this parameter), precisely when it is rational: this phenomenon is known
 as {\it frequency locking}.
The set of parameters for which the rotation number is irrational is a Cantor set possibly with non 
zero Lebesgue measure (devil's staircase). This provides bifurcation diagrams with Arnold's tongues
 very similar to Fig \ref{fig: bifurcation}.
Unfortunately this theory does not apply here as the Poincar\'e return map is only piecewise affine.
 However it is clear that fixed points of the (iterated) return map are stable under small perturbations
 of $\alpha$ or $\phi$. We therefore preserve the {\it band structure  of the bifurcation diagram} for rational rotation numbers.


\section{The mathematical scenario for the formation of geometric patterns}

The goal of the present paper  is to provide a mathematical description of the mechanisms leading to the formation of these geometric patterns,
 beyond the ray approximation which is not valid here since there is no scale separation.

\subsection{Quasi-resonant forcing}

The first important remark is  that the response to the forcing does not correspond to the usual picture when the wave operator has discrete eigenmodes.
A natural question is therefore to understand the response to the forcing when the wave operator has some continuous spectrum.

A very simple example which can illustrate this mechanism is the 1D Schr\"odinger equation in the whole space with a  monochromatic forcing (although it is  not really relevant from the physical point of view):
$$ \d_t \psi - i \Delta \psi = f\exp (-i\omega_0 t)\,. $$
The interest in looking at this model is that we know explicitly the spectrum $\R$ and the generalized eigenfunctions $e_\xi : x\mapsto \exp (ix\cdot \xi)$ of the Laplacian, as well as the spectral decomposition: for any $\phi \in L^2(\R)$,
$$ \phi(x)  ={1\over 2\pi} \int_\R  \hat \phi(\xi) \exp (ix\cdot \xi) d\xi\,$$
where the Fourier coefficients are defined by
$$\hat \phi(\xi)  =  \int_\R \phi(x)  \exp (-ix\cdot \xi) dx\,. $$ 

In Fourier variables, the Schr\"odinger equation can be rewritten
$$\d_t \hat \psi +i |\xi|^2 \hat \psi  = \hat f \exp (-i\omega_0 t) \,,$$
from which we deduce that 
$$ \hat \psi (t,\xi) \exp (it |\xi|^2) = -i  \hat f(\xi) { \exp (it (|\xi|^2-\omega_0)) - 1 \over |\xi|^2 - \omega_0} \,.$$

\medskip
\noindent
$\bullet$
A first way of looking at this system is to consider averages: for any $\phi \in L^2(\R)$, we define
$$ < \psi, \phi>(t)  = \int \psi(t,x) \phi^*(x)  dx =  {1\over 2\pi} \int \hat \psi(t,\xi) \bar \phi^*(\xi) d\xi \,.$$
Using Parseval's identity, we get
$$ \exp (i\omega_0 t) < \psi, \phi>(t)  =  {-i \over 2\pi} \int  \hat \phi^*(\xi)  \hat f(\xi) { 1 - \exp (- it (|\xi|^2-\omega_0))  \over |\xi|^2 - \omega_0} d\xi \,.$$
Recall  that, if  $a<\omega <b$
$$
\begin{aligned}
\int   { 1 -    \exp (-i(\lambda-\omega )   t)   \over  -\omega +\lambda }  \indc_{[a,b]} d\lambda &=  \int { 1 -    \exp (-i\mu)   \over  \mu  }  \indc_{[(a-\omega) t   ,(b-\omega)  t  ]} d\mu\\
  & = i \pi + \log {b-\omega \over \omega - a  } + O (\frac 1 t) \,,
 \end{aligned}
$$
We then expect that the average $ < \psi, \phi>(t) $ will converge to a finite value as $t \to \infty$, at least if the forcing frequency is not 0, which corresponds to a singularity of the resonant manifold.

More precisely, we have by the simple changes of variables $|\xi|^2 = \lambda = \omega_0  +{1 \over t } \mu $
$$\begin{aligned}
 2\pi < \psi, \phi>  & = -i   \exp (-i\omega_0   t) \int _0^{+\infty}  { 1 -    \exp (-i(\lambda -\omega_0 )   t)   \over  -\omega _0+\lambda  }   \hat  f(\sqrt{\lambda})  \bar \phi^*(\sqrt{\lambda}) {d\lambda\over 2\sqrt{\lambda}}\\
 &-i   \exp (-i\omega_0   t) \int _0^{+\infty}  { 1 -   \exp (-i(\lambda -\omega_0 )   t)   \over  -\omega_0 +\lambda  }   \hat  f(-\sqrt{\lambda})  \bar \phi^*(-\sqrt{\lambda}) {d\lambda\over 2\sqrt{\lambda}} \\
 & =-i    \exp (-i\omega_0   t)   \int_{-\omega_0 t} ^{+\infty}  { 1 -    \exp (-i\mu)   \over  \mu  }  F(\omega_0  +{1 \over t } \mu) d\mu \,.
  \end{aligned}$$
  with 
  $$ F(z) = \frac1{2\sqrt{z}}  ( \hat  f(\sqrt{z})  \bar \phi^*(\sqrt{z})+ \hat  f(-\sqrt{z})  \bar \phi^*(-\sqrt{z})) \,.$$
If we assume that both the forcing $f$ and the weight $\phi$ are smooth functions decaying at infinity,
  \begin{equation}
  \label{amplitude}
  \frac{2\pi}  \exp (i\omega   t)  < \psi, \phi>\to - i \la  \hbox{vp}  \left({1\over \lambda - \omega_0}\right) , F\ra + \pi F(\omega_0) \,,
  \end{equation}
  which is finite.  Note that, at this stage, we do not observe any growth in time.

\begin{Rmk}
Note that, in the integrable case, the amplitudes of the oscillating modes are special averages
$$ \psi = \sum_{n} \hat \psi_n e_n \hbox{ with } \hat \psi_n = < \psi, e_n>$$
In the resonant case, i.e. when $\omega_0 = \lambda_n$ for some $n$, the corresponding  average $< \psi, e_n>$ grows linearly in time.
\end{Rmk} 

\medskip
\noindent
$\bullet$
Physically it can be more relevant to look at the energy, or at any observable expressed as a quadratic quantity in $\psi$.
The kinetic energy for instance can be expressed in Fourier as
$$
\int |\xi|^2 |\hat \psi(t, \xi)|^2 d\xi =  \int \left| { 1 -    \exp (-i(|\xi|^2-\omega_0 )  t)   \over  -\omega_0 +|\xi|^2 } \right|^2 \ | \hat  f(\xi)  |^2 |\xi|^2 d\xi\,.$$
With the same changes of variables as previously, we get
$$
\begin{aligned}
\int |\xi|^2 |\hat \psi(t, \xi)|^2 d\xi &= \frac{1}2 \int_0^{+\infty} \left| { 1 -   \exp (-i(\lambda -\omega_0 )   t)  \over  -\omega +\lambda } \right|^2   ( |\hat  f(\sqrt{\lambda} )  |^2- |\hat f(-\sqrt{\lambda})|^2)  \sqrt{\lambda} d\lambda\\
& = \frac{ t}2 \int_{-\omega t}^{+\infty} \left| { 1 -    \exp (-i\mu )   \over  \mu } \right|^2 G(\omega _0 +{1 \over t } \mu) d\mu   \\
& =  t \int_{-\omega_0 t}^{+\infty}  { 1 -   \cos \mu    \over  \mu^2  } G(\omega_0  +{1 \over t } \mu) d\mu   
\end{aligned}$$
with 
$$G(z) =  (|\hat  f(\sqrt{z} )  |^2+ |\hat f(-\sqrt{z})|^2)  \sqrt{z}\,.$$

We then conclude that generically the energy grows linearly 
\begin{equation}
  \label{energy}
   \int |\xi|^2 |\hat \psi(t, \xi)|^2 d\xi \sim {\pi \over 2}  G(\omega_0) t \,,
   \end{equation}
provided that the support of $\hat f$ contains $\pm \sqrt{\omega_0}$.

\begin{Rmk}
Note that, in the integrable case, the energy is of the form 
$$ \sum_{n} c_n |\hat \psi_n|^2 \hbox{ with } \hat \psi_n = < \psi, e_n>$$
In the resonant case, i.e. when $\omega_0 = \lambda_n$ for some $n$, it grows quadratically in time.
\end{Rmk}

\medskip
The previous computations show that 
\begin{itemize}
\item the amplitudes of linear quantities converge to some finite values (by (\ref{amplitude}))
\item quadratic quantities grow linearly (see (\ref{energy}))
\item there is essentially no energy outside bands of width $O(1/t)$ around the forcing frequency $\omega_0$ (which corresponds to the scaling occuring in the change of variables)
\end{itemize}

This means  that the transfer of energy via quasi-resonances  is  quite  different from the classical resonance process.
The point is that the generalized eigenfunctions corresponding to the continuous part of the spectrum have infinite energy, it is therefore impossible 
that they emerge in finite time even though the forcing is concentrated on one frequency.

We will rephrase these properties in a more abstract setting (adapted to our study) in Section 4.

\subsection{Characterization of the continuous spectrum}

We therefore expect this quasi-resonant forcing to be  the right mechanism to explain the formation of the geometric patterns provided that 
\begin{itemize}
\item
one can prove that the wave operator has a continuous spectrum, 
\item   the generalized eigenfunctions exhibit some singularities on the geometric patterns.
\end{itemize}

Note that similar  spectral problems had been  investigated a long time ago by F. John in \cite{john} for the 2D wave equation with Dirichlet conditions, and by J. Ralston in \cite{ralston}  for inertial waves in some specific 2D geometries.
 We will  develop here a more systematic method: the key  idea  will be  to use microlocal analysis, i.e.
 the connection between spectral theory and classical dynamics.
\bigskip

Let us first recall that a differential operator of order $m$  can be written locally
$$D= \sum _{|j|\leq m}  a_j(x) \d_x^j$$ 
where $j$ is a multi-index of length at most $m$. Its symbol is the polynomial obtained by freezing the coefficients $a_j$
 and taking the Fourier transform 
$$ d (x,p) = \sum _{|j|\leq m}  a_j(x) (ip)^j\,.$$
This definition can be extended to pseudo-differential operators (involving typically inverse derivatives)
 by considering symbols which are more general functions
than polynomials. The principal symbol $h$   of a pseudo-differential operator
$H$  of order $m$  is  defined  in terms
of the action of $H$ on fast oscillating functions $ae^{i\tau S}$, where $a$ is smooth and $S'\ne 0 $ on the support of $a$, by 
\[  H (ae^{i\tau S})(x)=\tau^m
h(x, S'(x) ) a(x)e^{i\tau S (x)}+ O\left( \tau ^{m-1} \right) \]
The function $h$  is a  smooth homogeneous function of degree $m$  on 
$T^\star X \setminus 0$, which is well defined independently of the choice of local coordinates.

The key property is   that the
 principal symbol $h\equiv h(x,p) $ of a pseudo-differential operator $H $, even though it does not determine $H$ completely, contains almost 
 all the information, up to smoother error terms. More precisely, one defines classes of pseudo-differential operators
 according to the order $m$ of their symbol
$$ S_m =\{ \sigma = \sigma (x,p) \in C^\infty( T^*X \setminus X\times \{0\} )  \,/\, |\d^\alpha _x \d^\beta_p \sigma|
\leq C_{\alpha,\beta}(1+ |p|)^{m-|\beta|}\}\,.$$
Then, if $H$ is a pseudodifferential operator of principal symbol $h$ and order $m$, we can define another operator $Op(h)$
using the Weyl quantization 
$$
 {\rm Op}(h) u(x ) := \frac1{(2\pi )^{{\rm dim}(X)}} \int e^{i(x-x') \cdot p} h \left(\frac{x+x'}2,p\right) u(x') \: dx' dp
$$
and compare both operators: we find that
$H-{\rm Op} (h)$ is of order at most $m-1$.

Furthermore, one can prove that the symbol of the commutator $i[H_1,H_2]$ is the Lie bracket $\{h_1,h_2\}$, so  that 
$${\rm order} ([H_1,H_2]) \leq {\rm order} (H_1) + {\rm order}(H_2) - 1.$$
By iterating this kind of estimates, we then see that the symbolic calculus allows to obtain expansions up to any regularizing order.
For a nice introduction to this subject, we refer to \cite{alinhac-gerard, Du, XSR}.

\medskip
\noindent
$\bullet$
{\bf Typical operators with continuous spectrum} are multiplications by smooth functions $\nu = \nu(x)$
 (the spectrum is then the range of $\nu$). In this case, the classical dynamics is defined by the Hamiltonian $h(x,p) = \nu(x)$
$${dx \over dt} = {\d h \over \d p} = 0,\quad {dp \over dt} = -{\d h \over \d x} =-\nu'(x) = - \nu'(x_0)\,,$$
so that $\pm p$ is a Lyapunov function, going to infinity as $t\to \infty$.

In contrast,
 operators with discrete spectrum (called ``elliptic operators'') have eigenfunctions of finite energy,
 meaning that the classical dynamics has compact invariant sets.
 Therefore, in this case, we cannot have any escape function as no quantity will go to infinity.

This is of course not a proof but at least a strong indication that  the fact that  a pseudo-differential operator $H$
 has continuous spectrum is related to the existence of an escape  function $d$ 
  for the Hamiltonian dynamics associated to its principal symbol $h$
$${d \over dt} d(x(t), p(t)) \geq \gamma >0$$
denoting by $(x(t), p(t))$ the trajectories of the classical dynamics:
$${dx \over dt} = {\d h \over \d p} ,\quad {dp \over dt} = -{\d h \over \d x}\,.$$

\medskip
\noindent
$\bullet$
A nice mathematical tool to study the continuous spectrum of a skew-symmetric operator $iH$ (such as our wave operator)  is the {\bf conjugate operator method } that we will present in Section 5.
The crucial point in this theory is the existence of  a self-adjoint operator $D$ satisfying the commutator estimate
\begin{equation}
\label{commut-cond}
 [H,D] \geq \gamma I +K \hbox{ with } \gamma>0,  \hbox{ and $K$ a compact operator}\,,
 \end{equation} 
Note that the existence of a conjugate operator $D$ for the wave operator $H$ can be translated in terms of the classical dynamics associated to $H$: it is roughly equivalent to the existence of an escape function $\psi$  since the condition (\ref{commut-cond})  can be expressed in terms of the Lie bracket
\begin{equation}
\label{bracket-cond}
\{h, d\} \geq \gamma >0\,.
\end{equation}

Under this condition, the conjugate operator method shows how to  extend  the resolvent $(H-\lambda\pm i\eps)^{-1}$ which is defined for $\eps>0$ and $\eps <0$, up to $\eps =0$ 
$$ (H-\lambda\pm i0)^{-1} =\lim_{\eps \to 0^+} (H-\lambda\pm i\eps)^{-1}\,.$$
 As a consequence, on can prove that  the spectral decomposition of a smooth function $f$, i.e. the measure $\nu$ defined by
$$ f =\int d\nu(\lambda) , \qquad \chi(H) f = \int \chi(\lambda) d\nu(\lambda) $$
is a regular function of $\lambda$ given by
$$ \nu(\lambda) = {1\over 2i\pi} \left( (H-\lambda- i0)^{-1} f - (H-\lambda+ i0)^{-1} f\right) \,.$$
The proofs of this result are not very complicated, but it is not completely obvious to get a good intuition of this property.

\subsection{Classical dynamics for Hamiltonians of degree 0}

The last important ingredient we will use is the fact that the existence of an escape function
 is a generic property of Hamiltonians of degree $0$ in dimension $2$. 

\begin{defi}
 A smooth  homogeneous function of degree $1$, 
 $d:\Sigma_{\omega _0}\rightarrow \R $, is 
called an {\rm escape function} if  the Poisson bracket
$\{h,  d \} $, which is homogeneous of degree $0$,  is strictly positive
on the energy surface $\Sigma_{\omega _0}= h^{-1}(\omega_0)$.
\end{defi}

\medskip
\noindent
$\bullet$ Let us first have a look at the classical dynamics of internal waves in the trapezoidal geometry. The dynamical equations
$$\left\{\begin{aligned}
& \frac{dx_1}{ dt}  = \frac{p_3^2}{ |p|^3}, \quad &\frac{dp_1}{ dt} = 0,\\
 & \frac{dx_3}{ dt}  =- \frac{p_3p_1}{ |p|^3}, \quad  &\frac{dp_3}{ dt } = 0\,.
  \end{aligned}\right.
  $$
have to be supplemented with boundary conditions. To determine the reflection laws, we look at the solutions of the Boussinesq system in some half space delimited by a slope tilted of an angle $\alpha$ with respect to the horizontal
    $$ x_1 \sin\alpha + x_3 \cos\alpha = C\,.$$ 
    We seek these solutions in the form of an incident wave  propagating with an angle $\phi$ with respect to the horizontal (recall that the direction of propagation is orthogonal to the wavenumber) plus a reflected wave
$$\begin{aligned}
W=\lambda  U(p) + \mu V(p) \\
W'=\lambda ' U(p') + \mu 'V(p') 
\end{aligned}
$$
with
$$
\begin{aligned}
U(p)= \left( \begin{array}{c} 1 \\0\\0 \end{array} \right) \cos (\omega t + p\cdot x)
- \left( \begin{array}{c} 0 \\ -Np_3/(\rho' |p|) \\  Np_1/(\rho' |p|)\end{array} \right) \sin  (\omega t + p\cdot x)  \\
V(p)= \left( \begin{array}{c} 1 \\0 \\0\end{array} \right) \sin (\omega t + p\cdot x)
+ \left( \begin{array}{c} 0 \\ -Np_3/(\rho' |p|) \\  Np_1/(\rho' |p|)\end{array} \right)  \cos  (\omega t + p\cdot x) 
\end{aligned}
$$
   We then obtain necessary conditions on the wavenumber of the reflected wave
  $$ 
  \left\{\begin{aligned}
  &\frac{(p'_1)^2 }{ |(p')^2|} =  \frac{p_1^2 }{ |p|^2}  = \sin^2 \phi \hbox{ coming from the conservation of energy}\\
  & (p'_1- p_1) \cos \alpha - (p'_3 - p _3 ) \sin \alpha = 0 \hbox{ for the  phase  to be constant on  the slope}
   \end{aligned}\right.
   $$
   as well as some polarization conditions to determine $(\lambda',\mu')$ from $(\lambda,\mu)$.

   \bigskip
   We will  use the following reduction
 to represent the trajectories. Specular reflection on the horizontal and vertical boundaries is equivalent to 
free propagation in a domain which is extended by symmetry. Combining two successive reflections with respect to horizontal boundaries, we get a simple vertical  translation, which means that  we obtain a periodic structure with respect to the vertical variable. We then identify the points of the horizontal boundary in the extended domain.

\begin{figure} [h] 
\centering
\includegraphics[width=7cm]{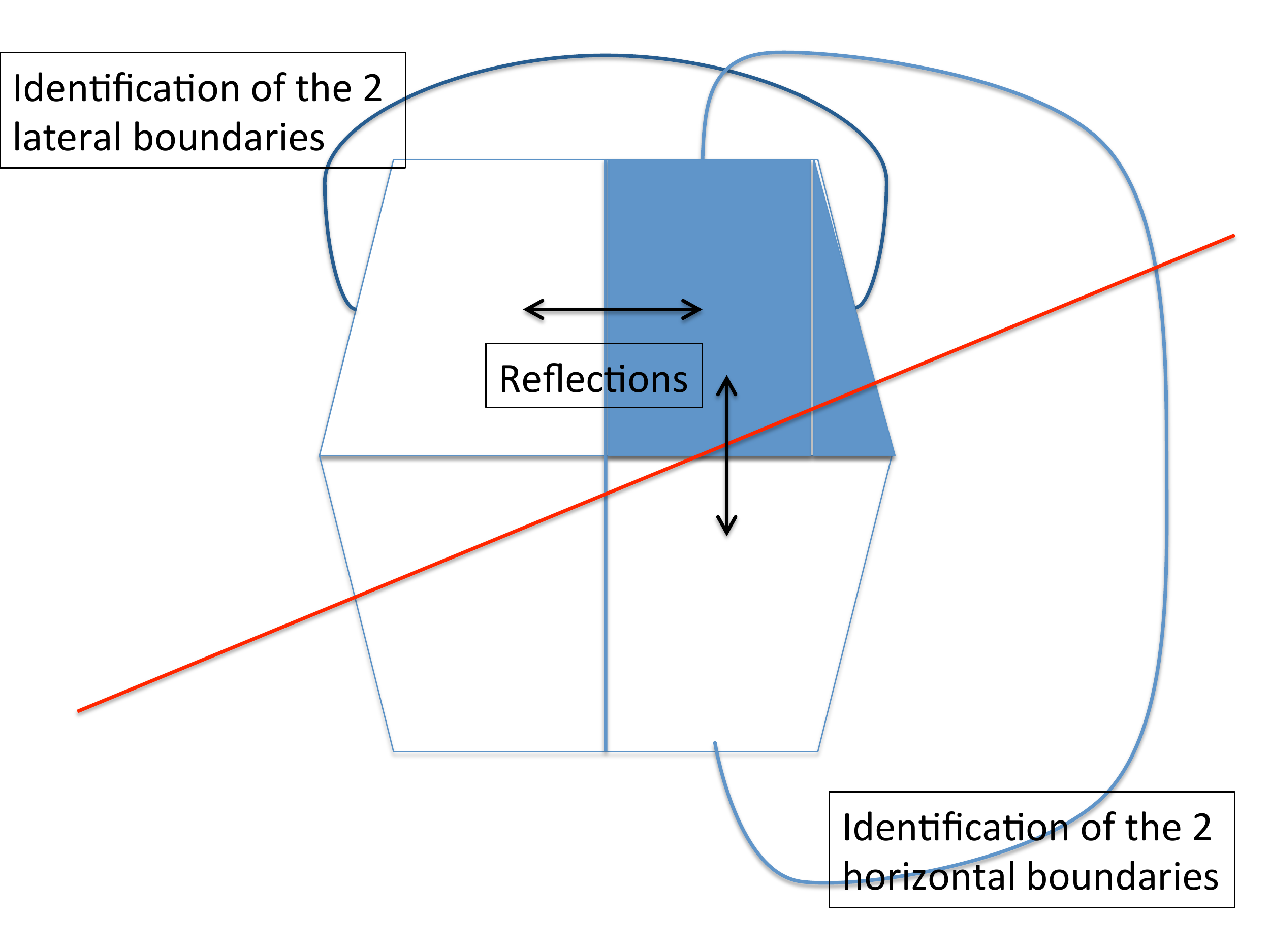} 
\caption{\small  
The image method leads to a singular dynamics  in a periodic  domain.}
\label{fig: rays}
\end{figure}

When a  trajectory exits the domain on the left (resp. on the right), it re-enters on the right (resp. on the left) at the symmetric point.
One can then identify the points of the lateral boundary, and get a  (non smooth) dynamics on a torus, along a fixed vector field.
We have actually four copies of this dynamics, corresponding to the four possible directions given by $p_1^2/|p|^2 = \sin^2 \phi$.

\medskip
Assume that the angle  $ \phi$  lies in the interior of a band of the bifurcation diagram, i.e.  that 
 there exists an open set $I$ containing $\phi$  on which  $\rho_{\alpha, \phi}\in \Q $ is constant
and the periodic points are hyperbolic.

 All trajectories of the semiclassical dynamics with energy $\omega_0 = |\sin \phi|$ then converge to some attractor, corresponding to some fixed point of the (iterated) Poincar\'e return map $\pi_{\alpha,\phi_0}$. 
The wavenumber is exponentially increasing since it is multiplied by  the same constant (depending only on $\alpha$ and $\phi$) at each iteration
$$p_3^{(n)} \sim C \left( \frac{\sin (\phi+\alpha)}{ \sin(\alpha-\phi) } \right)^n \,,$$
while the distance to the limit cycle is exponentially decreasing with $n$
$$| x_3^{(n)} - \bar x_3 | \leq  C \left( \frac{\sin (\alpha-\phi )}{ \sin(\phi+\alpha) } \right)^n \,,$$
 Since the group velocity is proportional to  $|p_3^{(n)}| ^{-1}$, the return time, i.e. the time $t_n$ needed to run through the approximate cycle (from $x_3^{(n)}$ to $x_3^{(n+1)}$)  is exponentially increasing with $n$
$$c \left( \frac{\sin (\phi+\alpha)}{ \sin(\alpha-\phi) } \right)^n \leq   |t_n| \leq  C\left( \frac{\sin (\phi+\alpha)}{ \sin(\alpha-\phi) } \right)^n\,.$$
In particular, the wavenumber is of the order of $O(t)$ and is therefore an escape function.

\begin{Rmk}  In this example, we have exhibited an escape function, in the sense that it converges to infinity as $t\to \infty$ (with an average growth which is linear), but this function is 
piecewise constant with jumps. This weak notion  of  escape function and the lack of regularity will be major obstacles to apply the general theory.
\end{Rmk} 

\medskip
\noindent
$\bullet$ In the general case, under a suitable regularity assumption on the Hamiltonian $h$, we will actually prove that there exists a change of variables (also called  normal form) such that $h-\omega_0$ can be locally represented by the following simple toy model on $\T_x \times \R_y $~:
$$ h_0(x,y, \xi,\eta)  = {\xi \over \eta} - \lambda y \,.$$
The classical dynamics admits two first integrals $h_0$ and $\xi=\xi_0$. It is given by
$$
{dx\over dt } = {1\over \eta}, \quad {dy \over dt} = -{\xi \over \eta^2}\qquad {d\eta \over dt } =  \lambda\,.$$
We then have, for $t\geq 0$,
$$ x = x_0 + \frac1\lambda \log \left(1 +{\lambda t \over \eta_0}\right) \hbox{ mod}(\Z), \qquad  y = {\xi_0\over \lambda (\eta_0+\lambda t)} , \qquad \eta = \eta_0 +\lambda t\,.$$

Let us denote by $\gamma$ the curve $y=0$ in $\T_x \times \R_y$. We see that the trajectory spirals around $\gamma$ infinitely many times with a speed going to 0.
Let us then look at the Poincar\'e map $\pi$ associated to the section $S= \{(x,y,\xi, \eta) \,/\, x=0, \xi= \lambda y \eta, \eta>0\}$ with the symplectic form given by $dy \wedge d\eta$:
$$\pi (y,\eta) = ( e^{-\lambda}  y ,e^{\lambda}  y)$$
and we have a geometric progression of momenta $\eta$, which is an escape function.

This model is therefore a smooth version of the dynamics associated to internal (or inertial) waves. We will show in Section 6 that, under rather general assumptions (stated in the next paragraph), one can always reduce  the study of (smooth) Hamiltonians of degree 0 to this simple model.


\section{Main results}

As explained in the previous paragraph, the mathematical setting we consider in this paper  reproduces the important features
 of the inertial and internal wave operators in domains with topography, but with  more regularity
in order that  techniques of pseudo-differential calculus can be used.

\bigskip
More precisely, we consider a general scalar equation of the form 
\begin{equation}
\label{forced}
\frac{1}{i} \d_t u + Hu = f e^{-i\omega_0 t}
 \end{equation}
 on a 2D torus  $X$  equipped with a smooth density $dx$,
where $fe^{-i\omega_0 t}$ is a periodic  smooth forcing,
 and $H$ is a bounded self-adjoint  operator on $L^2(X,dx)$ satisfying the following assumptions.
$$\vbox{ \hbox{\bf (M0)  $H$ is a pseudo-differential operator 
 with a smooth }\hbox{\bf   principal symbol $h : T^*X\setminus\{0\} \to \R$ positively homogeneous of degree  $0$}
 \hbox{\bf  
 and a vanishing  subprincipal symbol.} }$$

Let us recall for the sake of completeness that the  sub-principal symbol $h_{-1}$ of 
$H$ is the  smooth homogeneous function of degree  $-1$ on 
$T^\star X \setminus 0$ defined  by:
\[ \langle H (ae^{i\tau S})|ae^{i\tau S}\rangle_{L^2 (X,dx)}
= \int _X |a(x)|^2 \left(h(x, S'(x) )+\tau^{-1} h_{-1}(x,S'(x) )\right) dx + O\left( \tau ^{-2} \right) \]
Note that the function $h_{-1}$    depends on the measure $dx$.

The assumption that $h$ is homogeneous of degree 0  implies that any energy shell $\Sigma_{\omega}=h^{-1} (\omega)$ is conic:
$$ \forall (x,p) \in \Sigma_{\omega},  \forall \lambda \in \R^+, \quad (x,\lambda p) \in \Sigma_{\omega}\,.$$
We  will then denote by $Z$ the oriented manifold of dimension $2$ which is  the quotient of the conic energy shell 
$\Sigma _{\omega _0}$ by the positive homotheties
$$Z=\{ (x, e) \in X \times S^1 \,/\, h(x,e) = \omega_0\}\,.$$
We can think of $Z$ as the boundary at infinity of the energy shell.

\bigskip
For the sake of simplicity, we now  introduce  some   geometrical  assumptions on $\Sigma_{\omega_0}$ (and $Z$) in order to avoid singularities. These assumptions can actually be relaxed as shown  by the first author in the recent paper \cite{CdV}.
 $$\vbox{ \hbox{\bf (M1) The energy shell $\Sigma _{\omega_0}$ is nondegenerate and }\hbox{\bf 
 the canonical projection $\pi :Z \rightarrow X$ is a finite covering of degree $n$.}}$$
 
The non degeneracy assumption means  that $dh\neq 0$ on $\Sigma_{\omega_0}$, it  is a generic assumption on  the frequency $\omega_0 $
  thanks to Sard's theorem.
The second assumption means that,
for each $x\in X $, the pre-image of $x$ by the canonical projection $\pi$ has exactly $n$ elements, 
  and the directions in $\pi ^{-1}(x)$ are smoothly dependent on $x$.
Note that, in the case of internal waves in a trapezium, the number of admissible directions $n$ at each regular point $x$ is 4,
 but there are singularities.

By definition, the Hamiltonian vector field $X_h$ is tangent to $\Sigma _{\omega _0 }$ and homogeneous of degree $-1$. This implies that the
oriented direction of $X_h$ induces a field  of oriented directions on $Z$.
We therefore say that  $Z$ is equipped with a 1D-foliation, denoted by ${\cal F}$.
The leaves of this foliation correspond to  the orbits of the Hamiltonian dynamics in $Z$ (which are defined modulo  a positive change of time).
The compact leaves correspond to  periodic orbits.

Note that the foliation is on $Z$. The projection on $X$ is a multiple foliation, meaning
that at each point of $X$ there are $n$ distinct oriented directions.

\begin{Lem}
 Under the assumption (M1), the foliation  ${\cal F}$ is non singular. 
 \end{Lem}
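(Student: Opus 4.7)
The plan is to verify that the oriented direction induced by $X_h$ on the quotient $Z=\Sigma_{\omega_0}/\mathbb{R}_+^*$ is nowhere zero, which is precisely what non-singularity of the 1D foliation $\cF$ means. The two ingredients of (M1) play complementary roles: non-degeneracy $dh\neq 0$ is used to rule out $X_h=0$, and the covering/local-diffeomorphism property of $\pi:Z\to X$ is used to rule out that $X_h$ becomes tangent to the radial direction.

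First I would recall that since $h$ is homogeneous of degree $0$ in $p$, the Hamiltonian field $X_h=(\partial_p h)\partial_x-(\partial_x h)\partial_p$ is homogeneous of degree $-1$; equivalently, with $E=p\cdot\partial_p$ the Euler field generating the $\mathbb{R}_+^*$-action, one has $[E,X_h]=X_h$, so the flow of $E$ rescales $X_h$ by a positive factor. Therefore the \emph{oriented line field} spanned by $X_h$ is invariant under positive homotheties and descends to $Z$. The projection $\Sigma_{\omega_0}\to Z$ has vertical direction $\Span(E)$, so $\cF$ is non-singular at $(x,[p])\in Z$ if and only if $X_h(x,p)\notin\Span(E)$.

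Since $E$ has zero $\partial_x$-component while the $\partial_x$-component of $X_h$ is $\partial_p h$, the condition $X_h\notin\Span(E)$ is equivalent to $\partial_p h\neq 0$. To prove $\partial_p h\neq 0$ on $\Sigma_{\omega_0}$ I would use the local diffeomorphism property of $\pi$. Locally write $Z=\{(x,e)\in X\times S^1_x:\ h(x,e)=\omega_0\}$, and compute the kernel of $d\pi_{(x_0,e_0)}$: vectors $(0,\dot e)$ with $\dot e\perp e_0$ and $\partial_p h(x_0,e_0)\cdot\dot e=0$. By Euler's identity $\partial_p h\cdot e_0=0$, so the covector $\dot e$ (which lies in the $1$D line orthogonal to $e_0$) satisfies the linearized constraint automatically when $\partial_p h=0$, giving a nontrivial kernel and contradicting the hypothesis that $\pi$ is a local diffeomorphism. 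Hence $\partial_p h\neq 0$ on $Z$, and combined with the non-vanishing of $X_h$ itself (which follows from $dh\neq 0$ on $\Sigma_{\omega_0}$), the projected vector field is nowhere zero on $Z$, so $\cF$ is non-singular.

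The only mildly delicate point is step three, which requires bookkeeping between tangent vectors $\partial_p h\in T_{x_0}X$ and covectors $e_0,\dot e\in T_{x_0}^*X$, and the correct interpretation of Euler's identity in this duality. Everything else is routine once the right reformulation of the covering hypothesis is in place.
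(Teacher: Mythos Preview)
Your argument is correct and follows essentially the same route as the paper: reduce ``$\cF$ singular'' to $\partial_p h=0$, and then show this contradicts $\pi:Z\to X$ being a local diffeomorphism. The paper phrases the last step as non-surjectivity of the projection $T\Sigma_{\omega_0}\to TX$ (the tangent space is cut out by $\partial_{x_1}h\,dx_1+\partial_{x_2}h\,dx_2=0$ when $\partial_p h=0$), while you phrase it as non-injectivity of $d\pi:TZ\to TX$; since $\dim Z=\dim X=2$ these are equivalent. One cosmetic slip: the bracket is $[E,X_h]=-X_h$ rather than $+X_h$, but this does not affect your conclusion that the oriented line field descends to $Z$.
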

 
 \begin{proof}
Assume that the foliation  ${\cal F}$ is  singular.

Since $\Sigma _{\omega_0}$ is nondegenerate, the foliation can be singular only at the points where $X_h $ is parallel
to the cone direction. The projection of the foliation ${\cal F}$ on $X$ is generated by the vectors
$(\partial_{p_1} h,\partial _{p_2} h )$. Therefore the foliation ${\cal F}$ is singular if and only if  this vector vanishes. 

But then
the tangent space to $\Sigma _{\omega _0 }$, which  is defined by the non trivial equation 
$(\partial _{x_1} h) \, dx_1 + (\partial _{x_2} h )\, dx_2 =0 $, does not project in a surjective way on the tangent space of $X$. This contradicts our
assumption that the canonical projection is a finite covering of degree $n$.

Note that we also retrieve the property  that the wave number $p $ is orthogonal (for the duality) to the direction of the projection of
$X_h$ on $X$. This is due to the Euler relation
$p_1 \partial _{p_1}  h +p_2 \partial _{p_2}  h=0 $. Another way to interpret this last condition is to say  that the cones generated by the leaves of $ {\cal F}$ are Lagrangian. 
\end{proof}

\bigskip
 The last assumption is on the dynamics of ${\cal F}$:
$$\hbox{\bf (M2) The foliation ${\cal F}$ is Morse-Smale},$$
which is a generic condition.
Let us recall what  it means: 

\begin{itemize}
\item there is a finite number of compact leaves (diffeomorphic to circles), also called cycles in the sequel. 
\item  Each compact leaf is hyperbolic
(the corresponding linear Poincar\'e map - which is defined intrinsically -  is  strictly expanding or contracting). 
\item And all other leaves are accumulating only 
along two of the previous 
closed leaves at $\pm \infty $. 
\end{itemize}


Note that conditions $(M1)$ and $(M2)$ are stable under small perturbations, therefore are still verified for
 energy levels close to $\omega_0$.

\begin{Rmk}
The regularity assumption is encoded in $(M0)$. In particular, at this stage,
 even though we can capture the effect of the zero flux condition for internal waves in a model without boundary
 (see Fig. \ref{fig: rays}), this model is not smooth enough to enter in this class of operators.

- If the boundary is a polygon, we have seen that the Poincar\'e map is only piecewise affine.
 This difficulty could be removed by considering a smooth domain, which leads to a foliation with singular points
 (corresponding to critical angles). But our results can actually be extended to singular foliations (see \cite{CdV}).

- A more serious difficulty comes from the fact that the wavenumber $p$ jumps at each return time,
 which means that there is no smooth normal form which conjugates the geometric object given by the foliation
 and the Hamiltonian dynamics.

To tackle the original problem, we would probably need to introduce another covering to take into account this additional complexity.
\end{Rmk}

\bigskip
In this abstract setting, we can now formulate our result describing the long time behavior of the forced system.
\begin{Thm}\label{main-thm}
Let $H$ be a pseudo-differential operator such that assumptions $(M0)(M1)(M2)$ are satisfied for any $\omega_0$ in some  (open) interval $I$.

Then $H$ has at most a finite number of eigenvalues in $I$, and for any $\omega_0\in I\setminus \sigma_{pp}(H)$ and any $f\in C^\infty(X)$, the solution to the forced  equation
$$-i \d_t u + Hu = f e^{-i\omega_0 t}, \qquad u_{|t=0} = 0
$$
  can be decomposed in a unique way as
\[ u(t)e^{i\omega_0 t} = u_\infty+ b(t)+\epsilon (t) \]
where 
\begin{itemize}
\item  $u_\infty= (H-\omega _0 -i0)^{-1} f $ belongs to the Sobolev spaces $ \cH^{-1/2-0}$ and  is not in $L^2$ except if it vanishes;
\item 
 $b(t)$ is a bounded  function  with values in $L^2$ whose time Fourier transform vanishes near $0$;
\item   $\epsilon (t) $ tends  to $0 $ in $\cH^{-1/2-0}$.
\end{itemize}
The   wavefront  set of the limiting distribution $WF(u_\infty)$ is 
contained in the  cones generated by the  stable cycles of ${\cal F}$.

Furthermore, the energy $\| u(t) \| ^2 _{L^2 (X,dx)}$ grows linearly except if $u_\infty $ vanishes. 

\end{Thm}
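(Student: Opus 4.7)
The plan is to combine the conjugate operator (Mourre) method sketched in Section 5 with the Morse--Smale structure of the foliation $\cF$, and then to read off the long-time behaviour from the spectral representation of the Duhamel formula. First, using the hyperbolic normal form $h_0=\xi/\eta-\lambda y$ established near each cycle, I construct a smooth escape function $d$, homogeneous of degree $1$, with $\{h,d\}\geq\gamma>0$ on a conic neighbourhood of $\Sigma_{\omega_0}$: each hyperbolic cycle carries a local escape function coming from the expanding variable $\eta$ of the normal form, and because every non-compact leaf is asymptotic to such cycles at $\pm\infty$ thanks to (M2), these local pieces can be patched with a partition of unity subordinate to the basins of the attracting cycles. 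Its Weyl quantization $D$ then satisfies a Mourre estimate $E_I(H)\,i[H,D]\,E_I(H)\geq\gamma E_I(H)+K$ with $K$ compact.

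Once the Mourre estimate is available, the standard conjugate operator machinery yields three consequences on $I$: finiteness of $\sigma_{pp}(H)\cap I$ (with finite multiplicities), absence of singular continuous spectrum, and a limiting absorption principle, namely that $(H-\lambda\pm i\eps)^{-1}$ has a limit in $B(\cH^s,\cH^{-s})$ for every $s>1/2$, Hölder continuous in $\lambda$ on compact subsets of $I\setminus\sigma_{pp}(H)$. In particular $u_\infty=(H-\omega_0-i0)^{-1}f$ is a well-defined element of $\cH^{-1/2-0}$. To obtain the decomposition, I set $v(t)=u(t)e^{i\omega_0 t}$ and use Duhamel to write
\[ v(t)=i\int_0^t e^{-is(H-\omega_0)}f\,ds=\frac{1-e^{-it(H-\omega_0)}}{H-\omega_0}f, \]
interpreted via the spectral theorem. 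Splitting $f$ via a smooth cutoff $\chi$ of $H-\omega_0$ around $0$, the high-spectrum piece contributes the bounded $L^2$-valued term $b(t)$, whose time Fourier transform vanishes near $0$ because its spectral support avoids $\omega_0$. For the low-spectrum piece I use the identity $v(t)=u_\infty-e^{-it(H-\omega_0)}u_\infty$ modulo a term that tends to $0$ in $\cH^{-1/2-0}$ by the Riemann--Lebesgue lemma applied to the absolutely continuous spectral measure $d\mu_f$ (LAP ensures it has a continuous density there); since $\omega_0\notin\sigma_{pp}(H)$ no eigenfunction contribution survives, and the remaining time-dependent part is absorbed either into $b(t)$ or into $\eps(t)$.

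The hardest step is the wavefront analysis of $u_\infty$. Elliptic regularity gives $WF(u_\infty)\subset\Sigma_{\omega_0}$. Propagation of singularities along $X_h$, combined with the $+i0$ regularization that selects the forward direction of the flow, shows that any singular point is pushed along an $\cF$-orbit and, by Morse--Smale, must accumulate on a stable cycle; smoothness of $f$ forbids the creation of singularities at finite momentum, so $WF(u_\infty)$ lies in the conic lift of the union of stable cycles. This is the main obstacle, because the degree-$0$ homogeneity makes the situation atypical for Hörmander's theorem: the Hamiltonian flow is slow in $x$ but drives $|p|$ geometrically to infinity, so one really has to work on the compactified phase space $Z$ and exploit the expanding eigendirection of each hyperbolic cycle (read off from the normal form) to verify that microlocal mass is genuinely pushed out to $|p|=\infty$ along stable, and not unstable, cycles. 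The $L^2$ dichotomy for $u_\infty$ then follows: a non-trivial wavefront concentrated at infinity in $p$ prevents $u_\infty$ from lying in $L^2$.

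Finally, Parseval and the spectral theorem give $\|v(t)\|^2=\int|\varphi_t(\lambda-\omega_0)|^2 d\mu_f(\lambda)$ with $\varphi_t(\mu)=(1-e^{-it\mu})/\mu$. Rescaling $\mu=2\sigma/t$ and using $\int\sin^2\sigma/\sigma^2 d\sigma=\pi$ produces, together with the continuity of the spectral density $\rho_f$ at $\omega_0$ provided by LAP, the asymptotics $\|v(t)\|^2\sim 2\pi\rho_f(\omega_0)\,t$. By Stone's formula $\pi\rho_f(\omega_0)=\operatorname{Im}\la u_\infty,f\ra$, so the growth is linear whenever $\rho_f(\omega_0)>0$, and the wavefront statement combined with unique continuation along $\cF$-orbits shows that $\rho_f(\omega_0)=0$ forces $u_\infty=0$, completing the claimed dichotomy.
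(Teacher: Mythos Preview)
Your overall architecture --- build an escape function from the Morse--Smale normal forms, quantize it to get a Mourre estimate, deduce the limiting absorption principle, and read off the decomposition of $u(t)e^{i\omega_0 t}$ from the spectral integral --- matches the paper's. The genuine gap is in the wavefront analysis of $u_\infty$ and the $L^2$ dichotomy, which you rightly flag as the hardest step but do not actually prove. Your proposed mechanism, ``propagation of singularities together with the $+i0$ regularization selecting the forward direction of the flow,'' is a heuristic: H\"ormander's theorem only tells you that $WF(u_\infty)\subset\Sigma_{\omega_0}$ is invariant under the bicharacteristic flow, and nothing in that statement distinguishes stable from unstable cycles. Saying that one ``has to work on the compactified phase space $Z$ and exploit the expanding eigendirection'' names the difficulty without resolving it.

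The paper closes this gap by two concrete ingredients you are missing. First, a refinement of Mourre theory (Theorem~\ref{mourre-thm2}) gives the operator bound $\sup_{\lambda\in I}\|P_D^-(H-\lambda-i0)^{-1}|D+i|^{-s}\|<\infty$; since in the local normal form the escape function $d_\gamma=\lambda\eta\phi(y)$ is \emph{negative} near an unstable cycle ($\lambda<0$ there), any microlocal cutoff $Q$ supported near such a cycle factors as $Q=QP_D^-$ modulo smoothing, and hence $Qu_\infty\in L^2$ (Proposition~\ref{uinfty-prop}). Second, the paper constructs a pseudo-differential normal form $A^\star HA\equiv H_0$ (Proposition~\ref{pdonormal-prop}) and solves $H_0v\in C^\infty$ explicitly: every solution is, modulo smooth terms, a sum of $(y+i0)^{-1+ik/\lambda}e^{ikx}$, none of which lies in $L^2_{\rm loc}$ (Lemma~\ref{growth-lem}, Proposition~\ref{cor:solutions}). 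Thus microlocal $L^2$-membership near a cycle forces smoothness there; only then does propagation of singularities carry smoothness from the unstable cycles to their full basins of repulsion, confining $WF(u_\infty)$ to the cones over the stable cycles. The same explicit description yields the $L^2$ dichotomy: your assertion that ``a non-trivial wavefront concentrated at infinity in $p$ prevents $u_\infty$ from lying in $L^2$'' is false in general (plenty of $L^2$ functions have nonempty wavefront set); what is true is that the specific singularity $(y+i0)^{-1+ik/\lambda}$ is never $L^2_{\rm loc}$, so if $u_\infty\in L^2$ it must be smooth near every stable cycle as well, hence smooth everywhere.
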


 Roughly speaking, the wave
 front set $WF(u_\infty)$ characterizes the singularities of the generalized function $u_\infty$, not only in space, but
 also with respect to its Fourier transform at each point. In more familiar terms, $WF(u_\infty)$ tells not only where the
 function $u_\infty$ is singular, but also how or why it is singular, by
 being more exact about the direction in which the singularity occurs. 

More precisely, if $u$ is a Schwartz distribution on $\R^d$,   $(\bar x, \bar p )\notin WF(u)$ means that  there exists a test function $a$ with $a(\bar x)\ne 0$ so that the Fourier transform of
$au$ is fast decaying in $p$ in some conic neighborhood of $\bar p $, or in other words that there exists a pseudo-differential  operator $A$ elliptic 
at the point $(\bar x, \bar p)$ so that $Au$ is smooth. This definition is completely intrinsic, and can be extended for a  smooth manifold $X$. We refer to \cite{XSR, bony} for a simple introduction to the wavefront set.

We will give in Theorem \ref{theo:OIF} a much more precise description of $u_\infty $ as a Lagrangian state (or Fourier integral distribution)
 associated to the
previous conic Lagrangian manifolds.

Note that Dyatlov and Zworski have obtained recently \cite{DZ} an alternative proof of a slightly weaker result, based only on microlocal techniques and radial estimates.

\section{Quasi-resonances and long time behaviour}\label{longtime-sec}

We first show how the  (local)  spectral representation of $H$ can be used to describe the long time behaviour
 of the solution  to the forced equation. 
 The solution $u$ of (\ref{forced}) is given  by
\[ u(t) = \frac{e^{-it\omega_0 }-e^{-it{H}}}{H-\omega_0 }f \]
Assume that $H$ has some continuous spectrum around $\omega_0$,  and that we know the spectral decomposition of $f$ with respect to  $H-\omega_0$:
$$ \chi (H-\omega_0) f = \int _\R \chi( s) d\nu(s)\,.$$
We thus have that
$$ u(t) =e^{-it\omega_0} \int  \frac{1-e^{-its}}{s } d\nu (s) \,.$$

\bigskip
We will then use  a  general result on  oscillating integrals, extending the computations of Section 2.1:

\begin{lem}\label{lem-int}
Let $\cB, \cB_0 \subset \cB$ be two Banach spaces. 
Let  $\nu =\nu _{\rm ac} + \nu_{\rm sing } $ be a compactly supported Radon measure on $\R$ with values in $\cB$ such that
\begin{itemize}
\item  the absolutely continuous part   $\nu _{\rm ac} $ has a density  $m$  which is H\"older continuous  $C ^{0,\mu} $ ($\mu >0$) near $0$;
\item  the singular part $\nu_{\rm sing }$ has values in $\cB_0$ and is supported outside from 0.
\end{itemize}
Then the oscillating  integral 
$ I(t)= \langle  (1- e^{-its}) s^{-1}| d\nu (s) \rangle $ can be decomposed in a unique way as
\[ I(t)= I_\infty   + b(t)+\epsilon (t)  \]
where
$ I_\infty =\langle  (s-i0)^{-1}| d\nu (s) \rangle \in \cB$, 
   $\epsilon (t) \rightarrow 0 $ in $\cB$ as $t\rightarrow \infty $, 
and $b $ is bounded with values in $\cB_0$  and inverse Fourier transform
supported on   $\supp( \nu_{\rm sing }) $.
\end{lem}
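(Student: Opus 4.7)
The plan is to decompose $I(t)$ by isolating the pieces of $\nu$ on which the Cauchy kernel $1/s$ is singular vs.\ regular, using the distributional identity
$$\frac{1-e^{-its}}{s}=\frac{1-e^{-its}}{s-i0}$$
(which holds because the vanishing of the numerator at $s=0$ absorbs the $i\pi\delta(s)$ term). This immediately gives the splitting
$$I(t)=\int_{\R}\frac{d\nu(s)}{s-i0}-\int_{\R}\frac{e^{-its}}{s-i0}\,d\nu(s)=I_\infty-J(t),$$
so it remains to analyze $J(t)$ along $\nu=\nu_{\rm ac}+\nu_{\rm sing}$.

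The singular piece is easy: since $\supp(\nu_{\rm sing})$ avoids $0$, one has $J_{\rm sing}(t)=\int e^{-its}\,s^{-1}d\nu_{\rm sing}(s)$, which is nothing but the Fourier transform of the finite $\cB_0$-valued measure $s^{-1}d\nu_{\rm sing}(s)$. It is therefore bounded (even continuous) with values in $\cB_0$, and its inverse Fourier transform is supported on $\supp(\nu_{\rm sing})$. One sets $b(t):=-J_{\rm sing}(t)$.

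The absolutely continuous piece is the analytical heart. Using $(s-i0)^{-1}=\mathrm{pv}(1/s)+i\pi\delta(s)$ together with continuity of $m$ at $0$ (supplied by the Hölder hypothesis), write
$$J_{\rm ac}(t)=\mathrm{pv}\!\int e^{-its}\frac{m(s)}{s}\,ds+i\pi\, m(0).$$
Choose a symmetric smooth cutoff $\chi$, equal to $1$ near $0$ and supported where the Hölder estimate $\|m(s)-m(0)\|_\cB\le C|s|^\mu$ holds, and decompose $m(s)=m(0)\chi(s)+r(s)$. The remainder $r(s)/s$ lies in $L^1(\R;\cB)$ (Hölder bound near $0$, boundedness of $1/s$ on the support of $1-\chi$ intersected with $\supp m$), so the vector-valued Riemann--Lebesgue lemma (standard via approximation by simple functions) yields $\int e^{-its}r(s)/s\,ds\to 0$ in $\cB$. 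For the remaining $m(0)\,\mathrm{pv}\!\int e^{-its}\chi(s)/s\,ds$, split $e^{-its}=\cos(ts)-i\sin(ts)$: the cosine integral vanishes by oddness of $\chi(s)/s$, while the sine integral is the classical Dirichlet integral, converging to $\pi\chi(0)=\pi$. Thus $J_{\rm ac}(t)\to -i\pi m(0)+i\pi m(0)=0$, and we set $\epsilon(t):=-J_{\rm ac}(t)$.

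For uniqueness, subtract two candidate decompositions to get $\Delta+\delta b(t)+\delta\epsilon(t)\equiv 0$ where $\Delta\in\cB$ is constant, $\delta b$ is bounded in $\cB_0$ with Fourier support in the compact set $\supp(\nu_{\rm sing})\not\ni 0$, and $\delta\epsilon\to 0$ in $\cB$. Time-averaging $\tfrac1T\!\int_0^T$ annihilates $\delta b$ (all its Fourier frequencies are nonzero) and $\delta\epsilon$, forcing $\Delta=0$; then $\delta b=-\delta\epsilon\to 0$, and a bounded function whose Fourier spectrum is a compact set not containing $0$ and which tends to $0$ at infinity must vanish identically (Wiener-type fact for almost-periodic functions). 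The main technical obstacle I expect is the careful bookkeeping of the principal-value/$\delta$ contributions at $s=0$ in a Banach-space-valued setting; once the vector-valued Riemann--Lebesgue and Dirichlet integrals are in place, the rest is algebraic.
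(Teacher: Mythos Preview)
Your proof is correct and uses essentially the same ingredients as the paper's: separating the singular part (bounded, Fourier-supported away from $0$), handling the absolutely continuous part by subtracting $m(0)$ to reduce to an $L^1$ integrand (Riemann--Lebesgue), and computing the residual $m(0)$ contribution via the Dirichlet integral, with Sokhotski--Plemelj tying the constant to $(s-i0)^{-1}$. The only organizational difference is that you extract $I_\infty$ at the outset via the identity $(1-e^{-its})/s=(1-e^{-its})/(s-i0)$ and then analyze the remainder $J(t)$, whereas the paper splits $I(t)$ directly into regions $|s|<a$ and $|s|\geq a$; you also supply the uniqueness argument, which the paper's proof omits.
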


\begin{proof}
The proof  is actually a simple calculation  on distributions.
Let $a>0$ be such that 
$$ [-a, a] \cap \supp(\nu_{sing}) = \emptyset\,.$$
Since  $\nu$ is compactly supported, there is no issue of convergence at infinity.
We then split the integral into two parts:
$$\begin{aligned}
I_1(t) =\int _{|s|< a}  \frac{1- e^{-its}}{s} d\nu_{ac} + \int _{|s|\geq a}Ê \frac{1}{s} d\nu_{ac}   \\
 I_2(t) = -  \int _{|s|\geq a}Ê \frac{e^{-its}}{s} d\nu_{ac}+   \int _{|s|\geq a}Ê \frac{1- e^{-its}}{s} d\nu_{sing} \,.
 \end{aligned}
 $$ 
 We check that the second term in $I_2(t)$ satisfies the properties required for $b(t)$, while the first term in $I_2(t)$ tends to $0$  thanks to the Riemann-Lebesgue theorem.

We have then to study $I_1(t)$. As $m$ is of class $C^{0,\mu} $ near 0, we have that  $m_1(s)=( m(s)- m(0)) /s $ belongs to $L^1$, so that 
\[ 
\begin{aligned}
I_1 (t)=&\int_{|s|< a} \frac{1-\cos (ts)}{s} (m(0)+sm_1(s) ) ds \\
&+ i\int_{|s|< a} \frac{\sin (ts)}{s}  (m(0)+sm_1(s) ) ds+ \int _{|s|\geq a}Ê \frac{1}{s} d\nu_{ac}
\end{aligned} \]
The second integral converges to $i\pi m(0)$. 
Using the parity of $\cos $ and  the Riemann-Lebesgue theorem, we obtain that the first integral tends to  
$$\int_{|s|< a}  \frac{m(s)-m(0)}{s}  ds={\rm p.v.}\left( \int _{|s|< a} \frac{m(s)}{s}  ds \right)\,.$$
Adding the third term, we can remove the truncation.

In the limit, we get finally
$${\rm p.v.}\left( \int  \frac{m(s)}{s}  ds\right)+i\pi m(0)=\langle (s-i0)^{-1}|m \rangle $$
 by the Sokhotski-Plemelj theorem. 
\end{proof}

\section{The conjugate operator method}\label{mourre-sec}

In order to describe the response of the system to some monochromatic forcing at frequency $\omega_0$, we  therefore need  to get  the spectral structure of $H$ close to this frequency, assuming that it is non degenerate in the sense of assumptions (M1)(M2). We will first show how to define the spectral density, assuming the existence of a conjugate operator.

\subsection{A short review on Mourre theory}

Let $H$ be a self-adjoint operator on some Hilbert space, say $L^2$. 
Here we will further assume that $H$ is bounded, as it is the case in the application we have in mind.

\begin{Def}
Let $D$ be a self-adjoint (unbounded) operator. For any operator $A$, we denote by $i[A,D]$ the closure of the form $i(AD-DA)$ defined on $\cD(A)\cap \cD(D)$.
\begin{itemize}
\item We  say that $H$ is $n-$smooth with respect  to $D$ if the iterated brackets 
 ${B}_1:= i[ {H},D] $ and  ${B}_k:=[ {B}_{k-1},D]$ are bounded up to $k=n$.
\item We define also the Sobolev scale $\cH^s$ ($s\in \R$) associated to $D$ by 
$$\cH^s= \{ u \in L^2 \,/\, (1+D^2)^{s/2} u \in L^2\} $$ for $s\geq 0 $, 
and as the dual of $\cH^{-s}$ for $s<0$. 
\end{itemize}
\end{Def}

The main result of the conjugate operator theory (see \cite{mourre, JMP}) is the following theorem:

\begin{Thm}[Mourre]  \label{mourre-thm}
Let us assume that  $H$ is $n-$smooth with respect to $D$ with $n\geq 2$, and that we have the following commutator estimate:
 for $\chi, \bar \chi   \in C_0^\infty (\R,\R^+ )$ with $\chi \bar \chi  =\bar \chi $,  
\begin{equation}  
\label{commutator-hyp}
\chi(H)  {B}_1 \chi(H) \geq  \alpha \bar \chi  (H)  +K 
\quad \hbox{ for some compact operator $K$}.
\end{equation}
Then, for any closed interval $I \subset \supp (\bar \chi)$
\begin{itemize}
\item[(i)] $H$  has a finite set $\sigma _p(H)$  
of eigenvalues of finite multiplicity  in $I$;
\item[(ii)]   the resolvent
$({H}-z)^{-1} $ defined for $\Im (z) \ne 0$ admits boundary values at the points  $\omega \in I \setminus \sigma_p (H) $
in the space $O_s:= L(\cH^s , \cH^{-s})$ for $s>1/2 $.
\item[(iii)]  the boundary values $({H}-\omega \pm i0 )^{-1} $
are H\"older continuous $C^{0,\mu} ( I \setminus \sigma_p (H), O_s)$ with $s>1/2$ and $\mu = (2s-1)/2s$. 
\item[(iv)] the boundary values $({H}-\omega \pm i0 )^{-1} $
admits continuous derivatives of order $n$ in the spaces $O_s$ with $s>n- 1/2$. 
\end{itemize} 
\end{Thm}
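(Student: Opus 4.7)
The plan is to follow the classical Mourre strategy, deploying the positive commutator estimate (\ref{commutator-hyp}) in four interlocking steps.

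For (i), I would argue via the Virial theorem. If $H\psi=\omega\psi$ with $\omega\in I$ and $\|\psi\|=1$, then $\psi\in\cD(D)$ as a consequence of the $2$-smoothness of $H$ relative to $D$ (obtained by regularising with the unitary group $e^{itD}$), and formally $\langle\psi,i[H,D]\psi\rangle=0$. Choosing $\chi\equiv 1$ on a neighbourhood of $\omega$ in (\ref{commutator-hyp}) yields $0\geq\alpha+\langle\psi,K\psi\rangle$. Given any orthonormal family $\{\psi_j\}$ of such eigenvectors with energies in $I$, compactness of $K$ forces $\langle\psi_j,K\psi_j\rangle\to 0$, contradicting the lower bound $-\alpha$; hence $\sigma_p(H)\cap I$ is finite with finite multiplicities.

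For (ii), the limiting absorption principle, I would adapt the Putnam--Mourre differential inequality method. Once the finitely many eigenvalues are removed from $I$, one shrinks $\chi$ so that $K$ can be absorbed, producing a strict Mourre estimate $\chi(H)B_1\chi(H)\geq\alpha_0\bar\chi(H)$ on $I$. For $z=\omega+i\eta$, $\omega\in I$, $\eta>0$, and a small parameter $\delta>0$, introduce the regularised conjugate resolvent
\[G_\delta(z)=\langle D\rangle^{-s}\bar\chi(H)(H-z-i\delta B_1)^{-1}\bar\chi(H)\langle D\rangle^{-s}.\]
A direct computation, using the resolvent identity and inserting $\bar\chi(H)B_1\bar\chi(H)$, produces a differential inequality of the form $\|\partial_\delta G_\delta\|\leq C(1+\|G_\delta\|^2)$, where the positivity of $B_1$ on the spectral support of $\bar\chi(H)$ supplies the sign and the weight-shifting lemmas (commuting $\langle D\rangle^{-s}$ through $\bar\chi(H)$ at cost $O(\|B_1\|)$) absorb the off-diagonal terms. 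Integrating this bound uniformly in $\eta$ and passing to the limit $\eta\to 0^+$ then produces the boundary values in $O_s$ for $s>1/2$.

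For (iii), I would exploit the resolvent identity
\[(H-\omega_1\pm i0)^{-1}-(H-\omega_2\pm i0)^{-1}=(\omega_1-\omega_2)(H-\omega_1\pm i0)^{-1}(H-\omega_2\pm i0)^{-1}\]
and interpolate the obvious bound of size $2\sup\|R\|$ in $O_s$ against the linear-in-$(\omega_1-\omega_2)$ bound in $O_{s+1/2}$ obtained by inserting weights between the two resolvents; this yields the advertised H\"older exponent $\mu=(2s-1)/2s$. Statement (iv) follows by iterating this scheme: differentiating $(H-z)^{-1}$ of order $k$ yields $k!(H-z)^{-(k+1)}$, which is recast via successive commutators with $D$ as a sum of products of $k+1$ resolvents interleaved with the iterated brackets $B_1,\ldots,B_k$. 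The $n$-smoothness hypothesis guarantees the boundedness of these iterated brackets, and each insertion consumes one power of $\langle D\rangle$, giving continuity of the $n$-th derivative in $O_s$ for $s>n-1/2$.

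The principal technical obstacle lies in step (ii): the differential inequality argument requires $2$-smoothness precisely to make $[H,D]$ a bounded operator whose composition with resolvents is meaningful, and one must simultaneously absorb the compact term $K$ (which is what makes step (i) a prerequisite rather than a consequence) and maintain uniformity of the bound as $\eta\to 0^+$. Once that boundary limit is secured, (iii) and (iv) are essentially a matter of iterated commutator bookkeeping combined with real interpolation on the Sobolev scale $\cH^s$, a setting well-behaved under the functional calculi of both $H$ and $D$.
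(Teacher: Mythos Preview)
Your outline follows the same architecture as the paper's proof (virial argument for (i), a differential-inequality scheme in an auxiliary parameter for (ii), then bootstrapping for (iii) and (iv)), and step (i) is essentially identical.  However, there is a genuine gap in your treatment of (ii).

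The differential inequality you write, $\|\partial_\delta G_\delta\|\leq C(1+\|G_\delta\|^2)$, is quadratic and therefore does not by itself yield a uniform bound as $\delta\to 0$: solutions of $y'\leq C(1+y^2)$ can blow up in finite time.  The paper avoids this by using not the fixed weight $\langle D\rangle^{-s}$ but an $\eps$-dependent weight $A_\eps=|D+i|^{-s}|\eps D+i|^{s-1}$, and studies $F_z(\eps)=A_\eps(H-z-i\eps BB^*)^{-1}A_\eps$.  The point of this choice is the operator bound $\|A_\eps D\|\leq \eps^{s-1}$, which tames the terms $DG_z(\eps)$ and $G_z(\eps)D$ arising from the commutator expansion and produces the sublinear inequality
\[
\Big\|\frac{dF_z(\eps)}{d\eps}\Big\|\leq C\,\eps^{s-3/2}\big(\eps+\|F_z(\eps)\|\big)^{1/2},
\]
which integrates to a uniform bound and a convergence rate $\|F_z(\eps)-F_z(0)\|\leq C\eps^{2s-1}$.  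With your fixed weights the analogous step fails for $s$ close to $1/2$, since $\langle D\rangle^{-s}D$ is unbounded.

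For (iii), the paper does not interpolate in the Sobolev scale; instead it combines the rate $O(\eps^{2s-1})$ just obtained with the elementary bound $\|\partial_z F_z(\eps)\|\leq C/\eps$ and optimises over $\eps$ (setting $\eps^{2s}=|z-z'|$) to recover exactly $\mu=(2s-1)/2s$.  Your interpolation route may be salvageable, but as written it does not explain why the exponent comes out to this precise value.  For (iv), note that the paper does not merely iterate commutators on the limiting resolvent but introduces a higher-order modified resolvent $(H-z-i\sum_{j=1}^n\frac{\eps^j}{j!}B_j)^{-1}$ together with the weight $|D+i|^{-s}|\eps D+i|^{s-n}$; this is what makes the $n$-smoothness hypothesis enter cleanly.
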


\begin{proof} For the sake of completeness, we give here a sketch of proof of a slightly simpler result, which turns out to be quite simple in our case since $H$ and $B_1$ are bounded operators in $L^2$.

\medskip
\noindent
$\bullet$ The first step is to prove that the discrete spectrum in the interval $I$ is finite. If it is not, there exists a sequence of orthonormal eigenfunctions $\phi_n$ with $H\phi_n = \omega_n \phi_n$.
By the commutator estimate (\ref{commutator-hyp}), we then get
$$ 0= \la \phi_n, B_1 \phi_n\ra \geq \alpha \| \phi_n\|^2 + \la \phi_n, K\phi_n\ra\,.$$
Since the $\phi_n$ are orthogonal, $\phi_n\rightharpoonup 0 $  weakly in $L^2$, and since $K$ is compact, 
$$\lim _{n\to \infty}  \la \phi_n, K\phi_n\ra = 0\,.$$
We  obtain a contradiction.

\medskip
\noindent
$\bullet$ The second step is to define some suitable approximation for the resolvent, when
 $\omega_0$  is not an eigenvalue of $H$. 
Define $P_\delta$ to be the spectral projector of $H$ on $[\omega_0-\delta, \omega_0+\delta]$. As $\delta \to 0$, $P_\delta$ converges weakly to 0, and since $K$ is compact,
$ K P_\delta \to 0$. One can then find $\delta$ small enough so that $\pm 2 P_\delta KP_\delta  \leq \alpha  P_\delta^2 $, from which we deduce that 
\begin{equation}\label{alpha}
P_\delta B_1 P_\delta \geq \frac{\alpha}{2} P_\delta^2 \,.
\end{equation}
In the sequel we will remove the subscript $\delta$ and call $P$ this spectral projection.

We then define
$BB^* = P_\delta B_1 P_\delta $ and 
$$ G_z(\eps )= (H-z- i\eps BB^*)^{-1}, \quad F_z(\eps) = A_\eps G_z(\eps)  A_\eps \,,$$
where $A_\eps = |D+i|^{-s} |\eps D +i|^{s-1} $ for some $s>1/2$ and $\eps >0$.
When $\eps$ and $\Im z$ have the same sign, as $H$ is self-adjoint,
$$\begin{aligned}
 \| (H-z- i\eps BB^*) \varphi \|^2 &= \| (H- \Re z ) \varphi\|^2 + \| (\Im z + \eps BB^*) \varphi \|^2 - 2 \Im ((H- \Re z ) \varphi, \eps BB^* \varphi) \\
 & \geq (\Im z)^2 \| \varphi\|^2 
 \end{aligned}
 $$
so that $H-z- i\eps BB^*$ is injective with closed range in $L^2$. Since its adjoint is also injective, the range is actually $L^2$.
 By the open mapping theorem, its inverse  $G_z(\eps)$  exists as a bounded operator. 
 
Using (\ref{alpha}) and the fact that $\Im z \geq 0$, we  also have that 
$$
\begin{aligned}
\| PG_z(\eps) A_\eps  \|^2& = \| A_\eps  \bar G_z(\eps )P^2 G_z(\eps) A_\eps  \|\\
&\leq \frac{2}{\alpha \eps}  \| A_\eps \bar G_z(\eps )(\eps BB^* +\Im z) G_z(\eps) A_\eps  \|\\
& \leq  \frac{1}{\alpha \eps} \| A_\eps  ( \bar G_z(\eps) - G_z(\eps)) A_\eps  \| \\
& \leq \frac{2}{\alpha \eps} \| A_\eps  G_z(\eps) A_\eps  \| \leq  \frac{2}{\alpha \eps} \| F_z(\eps)\| \\
\end{aligned}
$$
Finally, using the spectral localization, we have that for any $z$ such that $\Re z \in [\omega_0 - \delta/2, \omega_0+\delta/2]$
\begin{equation}
\label{bound1}
\| (I- P) G_z(\eps) A_\eps \| \leq C_\delta.
\end{equation}
In particular, one has 
\begin{equation}
\label{bound2}
\| G_z(\eps) A_\eps  \| \leq   C_{\alpha, \delta } \left(1+ \left(  \frac{ \|F_z(\eps)\|}{ \eps} \right) ^{ \frac{1}{ 2}}\right) \,.
\end{equation}
Note that, at this stage, we did not use the specific form of $A_\eps$. In particular, we have that
\begin{equation}
\label{bound3}\| G_z(\eps) \| \leq \frac{C}\eps \,.
\end{equation}

\medskip
\noindent
$\bullet$ The third step is to obtain a differential inequality in order to control the dependence with respect to $\eps$.
Let us compute the derivative of $F_z(\eps)$ with respect to $\eps$:
$$
\begin{aligned}
\frac{d F_z(\eps) }{ d\eps}& = {dA_\eps \over d\eps } G_z(\eps)  A_\eps +  A_\eps G_z(\eps) {dA_\eps \over d\eps }\\
&\quad +A_\eps  G_z(\eps) P[D,H]P  G_z(\eps)  A_\eps  \,.
 \end{aligned}$$
A straightforward computation shows that
$$P[D,H]P = [D,H-z-i \eps BB^*] +[D, i\eps BB^*]  + (P-I) [D,H] P + [D,H] (P-I)$$
from which we deduce that 
$$
\begin{aligned}
\frac{d F_z(\eps) }{ d\eps} &={dA_\eps \over d\eps }  G_z(\eps)  A_\eps + A_\eps G_z(\eps) {dA_\eps \over d\eps} + A_\eps \left( -DG_z(\eps)+ G_z(\eps ) D\right) A_\eps\\
& \quad   +\eps A_\eps  G_z(\eps)[D, i B^*B]   G_z(\eps) A_\eps \\
& \quad +  A_\eps  G_z(\eps)(P-I) [D,H] P  G_z(\eps) A_\eps +  A_\eps  G_z(\eps) [D,H] (P-I ) G_z(\eps) A_\eps 
\end{aligned}
$$
Since the commutators $[D, BB^*]$ and $[D,H]$ are bounded, and that
 \begin{equation}
\label{bound4}
\begin{aligned}
 \left\| {dA_\eps \over d\eps} \right\| = (1-s)  \left\| |D+i|^{-s}  |D|  |\eps D+ i|^{s-2 }\right\| \leq (1-s) \eps ^{s-1} \,,\\
 \| A_\eps D \| =\left\| |D+i|^{-s}  |D|  |\eps D+ i|^{s-1 }\right\| \leq \eps ^{s-1} \,.
 \end{aligned}
 \end{equation}
  we deduce from the
 a priori estimates (\ref{bound1})(\ref{bound2}) that 
\begin{equation}
\label{diff-ineq}
\left\| \frac{d F_z(\eps) }{ d\eps}\right\| \leq  C\eps ^{s-1} \left(1+ \left(  \frac{ \|F_z(\eps)\|}{ \eps}\right) ^{ \frac{1}{ 2}} + \| F_z (\eps) \|  \right)\,.
\end{equation}

\medskip
\noindent
$\bullet$
The fourth step is then  to extend the resolvent as $\eps \to 0$ by a continuity argument.
By (\ref{diff-ineq}) and (\ref{bound2}), we get that
$$\left\| \frac{d F_z(\eps) }{ d\eps}\right\| \leq  C\eps ^{ s-3/2}\left( \eps + \|F_z(\eps)\|\right) ^{1/2} \,,$$
which upon integration shows that $ \| F_z(\eps)\| $ is uniformly bounded with respect to $\eps$, and that 
$ \| F_z(\eps) - F_z(\eps') \| $ goes to 0 as $\eps, \eps' \to 0$.

This proves in particular  that $({H}-z)^{-1} $ defined for $\Im (z) > 0$ admits boundary values at the points 
 $\omega \in [\omega_0 - \delta/2, \omega_0+\delta / 2] $
in the space $O_s:= L(\cH^s , \cH^{-s})$. Furthermore,
\begin{equation}
\label{bound5}
\left\| F_z(\eps) -F_z(0) \right\| \leq C\eps^{2s-1}\,.
\end{equation}

\medskip
\noindent
$\bullet$ To obtain the H\"older continuity with respect to $z$, we compute
$$\frac{d F_z(\eps) }{ dz} = A_\eps  G_z(\eps)   G_z(\eps)  A_\eps \,.$$
Using the uniform bound on $F_\eps(z)$ and (\ref{bound2}), we obtain that
$$ \left\| \frac{d F_z(\eps) }{ dz}\right\| \leq \frac C\eps \,.$$
Then,
$$ \| F_z(\eps) - F_{z'} (\eps ) \| \leq \frac C\eps |z-z'| \,.$$
Combining this estimate with  (\ref{bound5}) and choosing $\eps ^{ 2s} =|z-z'|$, we get
$$ \| F_z(0) - F_{z'} (0 ) \| \leq  C |z-z'|^\mu \quad \hbox{ for }Ê\mu = (2s-1)/(2s)\,.$$

\medskip
The additional regularity is obtained by defining a refined approximation of the resolvent
$$\begin{aligned}
\bar G_\eps (z)& =(H-z -i  \sum_{j=1}^n {\eps^j \over j!} B_j )^{-1}  , \\
\bar F_\eps (z) &= |D+i|^{-s} |\eps D+i| ^{ s-n} G_\eps (z) |D+i|^{-s} |\eps D+i| ^{ s-n}\hbox{ with } s >n-1/2\,,
\end{aligned}$$
and  by looking at higher order derivatives of $\bar F_z(\eps)$. We refer to \cite{JMP} for the details of this proof.

\end{proof}

\subsection{Scattering for the evolution problem}
 
A refinement of the  conjugate operator method  allows actually
 to prove that there exists a direction of propagation for the evolution $\exp (-i tH)$. 
We indeed have the following theorem

\begin{Thm}[Mourre]  \label{mourre-thm2}
Let us assume that  $H$ is at least $2-$smooth with respect to $D$, and that the commutator estimate (\ref{commutator-hyp}) holds.
Denote by $P_D^-$ the spectral projectors of $D$ on $\R^-$.
Then, 
$$
\sup_{\lambda \in I }  \| P_D^-  (H-\lambda -  i0)^{-1} |D+i|^{-s} \| \leq C \hbox{ for } s >1\,.
$$
\end{Thm}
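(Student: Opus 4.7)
The statement is a refined Mourre-type bound asserting that the ``backward direction'' spectral projector $P_D^-$ of the conjugate operator $D$ is essentially invisible to the forward-time boundary value $(H-\lambda-i0)^{-1}$, provided one weighs by $|D+i|^{-s}$ with $s>1$. The conceptual reason is that $(H-\lambda-i0)^{-1}$ is a forward-time object, and the Mourre estimate \eqref{commutator-hyp} forces $D$ to grow along $e^{-itH}$ trajectories localized near the energy; so states initially of finite $D$-weight must leave the region $D<0$ and should not contribute to the range of $P_D^-$ if we propagate long enough. The weight $|D+i|^{-s}$ is what renders the ensuing time integral convergent at $t=\infty$.

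\textbf{Step 1: time-domain representation.} Fix $\lambda\in I\setminus\sigma_p(H)$ and a cutoff $\chi\in C_0^\infty(\R)$ with $\chi(H)$ equal to the identity on the spectral data of interest (so that $f$ can always be replaced by $\chi(H)f$ for $f\in\cH^s$ localized in $I$). Writing, in the weak sense on the pair $(\cH^s,\cH^{-s})$,
\begin{equation*}
(H-\lambda-i0)^{-1}=i\int_0^{+\infty} e^{-it(H-\lambda)}\,dt,
\end{equation*}
which is justified by Theorem \ref{mourre-thm} (ii)--(iii), one reduces the claim to the propagation bound
\begin{equation*}
\int_0^{+\infty}\bigl\| P_D^-\, e^{-itH}\,\chi(H)\,|D+i|^{-s}\bigr\|\,dt\le C,
\end{equation*}
uniformly in $\lambda\in I$.

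\textbf{Step 2: minimal-velocity estimate.} This is the heart of the argument. Introduce the time-dependent propagation observable $\Phi(t)=F(D/t)$, where $F\in C^\infty(\R)$ is nonincreasing, equal to $1$ on $(-\infty,-2c)$ and vanishing on $(-c,+\infty)$ for some small $c>0$. Differentiating,
\begin{equation*}
\frac{d}{dt}\bigl\langle \chi(H) e^{-itH}\psi,\,\Phi(t)\,\chi(H) e^{-itH}\psi\bigr\rangle
=\bigl\langle\chi(H) e^{-itH}\psi,\bigl(i[H,F(D/t)]-t^{-2}F'(D/t)D\bigr)\chi(H) e^{-itH}\psi\bigr\rangle.
\end{equation*}
Using a Helffer--Sj\"ostrand representation of $F(D/t)$ one expands
$i[H,F(D/t)]=t^{-1}F'(D/t)\,B_1+\cO(t^{-2})$ (where the error is controlled by the $2$-smoothness of $H$ w.r.t.\ $D$ and the boundedness of $B_2$), and the Mourre estimate $\chi(H)B_1\chi(H)\ge\alpha\bar\chi(H)+K$ shows that on the region $D/t<-c$ both $F'(D/t)$ and the factor $-D$ have the same positive sign, so the right-hand side is, up to a lower-order remainder and a compact perturbation absorbed as in the proof of Theorem \ref{mourre-thm}, bounded below by $(\alpha/t)\,\langle F'(D/t)\bar\chi(H) e^{-itH}\psi,\bar\chi(H) e^{-itH}\psi\rangle$. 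Integrating this positive commutator inequality against an initial state of the form $\psi=|D+i|^{-s}\varphi$ and combining with interpolation in $D$ yields the pointwise decay
\begin{equation*}
\bigl\|\mathbf 1_{D\le 0}\,e^{-itH}\chi(H)\,|D+i|^{-s}\bigr\|\le C\,\langle t\rangle^{-s+\delta}
\end{equation*}
for any small $\delta>0$.

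\textbf{Step 3: conclusion and obstacle.} With $s>1$ one may take $\delta<s-1$, so the time integral of Step~1 converges; the constant depends only on the Mourre constants and on $\chi$, giving a bound uniform in $\lambda\in I\setminus\sigma_p(H)$. Continuity of the boundary values from Theorem \ref{mourre-thm} extends the bound to all $\lambda\in I$. The main obstacle is the propagation estimate of Step~2: the commutator $[H,F(D/t)]$ is not expressible purely via $B_1$ and one must carefully control the higher-order terms in the expansion (which is where the assumption that $H$ be $n$-smooth with $n\ge 2$ enters), and one must also absorb the compact remainder $K$ in the Mourre estimate, typically by inserting a spectral cutoff $\chi(H)$ and iterating, as in the proof of Theorem \ref{mourre-thm}. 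All other ingredients are essentially routine once this minimal-velocity bound is in hand.
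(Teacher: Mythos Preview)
Your approach is a genuine alternative to the paper's. The paper does \emph{not} pass to the time domain; instead it mimics the proof of Theorem~\ref{mourre-thm} by working with the regularized resolvent $G_z(\eps)=(H-z-i\eps BB^\star)^{-1}$ and studying the quantity
\[
\tilde F_z(\eps)=P_D^-\,e^{\eps D}\,G_z(\eps)\,|D+i|^{-s},
\]
differentiating in $\eps$. The factor $e^{\eps D}$ (bounded on the range of $P_D^-$) is the key device: when one commutes $D$ through $G_z(\eps)$, the left-over $D$ on the left is absorbed by $e^{\eps D}$, so only the $D$ acting on the right survives, where $|D+i|^{-s}$ tames it. The remaining difficulty is to show that $[D,H](P-I)G_z(0)|D+i|^{-s}$ maps into $\cH^\eta$ for some $\eta>0$; the paper invokes for this a lemma of Mourre (if $[C,D]$ is bounded then $|D|^\eta C|D+i|^{-1}$ is bounded). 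Your route, via the representation $(H-\lambda-i0)^{-1}=i\int_0^\infty e^{-it(H-\lambda)}dt$ and a minimal-velocity estimate for $\mathbf 1_{D\le0}e^{-itH}\chi(H)|D+i|^{-s}$, is the Sigal--Soffer/Hunziker--Sigal--Soffer alternative; it is more dynamical and perhaps more transparent conceptually, while the paper's method stays closer to the original Mourre machinery and avoids the time integral entirely.

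Two small points on your sketch. First, a sign slip: $F$ is nonincreasing, so $F'\le 0$; on the support of $F'$ you have $D<0$, hence both $t^{-1}F'(D/t)B_1$ and $-t^{-2}F'(D/t)D$ are \emph{nonpositive}, which is exactly what drives $\langle\Phi(t)\rangle$ down --- but your sentence ``$F'(D/t)$ and the factor $-D$ have the same positive sign'' is not right as written. Second, the passage from the integrated positive-commutator bound to the pointwise decay $\|\mathbf 1_{D\le0}e^{-itH}\chi(H)|D+i|^{-s}\|\le C\langle t\rangle^{-s+\delta}$ is not a one-liner: the Heisenberg-derivative argument you wrote yields only $\int_1^\infty t^{-1}\langle(-F')(D/t)\psi_t,\psi_t\rangle\,dt<\infty$, and upgrading this to pointwise $t^{-s}$ decay with the weight $|D+i|^{-s}$ typically requires either a second propagation observable or an induction in $s$ (as in Hunziker--Sigal--Soffer). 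You flag this correctly as the main obstacle, but ``interpolation in $D$'' undersells the work involved.
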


\begin{proof}[Sketch of proof]
The arguments here are quite similar to those used in the proof of Theorem \ref{mourre-thm}.
We define  for $ \Im (z) >0$, $\epsilon >0$  and $s>1$
$$ G_z(\eps )= (H-z-  i\eps BB^*)^{-1}, \quad \tilde  F_z(\eps) = P_D^- e^{\eps D}  G_z(\eps) |D+i|^{-s}  \,.$$

\medskip
\noindent
$\bullet$
Let us now look at the derivative of $\tilde F_z(\eps)$ with respect to $\eps$.
$$\frac{d \tilde F_z(\eps) }{ d\eps} = P_D^- e^{\eps D}  \left( DG_z(\eps) -G_z(\eps)P[H,D]PG_z(\eps) \right)  |D+i|^{-s}   \,.$$
The same decomposition as previously shows that
\begin{equation}
\label{diff2}
\begin{aligned}
\frac{d \tilde F_z(\eps) }{ d\eps} &= P_D^- e^{\eps D}  G_z(\eps ) D |D+i|^{-s}  \\
&\quad  +\eps P_D^- e^{\eps D} G_z(\eps)[D, i B^*B]   G_z(\eps) |D+i|^{-s}  \\
& \quad +  P_D^- e^{\eps D} G_z(\eps)(P-I) [D,H]   G_z(\eps) |D+i|^{-s}  \\
&\quad  +  P_D^- e^{\eps D} G_z(\eps) P[D,H] (P-I ) G_z(\eps) |D+i|^{-s}  
\end{aligned}
\end{equation}
Note that the exponential term introduces a shift, so that the derivative $D$ acts only on the right in the first term.
From (\ref{bound1})(\ref{bound2})(\ref{bound3}) and the bound on $\|F_\eps (z)\|$, we have by interpolation that for any $\eta>0$, there exist $C_\eta>0$ and $\eta' >0$ such that
$$ \| G_z(\eps) |D+i|^{-\eta}\| \leq  C_\eta \eps ^{\eta'-1}\,.$$
In particular,  the three first terms in the right hand side of (\ref{diff2}) are integrable.

\medskip
\noindent
$\bullet$
The difficulty here is to get a control on the last term,
and more precisely to prove that $[D,H] (P-I ) G_z(\eps)|D+i|^{-s}  $ is a bounded operator from $L^2$ to $\cH^\eta$ for some $\eta>0$.

\noindent
We start from the following identity
$$\begin{aligned}
 (P-I ) G_z(\eps) |D+i|^{-s} &= (P-I ) G_z(0) \big( 1+i\eps B^*B G_z(\eps) \big)|D+i|^{-s} \\
 &=  (P-I ) G_z(0)  |D+i|^{-s}  + o(\eps^{1/2}) 
 \end{aligned}$$
where the remainder is estimated by  (\ref{bound2}).

\noindent 
We then have to prove that $|D|^\eta [D,H] (P-I ) G_z(0)  |D+i|^{-s} $ is a bounded operator.
The idea is to use the fact that, for any  bounded operator $C$ and  for any $\eta\in [0,1[$
$$\hbox{ if $[C,D]$ is bounded, then }  |D|^\eta  C |D+i|^{-1} \hbox{ is bounded.}$$
The proof of this statement can be found in \cite{mourre2}.
In our case, $C= [D,H] (P-I ) G_z(0)$ is indeed a bounded operator, and the commutator can  be expressed in terms of $[D,H]$ and $[D, [D,H]]$ which are bounded as well.

The last term in the right hand side of (\ref{diff2}) is also integrable.
We then obtain that $\|\tilde F_z(\eps)\|$ has no singularity as $\eps \to 0$.

\end{proof}

\section{Spectral decomposition of the wave operator}

\subsection{Construction of a global escape function}
As explained in the introduction, the existence of a conjugate operator for a pseudo-differential
 operator $H$ of principal symbol $h$ 
is related to the fact that the Hamiltonian dynamics of $h$ admits an escape function.
The construction of the escape function $d$ is therefore the heart of the proof.

\begin{Prop}\label{escape-prop}
 Under the assumptions (M0), (M1) and (M2), there exists an escape function  $d$ for the Hamiltonian $h$
on $\Sigma _{\omega _0}$. 

The function $d$ can then be extended to energies close to $\omega_0$.
\end{Prop}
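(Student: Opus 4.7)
The plan is to reduce the search for $d$ to a scalar problem on the compact oriented surface $Z$. Choosing the radial coordinate $r=|p|$ on $\Sigma_{\omega_0}\simeq Z\times\R^+$, any homogeneous-of-degree-one function $d$ is of the form $d(x,p)=r\,g(z)$ with $z=(x,p/|p|)\in Z$, and conversely. Using the homogeneity of $h$ (so $\partial_p h$ is of degree $-1$ and $\partial_x h$ is of degree $0$), one finds
\[
 X_h \;=\; \frac1r\,Y_Z \;+\; \beta(z)\,\partial_r,\qquad \beta(z):=-\hat p\cdot \partial_x h(x,\hat p),
\]
where $Y_Z$ is the (nowhere-vanishing, by Lemma 3.1) vector field on $Z$ generating the foliation ${\cal F}$. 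A direct computation then gives $\{h,d\}=Y_Z(g)+\beta\,g=:Lg$, a function on $Z$. The proposition thus amounts to producing $g\in C^\infty(Z,\R)$ with $Lg>0$ everywhere on $Z$.

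I would first analyze $L$ on a single cycle $\gamma$ of ${\cal F}$, parametrized by $Y_Z$-time $s\in\R/T_\gamma\Z$. The equation $Lg=1$ restricted to $\gamma$ is the linear ODE $g'(s)+\beta(s)g(s)=1$, which admits a unique $T_\gamma$-periodic solution whenever the monodromy $\Lambda_\gamma:=\int_0^{T_\gamma}\beta\,ds$ is nonzero. Preservation of the symplectic $2$-form by the Poincar\'e return map of $X_h$ on a transverse section of $\Sigma_{\omega_0}$ forces $\mu_\gamma\,e^{\Lambda_\gamma}=1$, where $\mu_\gamma$ is the transverse linear eigenvalue of the Poincar\'e map of ${\cal F}$; hence hyperbolicity (M2) is equivalent to $\Lambda_\gamma\neq 0$, with $\Lambda_\gamma>0$ on attracting cycles and $\Lambda_\gamma<0$ on repelling ones. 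Solving by integrating factor $F(s)=\int_0^s\beta$ yields
\[
 g_\gamma(0)\;=\;\frac{\int_0^{T_\gamma} e^{F(s)}\,ds}{e^{\Lambda_\gamma}-1},\qquad g_\gamma(s)=e^{-F(s)}\Bigl(g_\gamma(0)+\int_0^s e^{F(\sigma)}\,d\sigma\Bigr),
\]
which, by inspection of both formulas, is strictly positive along every attracting cycle and strictly negative along every repelling one. This sign contrast is in fact forced: averaging $Lg>0$ against $ds$ on $\gamma$ gives $g(e^{\Lambda_\gamma}-1)$ times a positive factor, so $g$ must have the sign of $\Lambda_\gamma$ on each cycle.

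The main obstacle is passing from these cycle data to a global smooth $g$ on $Z$. I would first extend each $g_\gamma$ smoothly from $\gamma$ to a tubular neighborhood $U_\gamma$, where $Lg_\gamma>0$ persists by continuity of $L$. Outside $\bigcup_\gamma U_\gamma$, every non-cyclic orbit is heteroclinic from a repelling $\gamma_-$ to an attracting $\gamma_+$ (by Morse-Smale), and $Lg=1$ is a scalar transport ODE along the flow which can be solved exactly; the hyperbolicity of $\gamma_\pm$ forces the bounded solutions to match the previously-constructed values on $\gamma_\pm$ at the two ends of the transit, by a standard stable-manifold argument applied to the linear ODE parametrized along each orbit. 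At the zero set of the resulting $g$ (which is a curve threading each transit region) the condition $Lg=Y_Zg>0$ is automatic because $g$ crosses from negative to positive along the flow. The remaining delicate step is a smooth partition-of-unity interpolation between the tubular-neighborhood extensions of $g_\gamma$ and the transit solutions: since $L$ is not elliptic, the usual partition-of-unity commutator terms $(Y_Z\chi_i)(g_i-g_j)$ do not vanish, and I would absorb them by choosing the cut-offs $\chi_i$ inside flow-boxes where $L$ reduces to a pure transport operator, so that on each overlap the two candidates $g_i,g_j$ differ by a function of arbitrarily small $C^1$-norm. Finally, since (M0)(M1)(M2) are open conditions on $\omega$ and every step above is stable under $C^\infty$-perturbation of $h$, the escape function extends to a smooth family $d(\cdot;\omega)$ on a neighborhood of $\omega_0$.
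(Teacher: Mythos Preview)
Your reduction to the compact quotient $Z$ is correct and your analysis on each periodic leaf is fine: the identity $\mu_\gamma e^{\Lambda_\gamma}=1$ is exactly the symplectic constraint, and the sign of the unique periodic solution of $g'+\beta g=1$ is as you say. The difficulty is entirely in your step~4, and as written it is a genuine gap. For a first--order operator $L=Y_Z+\beta$, a partition of unity produces the uncontrolled term $\sum (Y_Z\chi_i)g_i$, and your proposed fix (``choose the cut-offs inside flow-boxes so that $g_i-g_j$ has arbitrarily small $C^1$-norm on overlaps'') is not justified: nothing in your construction forces the tubular extension of $g_{\gamma}$ and the transit solution to be $C^1$-close on an overlap, and you have not argued that the transit solutions, defined orbit by orbit, assemble into a function that is smooth \emph{transversally} to the foliation near the cycles.

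The paper avoids this gluing problem altogether by a different mechanism. Using the normal form of Lemma~\ref{normalform-lem} and the factorisation $h-\omega_0=\Phi^2(\xi/\eta-\lambda y)$ of Lemma~\ref{h0-lem}, one sets
\[
 d_\gamma:=\lambda\,\eta\,\phi(y),\qquad \phi\in C_c^\infty(\R,[0,1]),\ \phi\equiv 1\ \text{on }[-k,k],\ y\phi'(y)\le 0,
\]
and computes $\{h,d_\gamma\}=\lambda^2\Phi^2\bigl(\phi-y\phi'\bigr)$ on $\Sigma_{\omega_0}$. The point is that this is $\ge 0$ on \emph{all} of $\Sigma_{\omega_0}$, not merely on a neighbourhood of $\gamma$, and is strictly positive on the cone over $\{|y|<k\}$. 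Since the basins cover $Z$, compactness gives a finite family $\{d_\gamma\}$ whose plain sum $d=\sum d_\gamma$ (no partition of unity, no cut-off weights) satisfies $\{h,d\}>0$ everywhere. In your language: instead of trying to solve $Lg=1$ globally and then patching, the paper produces finitely many $g_\gamma$ with $Lg_\gamma\ge 0$ globally and $Lg_\gamma>0$ on an open set, and adds them. That sign trick---coming from the specific choice $\phi\ge 0$, $y\phi'\le 0$ in the linearising chart---is what your argument is missing.
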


\noindent
We start by constructing a local change of variables, also called normal form, on each basin of attraction (resp. of repulsion) of the dynamics, in order to have a simple expression for the Hamiltonian $h$.

\begin{Lem}\label{normalform-lem} Let $\gamma$  be a hyperbolic closed
leaf of the foliation ${\cal F}$ with Lyapunov exponent $e^{-2\pi \lambda }\neq 1.$
Denote by  $B_\gamma $  the basin of attraction 
(or ``repulsion'') of $\gamma $.
 There exists a diffeomorphism of 
  $B_\gamma  $ on $(\R/2\pi \Z)_x \times \R_y $ so that the foliation is given   by
$dy +\lambda y dx =0 $ oriented by $dx >0$. 
\end{Lem}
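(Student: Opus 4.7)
\emph{Approach.} The plan is to reduce the normal-form question to the smooth linearization of the Poincaré first-return map on a section transverse to $\gamma$, and then to suspend that linearization along the flow of $\mathcal{F}$. In the end, $B_\gamma$ will appear as a $C^\infty$ mapping torus of the linear contraction $y\mapsto e^{-2\pi\lambda}y$ on $\R$, which by direct inspection is the cylinder $(\R/2\pi\Z)_x\times\R_y$ equipped with the foliation $dy+\lambda y\,dx=0$ oriented by $dx>0$.

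\emph{Construction.} First I would pick a smooth transverse section $S$ through $p_0=S\cap\gamma$ together with a smooth vector field $Y$ tangent to $\mathcal F$ in the prescribed orientation, and rescale $Y$ by the positive smooth function $2\pi/\tau(\cdot)$ where $\tau$ is the first-return time to $S$ (smooth on $S\cap B_\gamma$ by the implicit function theorem). The flow $\phi_t$ of the rescaled $Y$ then has first-return time identically $2\pi$, and the Poincaré map $\pi=\phi_{2\pi}|_S$ is a smooth diffeomorphism of the 1-dimensional open set $S\cap B_\gamma$ with hyperbolic fixed point $p_0$ of derivative $e^{-2\pi\lambda}$. Next I would linearize $\pi$ globally on $S\cap B_\gamma$: in any local coordinate on $S$ vanishing at $p_0$, the Koenigs limit
$$h(p)=\lim_{n\to\infty} e^{2\pi n\lambda}\,\pi^n(p)$$
converges in $C^\infty$ on compact subsets of the basin (from the exponential contraction of $\pi^n$ together with geometric estimates on all derivatives, obtained by differentiating the recursion $\pi^{n+1}=\pi\circ\pi^n$), satisfies $h\circ\pi=e^{-2\pi\lambda}h$ with $h'(p_0)=1$, and has as image an open interval containing $0$ that is invariant under the expanding dilation $y\mapsto e^{2\pi\lambda}y$, hence equal to $\R$. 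I then use $y=h(\cdot)$ to identify $S\cap B_\gamma$ with $\R_y$.

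With this identification in place, the final step is to define the chart
$$\Phi:(\R/2\pi\Z)_x\times\R_y\longrightarrow B_\gamma,\qquad\Phi(x,y)=\phi_x\!\bigl(y\,e^{\lambda x}\bigr),$$
where on the right the argument of $\phi_x$ is read off as a point of $S\cong\R$. The identity $\phi_{2\pi}(y')=e^{-2\pi\lambda}y'$ gives $\Phi(x+2\pi,y)=\Phi(x,y)$, so $\Phi$ descends to the cylinder; surjectivity holds because every point of $B_\gamma$ flows back to $S$ in finite time, and injectivity modulo the $2\pi$-periodicity of $x$ is precisely the deck transformation $\pi$, which has already been quotiented out. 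Pulled back by $\Phi$, a leaf becomes the curve $x\mapsto(x,y_0 e^{-\lambda x})$, i.e.\ an integral curve of $dy+\lambda y\,dx=0$ oriented by $dx>0$, which is the required normal form.

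\emph{Main obstacle.} The only delicate step is the \emph{global} (not merely local) $C^\infty$ linearization of $\pi$ on the whole basin component. Sternberg's theorem is a local statement near $p_0$; extending it to all of $S\cap B_\gamma$ uses hyperbolicity crucially — the assumption $e^{-2\pi\lambda}\neq1$ both suppresses all one-dimensional resonances and supplies the exponential factor needed to differentiate the Koenigs sequence termwise on compact subsets of the basin. Once this linearization is in hand the geometric suspension construction is essentially tautological.
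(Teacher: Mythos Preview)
Your approach is essentially the paper's: linearize the first-return map on a transversal via Sternberg, then propagate along the flow to the whole basin. The paper packages the second step as Nelson's wave operator $W(q)=\lim_{t\to+\infty}U_0(-t)\,U(t)\,q$ acting directly on the 2D flow; your suspension formula $\Phi(x,y)=\phi_x\bigl(ye^{\lambda x}\bigr)$ is the inverse of that map read through the section.

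There is one genuine slip. You assert that the Koenigs chart $h$ has image all of $\R$ because $h(S\cap B_\gamma)$ is ``invariant under the expanding dilation $y\mapsto e^{2\pi\lambda}y$''. But the functional equation $h\circ\pi=e^{-2\pi\lambda}h$ only yields invariance under the \emph{contraction} $e^{-2\pi\lambda}$; invariance under the expansion would require $\pi^{-1}$ to be defined on all of $S\cap B_\gamma$, i.e.\ $\pi$ surjective there, and this fails for a small local transversal (in the model itself, with $S=\{0\}\times(-a,a)$ one has $\pi(y)=\mu y$ with image $(-\mu a,\mu a)\subsetneq S$). So $h(S\cap B_\gamma)$ is in general a bounded interval, and your $\Phi$ as literally written is not defined for every $(x,y)\in(\R/2\pi\Z)\times\R$. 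The repair is precisely the content of the paper's second step: for any $(x,y)$ choose a lift $x'\equiv x\ (\mathrm{mod}\ 2\pi)$ negative enough that $ye^{\lambda x'}$ lands in the local Sternberg window, and your own periodicity identity $\Phi(x+2\pi,y)=\Phi(x,y)$ makes the outcome independent of that choice --- equivalently, let the wave operator $W$ carry out this bookkeeping automatically, which is why the paper invokes Nelson's scattering idea at that point.
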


\begin{proof}
Up to a change of orientation, we can assume that $\gamma$ is a stable cycle.

\noindent
$\bullet$ The first step is  to construct the normal form close to the cycle. Let $S$ be a local Poincar\'e section transverse to $\gamma $.
By definition, the Poincar\'e return map $\pi$ sends $S$ on itself. By Sternberg's linearization theorem  for 1D maps \cite{St57},
there is a chart $(I,0)\subset (\R_y,0)$  of $S$ so that 
$$P(y)= \mu y \hbox{  with } 0< \mu =e^{-2\pi \lambda }< 1 .$$
We then choose the   normalisation of the  vector field $V$  tangent to the foliation ${\cal F}$  so that the return time is $2\pi $.
And we denote by $x$ the coordinate starting from $0$ along $S$  and so that $V. {x}=1$.  By construction, $x \in \R/2\pi \Z$.
This construction provides a  foliation ${\cal F}_0$ given by $dy +\lambda  y dx =0 $  and oriented by $dx >0$.

 We have now  two periodic foliations ${\cal F}$ and ${\cal F}_0$: they agree at $x=0$ and $x=2\pi $.
Thus, on each section $x=constant$, there exists  a unique diffeomorphism sending the first onto the second one. 
One  checks that is is smooth. 
In other words, the normal ${\cal F}_0$ is a re-parametrization of  ${\cal F}$, which encodes the geometry of the trajectories.

\bigskip
\noindent
$\bullet$ The second step is then to extend the normal form  globally in the basin. We will use here 
ideas from scattering theory introduced  by Nelson  \cite{Ne69}.

We first  choose the normalization of a generator of the foliation so that it extends smoothly the vector field
 $V $ defined near $\gamma $.  We  denote by $U(t) $  the flow of $V$ on $B$, and by  $U_0(t) $  the flow of $V_0 =\partial _x - \lambda  y \partial _y $ on 
$B_0:=(\R/2\pi \Z )_x \times \R_y $.
Both flows are complete. 

Let us then consider the map
$W:B\rightarrow B_0$ defined
by
$$W (q)= \lim _{t\rightarrow + \infty }    U_0(-t)   U(t)q .$$
The limit clearly exists because $W$ is the identity near $\gamma $ and for any $q\in B$, $U(t) q \rightarrow \gamma $
as ${t\rightarrow + \infty }$.  Both flows are therefore conjugated by $W$. 
\end{proof}

\begin{Rmk} Note that for any $q\in X$ outside the closed leaves, $q$ belongs both to the basin of attraction of a stable cycle, and to the basin of repulsion of an unstable cycle. The foliation ${\cal F}$ at $q$ is therefore conjugated to two  foliations ${\cal F}_0$ with different $\lambda$.
\end{Rmk}

\bigskip
\noindent
As a corollary of the previous Lemma, we obtain the local expression of the Hamiltonian $h$:

\begin{Lem}\label{h0-lem} Let $B_\gamma $ be  the basin of attraction 
(or ``repulsion'') of the hyperbolic closed
leaf $\gamma$ of the foliation ${\cal F}$, and denote by $(x,y,\xi,\eta)$ the coordinates associated to  the normal form introduced in Lemma \ref{normalform-lem}.

Then  there exists a conic neighborhood of the cone $\Gamma_\gamma \subset \Sigma_{\omega_0}$ generated by $B_\gamma$, defined  by  
\[ U_\gamma :=\{ (x,y;\xi,\eta )| ~ |\xi |< c \eta \} \subset T^\star X  \setminus 0  \] 
such that   the 
 Hamiltonian $h$ can be written locally on $U_\gamma$
$$h(x,y,\xi,\eta)-\omega_0=\Phi^2(x,y,\xi, \eta) \left( \frac{\xi}{\eta}-\lambda  y\right)$$
 for some non vanishing function $\Phi$ homogeneous of degree $0$.\end{Lem}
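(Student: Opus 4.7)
The approach I would take is to promote the normal-form chart of Lemma \ref{normalform-lem} on the base to canonical symplectic coordinates on $T^\star X$, identify the energy shell $\Sigma _{\omega_0}$ in those coordinates, and then apply a division (Hadamard) argument. First I would fix the local coordinates $(x,y)$ on $B_\gamma $ provided by Lemma \ref{normalform-lem} (viewed as coordinates on an open subset of $X$ via the local diffeomorphism given by the covering $\pi : Z\to X$ of assumption (M1)), and take $(\xi ,\eta )$ to be the dual fiber coordinates. The conic set $U_\gamma =\{|\xi |<c\eta \}$ then automatically lies in the half-space $\{\eta >0\}$, and any function homogeneous of degree $m$ in $p$ becomes homogeneous of degree $m$ in $(\xi ,\eta )$.

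The next step is to determine $\Sigma _{\omega _0}\cap U_\gamma $. Since $h$ is homogeneous of degree $0$ in $p$, the Euler relation gives $\xi \,\partial _\xi h+\eta \,\partial _\eta h=0$ on $\Sigma _{\omega _0}$; equivalently, the $1$-form $\xi \,dx+\eta \,dy$ annihilates the projected Hamiltonian vector field $(\partial _\xi h,\partial _\eta h)$ (this is exactly the remark made at the end of the proof of Lemma~6.3 above). But this projected vector field generates the foliation ${\cal F}$, which in the normal form of Lemma \ref{normalform-lem} has direction $(1,-\lambda y)$. Substituting into $\xi \,dx+\eta \,dy$ yields
$$
\xi -\lambda y\,\eta =0 \quad \text{on } \Sigma _{\omega _0}\cap U_\gamma ,
$$
so the energy shell coincides with the zero set of $g(x,y,\xi ,\eta ):=\xi /\eta -\lambda y$, which is smooth on $U_\gamma $, homogeneous of degree $0$, and has $dg\neq 0$ (since $\partial _\xi g=1/\eta \neq 0$).

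Both $h-\omega _0$ and $g$ are therefore smooth functions on $U_\gamma $ vanishing exactly on the codimension-one submanifold $\Sigma _{\omega _0}\cap U_\gamma $, with nonvanishing differentials there (for $h-\omega _0$ this is the nondegeneracy assumption in (M1)). Hadamard's division lemma then gives $h-\omega _0=\psi \cdot g$ on $U_\gamma $ (possibly shrunk) with $\psi $ smooth and nonvanishing in a neighborhood of $\Sigma _{\omega _0}$; by uniqueness of the quotient and the fact that $h-\omega _0$ and $g$ are both homogeneous of degree $0$, the function $\psi $ is automatically homogeneous of degree $0$.

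It remains to verify $\psi >0$ so that $\Phi :=\sqrt{\psi }$ makes sense. On $\Sigma _{\omega _0}$ the Hamiltonian vector fields satisfy $X_h=\psi \,X_g$; the projection of $X_g$ to the base is $(1/\eta ,-\xi /\eta ^2)$, which on the energy shell equals $(1/\eta )(1,-\lambda y)$ and thus carries the orientation $dx/dt>0$. This is precisely the orientation of ${\cal F}$ fixed in Lemma \ref{normalform-lem}, which $X_h$ also carries by construction of the generator $V$ in that lemma. Hence $\psi >0$ on $U_\gamma $, and $\Phi :=\sqrt{\psi }$ is a smooth, nonvanishing function homogeneous of degree $0$, completing the proof. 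The only point that requires care beyond routine manipulation is this last sign/orientation check; the algebraic heart of the argument is just Euler's identity for degree-$0$ Hamiltonians combined with Hadamard division.
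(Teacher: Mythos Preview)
Your proof is correct and follows essentially the same route as the paper's: identify the energy shell in the normal-form coordinates as $\xi-\lambda y\eta=0$ via the Euler relation (the paper phrases this as ``the orthogonal of the foliation $\mathcal F_0$''), divide $h-\omega_0$ by $h_0=\xi/\eta-\lambda y$ to get a nonvanishing degree-$0$ quotient, and then check positivity. The only cosmetic difference is in the sign check: you compare the orientations of the projected vector fields $X_h$ and $X_g$ against the convention $dx>0$, whereas the paper argues more tersely that the sign of the multiplier must be positive ``in order to get the right stability or unstability according to the sign of~$\lambda$''---these are equivalent observations.
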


\begin{proof} The orthogonal of the foliation ${\cal F}_0$  is given by 
$\xi -\lambda y\eta =0$. Hence $\eta \ne 0 $ on the cone generated by $B_\gamma$. Without loss of generality, we choose $\eta >0 $ on the cone $\Gamma_\gamma $ generated by 
$\gamma $. 

Any homogeneous function vanishing on $\Gamma_\gamma $ is the product of 
$$h_0 (x,y,\xi,\eta) =\frac{\xi}{\eta }-\lambda y $$
 by a non vanishing multiplier, which is  homogeneous of degree 0 because of the assumption (M0). 
 
 Moreover, in order to get 
 the right stability or unstability according to the sign of $\lambda$, 
this multiplier has  to be non negative. Defining $\Phi$ as its square root gives the expected formula. 
\end{proof}

\bigskip
\noindent

\begin{proof}[Proof of Proposition \ref{escape-prop}] Equipped with these preliminary results, we can   construct the escape function.

\noindent
$\bullet$ The first step is to define  local escape functions on the basin of attraction $B_\gamma$, by using the explicit formula of the Hamiltonian $h$ in the coordinates associated with the normal form. 
For any $k >0$, 
we can define a smooth function $\phi \in C_c^\infty (\R , [0,1])$ 
such that  $\phi \equiv 1 $ on $[-k, k]$  and  satisfying $y\phi'(y) \leq 0 $ on $\R$.
We then set
\[ d_\gamma :=\lambda \eta \phi (y) \,.\]
A straightforward computation shows that
\[  \{  \Phi ^2 h_0 , d_\gamma \}=  \Phi^2 \{  h_0 , d_\gamma \} + h_0 \{  \Phi ^2  , d_\gamma \} = \lambda^2  \Phi^2 \left( \phi -y\phi'    \right) \]
 on the cone  of $ \Gamma_\gamma  \subset \Sigma_{\omega_0} $ generated by $B_\gamma$. 
 
 Hence 
 the function 
$d_\gamma$ 
 satisfies 
$$
\begin{aligned}
\{h,  d_\gamma  \}\geq 0 \hbox{ on } \Sigma_{\omega_0},\\
\{h,  d_\gamma  \}> 0 \hbox{ on } \Gamma_{\gamma,k}
\end{aligned}$$ 
where $\Gamma_{\gamma,k} \subset \Sigma_{\omega_0} $ is the cone generated by  $\{ (x,y)\in B_0 | ~ |y |< k \} $.

\medskip
\noindent
$\bullet$
The global escape function is obtained as a sum of local escape functions. From the assumption (M2), we know that $\Sigma_{\omega_0}$ is a finite union of $B_\gamma$.
Each of these $B_\gamma$ has a normal form, and the corresponding cone $\Gamma_\gamma \subset \Sigma_{\omega_0}$ can be covered by the unions of $\Gamma_{\gamma,k}$ for all $k \in \N^*$
$$ \Sigma_{\omega_0} \subset \bigcup_{\gamma \hbox{ \small{closed leaf}}} \bigcup_{k\in \N^*} \Gamma_{\gamma,k}\,.$$
Since the quotient $Z$ of $\Sigma_{\omega_0}$ by positive dilations is compact, we can extract a finite   covering
$$ \Sigma_{\omega_0} \subset \bigcup_{\gamma \hbox{ \small{closed leaf}}} \Gamma_{\gamma,k_\gamma} \,.$$
By adding the corresponding local escape functions, we then get a function $d$ such that
$\{h,  d  \}>0 \hbox{ on } \Sigma_{\omega_0}\,.$
 \end{proof}

\begin{Rmk}
For any $z\in \Sigma _{\omega _0 } $ which is not in the union of cones generated by
the unstable leaves,  the Hamiltonian trajectory $\phi_t (z)$ converges as $t\rightarrow +\infty $ to the infinity
 of the cones generated by the stable
leaves, which  are invariant conic Lagrangian submanifolds of $T^\star X\setminus 0 $. 

\noindent
The dynamics on these cones is spiraling: using the  coordinates $(x,\eta )$ in these cones, we get
\[ \dot{x}=\Phi^2  \frac{1}{\eta },\quad \dot{y}=-\Phi^2  \frac{\lambda y }{\eta }, ~ \dot{\eta }= \Phi^2 \lambda \]
which gives the spirals 
$$ y = y_0 e^{-\lambda x}, \quad \eta =\eta_0 e^{\lambda x} .$$
\end{Rmk}

\subsection{Construction of the conjugate operator}
We now  construct the conjugate operator $D$ as a pseudo-differential operator of degree $1$ with principal symbol $d$:
 the commutator identity (\ref{commutator-hyp}) will follow
from the estimate on the principal symbol of $B_1 = i[H,D]$ which is
$b_1 = \{ h,d\} $. 

\begin{Thm}
If $d$ is an escape function on $\Sigma _{\omega _0}$, Mourre's theorems \ref{mourre-thm} and \ref{mourre-thm2} apply near $\omega _0$ for any value of $n$.
\end{Thm}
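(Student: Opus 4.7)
The plan is to produce the conjugate operator by quantizing the escape function. First I would set $D := \mathrm{Op}^W(d)$, the Weyl quantization of $d$, which is automatically formally self-adjoint; a standard self-adjointness criterion (e.g.\ commutator with $\langle D\rangle$) makes it essentially self-adjoint on its natural domain since $d$ is homogeneous of degree~$1$ with strictly positive derivative of order $-1$ error behavior. By construction, $D$ is a pseudodifferential operator of order $1$ with principal symbol $d$, so the Sobolev scale $\cH^s$ associated to $D$ coincides with the usual Sobolev scale.

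Next I would check the $n$-smoothness of $H$ with respect to $D$ for every $n$. Since $H$ has order $0$ and $D$ has order $1$, the symbolic calculus gives
\[
\mathrm{order}([H,D]) \leq 0 + 1 - 1 = 0,
\]
and by induction every iterated bracket $B_k = [B_{k-1},D]$ remains of order $0$, hence is bounded on $L^2(X,dx)$. The principal symbol of $B_1 = i[H,D]$ is the Poisson bracket $\{h,d\}$, which by the escape property is strictly positive on the compact energy shell $Z = \Sigma_{\omega_0}/\R_+^*$. By homogeneity of degree $0$ and compactness of $Z$, there exist $\alpha > 0$ and a conic neighborhood $V$ of $\Sigma_{\omega_0}$ in $T^*X\setminus 0$ on which $\{h,d\} \geq 2\alpha$.

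Now I would derive the Mourre commutator estimate \eqref{commutator-hyp}. Choose $\bar\chi \in C_c^\infty(\R)$ supported in a small interval $J$ around $\omega_0$ so that $h^{-1}(J) \subset V$, and take $\chi \in C_c^\infty(\R)$ with $\chi\bar\chi = \bar\chi$. The principal symbol of the order $0$ operator
\[
A := \chi(H)\, B_1\, \chi(H) - \alpha\, \bar\chi(H)
\]
equals $\chi(h)^2 \{h,d\} - \alpha\,\bar\chi(h)^2$. On $\mathrm{supp}(\bar\chi(h)) \subset V$ we have $\chi(h) = 1$ and $\{h,d\} \geq 2\alpha$, so this symbol is $\geq \alpha\,\bar\chi(h)^2 \geq 0$ there; elsewhere it is non-negative because $\bar\chi(h) = 0$. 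Sharp G{\aa}rding's inequality then gives $A \geq R$ for some operator $R$ of order $-1$, which is compact on $L^2(X,dx)$ since $X$ is compact. This yields
\[
\chi(H) B_1 \chi(H) \geq \alpha\, \bar\chi(H) + K
\]
with $K$ compact, i.e.\ \eqref{commutator-hyp}. Since $H$ is $n$-smooth with respect to $D$ for every $n$, the full conclusions of Theorems~\ref{mourre-thm} and~\ref{mourre-thm2} apply near $\omega_0$ for any $n$.

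The only delicate points I anticipate are the translation of the symbolic positivity into the operator inequality (handled cleanly by sharp G{\aa}rding combined with the fact that order $-1$ operators are compact on the compact manifold $X$), and the verification that the cut-offs $\chi(H)$ and $\bar\chi(H)$, defined functionally, admit principal symbols $\chi(h)$ and $\bar\chi(h)$; this follows from the Helffer--Sj{\"o}strand formula applied to the bounded self-adjoint operator $H$ with smooth principal symbol.
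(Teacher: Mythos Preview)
Your approach is essentially identical to the paper's: quantize the escape function, use the symbolic calculus to see that iterated brackets stay in order~$0$, and apply sharp G{\aa}rding to pass from symbolic positivity to the Mourre estimate with a compact remainder. Two small slips are worth fixing. First, the escape function $d$ is by definition only given on $\Sigma_{\omega_0}$, so before you can write $D=\mathrm{Op}^W(d)$ you must extend $d$ to a smooth homogeneous degree~$1$ symbol on all of $T^\star X\setminus 0$ with $\{h,d\}\geq c>0$ on a full conic neighborhood $\{|h-\omega_0|\leq a\}$; the paper makes this extension step explicit. Second, in your positivity check you only arranged $h^{-1}(\mathrm{supp}\,\bar\chi)\subset V$, but outside $\mathrm{supp}(\bar\chi(h))$ the symbol of $A$ is $\chi(h)^2\{h,d\}$, which need not be nonnegative unless $\mathrm{supp}\,\chi$ is \emph{also} contained in $(\omega_0-a,\omega_0+a)$; just choose both cutoffs supported there. (Also, the principal symbol of $\bar\chi(H)$ is $\bar\chi(h)$, not $\bar\chi(h)^2$.) With these adjustments your argument matches the paper's.
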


\begin{proof} We first extend $d$ as a smooth function $d_1$ homogeneous of degree $1$ on 
$T^\star X \setminus 0$ which satisfies $\{ h,d_1 \} \geq c >0 $  in some conical neighborhood
$U:=\{ |h-\omega _0| \leq a \} $  of $\Sigma _{\omega _0}$.
And we choose a self-adjoint pseudo-differential operator $D$ of principal symbol $d_1$.

\medskip
We then choose  $\chi,~\bar \chi $ as in Theorem \ref{mourre-thm}, supported in $]\omega_0 -a, \omega _0+a[$ with $\bar \chi(\omega _0)> 0$.
We have then
$$\chi (h) \{  h, d_1 \}  \chi(h )\geq c\bar \chi  (h) .$$
The operators $\chi (H) $ (resp. $\bar \chi(H)$) are pseudo-differential operators of degree $0$ with principal symbols
$\chi (h)$ (resp. $\bar \chi (h)$).
The property (\ref{commutator-hyp}) is  obtained from the sharp 
 G\r{a}rding's inequality (see \cite{Fo89}
pp 129--136):
 \begin{Thm}[Sharp G\r{a}rding] \label{strong-G}
 Let $B$ be a self-adjoint pseudo-differential operator
of degree $0$ on a compact manifold, with  principal symbol $b\geq 0$.
 Then  $$B\geq R $$
  for some operator $R$ of order $-1$.
  \end{Thm}
  It follows then  that
$\chi(H) [H,D] \chi(H) \geq \bar \chi (H)+ R $
where $R$ is a pseudo-differential operator of degree $-1$,  hence compact.

\medskip
The assumptions  on  the iterated brackets follow from the fact that the bracket of
a pseudo-differential operator of degree $0$ with a pseudo-differential operator of degree $1$ is
  a pseudo-differential operator of degree $0$, and hence bounded.  
\end{proof}

\subsection{Characterization of the spectral density}

\begin{Prop}\label{uinfty-prop} Under the assumptions of Theorem \ref{main-thm}, the solution to (\ref{forced}) can be decomposed as
\[ u(t)= 
  e^{-i\omega_0 t} u_\infty + b(t)+\epsilon (t) \]
where
\begin{itemize}
 \item $b(t)$ is a bounded oscillating  function of $t$ with values in $L^2$
whose inverse Fourier transform is supported in
$\sigma _{pp}(H) \cup \sigma _{\rm sing}(H)$, 
\item   $\epsilon (t) $ tends  to $0 $ in $\cH^{-1/2-0}$,
\item and $u_\infty = \left( {H}-\omega_0 - i0 \right)^{-1} f $ is such that $Qu_\infty \in L^2$ for any localization operator whose microlocal support lies in the basin of repulsion of an unstable cycle. 
\end{itemize}
\end{Prop}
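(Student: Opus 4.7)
The plan is to combine the Mourre theorems of Section~\ref{mourre-sec} with Lemma~\ref{lem-int}, and then use the construction of the conjugate operator from Proposition~\ref{escape-prop} to extract the microlocal information. First I would expand the explicit Duhamel representation
\[
e^{i\omega_0 t}u(t)=\frac{1-e^{-it(H-\omega_0)}}{H-\omega_0}f=\int\frac{1-e^{-its}}{s}\,d\nu(s),
\]
where $\nu$ denotes the spectral measure of $f$ with respect to $H-\omega_0$. Under hypotheses $(M0)(M1)(M2)$, Proposition~\ref{escape-prop} provides an escape function, so Theorem~\ref{mourre-thm} applies on a neighborhood of $\omega_0$. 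Since $\omega_0\notin\sigma_{pp}(H)$, part $(iii)$ of that theorem tells us that near $s=0$ the density of $\nu_{\mathrm{ac}}$ is H\"older continuous with values in $\cH^{-1/2-0}$, while the singular part $\nu_{\mathrm{sing}}$ is supported in a compact set away from $0$, contained in $(\sigma_{pp}(H)\cup\sigma_{\mathrm{sing}}(H))-\omega_0$, and takes values in $L^2$ since genuine eigenvectors and singular spectral components are themselves $L^2$.

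Then I would apply Lemma~\ref{lem-int} with $\cB=\cH^{-1/2-0}$ and $\cB_0=L^2$. It yields
\[
e^{i\omega_0 t}u(t)=u_\infty+\tilde b(t)+\tilde\epsilon(t),
\]
with $u_\infty=\langle(s-i0)^{-1}|d\nu(s)\rangle=(H-\omega_0-i0)^{-1}f$, with $\tilde\epsilon\to 0$ in $\cH^{-1/2-0}$, and with $\tilde b$ bounded in $L^2$ whose inverse Fourier transform is supported in $\supp(\nu_{\mathrm{sing}})$. Multiplying back by $e^{-i\omega_0 t}$ shifts the Fourier support of $b(t)=e^{-i\omega_0 t}\tilde b(t)$ into $\sigma_{pp}(H)\cup\sigma_{\mathrm{sing}}(H)$ as claimed. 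This part is bookkeeping once one has identified the correct Banach space scale furnished by Mourre theory.

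The heart of the matter is the last bullet: the microlocal regularity of $u_\infty$ on basins of repulsion of unstable cycles. I would apply Theorem~\ref{mourre-thm2}: since $f\in C^\infty(X)$, the vector $|D+i|^{s}f$ is in $L^2$ for every $s$, so the bound $\|P_D^-(H-\omega_0-i0)^{-1}|D+i|^{-s}\|\leq C$ yields $P_D^-u_\infty\in L^2$. To transfer this to the microlocal statement, recall that the principal symbol $d_1$ of $D$ comes from the construction in Proposition~\ref{escape-prop} as a sum $\sum_\gamma\lambda_\gamma\eta_\gamma\phi_\gamma(y_\gamma)$; in the basin of repulsion of an unstable cycle $\gamma_u$ one has $\lambda_{\gamma_u}<0$, and by shrinking the transverse cutoffs $\phi_\gamma$ around the remaining (stable) cycles, one can arrange $d_1\leq -c<0$ on the conic region in question. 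Given a localization operator $Q$ microlocally supported there, pick $\chi\in C^\infty(\R,[0,1])$ with $\chi\equiv 1$ on $(-\infty,-c/2]$ and $\supp\chi\subset(-\infty,0)$. Then $\chi(D)$ is a zero-order pseudo-differential operator whose principal symbol equals $1$ on $WF(Q)$, hence $Q=\chi(D)Q+R$ with $R$ smoothing. Writing
\[
Qu_\infty=\chi(D)P_D^-Qu_\infty+\chi(D)[Q,P_D^-]u_\infty+Ru_\infty,
\]
we control the first term by Mourre, and the last two by the standard symbolic calculus (the commutator $[Q,P_D^-]$ has symbol vanishing to first order on the diagonal and is tamed by the weights already present, while $R$ is smoothing).

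The main obstacle is precisely this transfer from the abstract spectral bound $P_D^-u_\infty\in L^2$ to a genuine microlocal statement on a basin of repulsion. The subtlety is that $D$ is a globally defined self-adjoint operator, whereas the sign of its principal symbol and the identification of basins are local statements in $T^\star X$. Making the commutator step above rigorous requires a careful functional calculus for $\chi(D)$ combined with the pseudo-differential nature of $Q$; an alternative route, which I would fall back on if this commutator argument proves too delicate, is to invoke the radial/propagation-of-singularities estimates for $(H-\omega_0-i0)^{-1}$ alluded to in the Dyatlov--Zworski approach \cite{DZ}, which are designed exactly to propagate smoothness from one end of a stable/unstable manifold to the other.
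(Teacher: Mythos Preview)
Your overall strategy is exactly the paper's: feed the H\"older regularity of the spectral density coming from Theorem~\ref{mourre-thm} into Lemma~\ref{lem-int} with $\cB=\cH^{-1/2-0}$, $\cB_0=L^2$, and then use Theorem~\ref{mourre-thm2} together with the sign of the escape function near an unstable cycle for the last bullet. The first two parts match the paper's proof line for line.

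In the last bullet, however, you take an unnecessary detour. You insert $\chi(D)$ on the \emph{left} of $Q$ and are then forced to control the commutator $[Q,P_D^-]$. Your stated justification (``symbol vanishing to first order on the diagonal, tamed by the weights'') does not apply: $P_D^-$ is a rough spectral projector, not a classical pseudo-differential operator, so there is no symbolic calculus for that commutator. The paper avoids this entirely by observing that, since the principal symbol $d$ of $D$ is strictly negative on the microlocal support of $Q$, one has directly
\[
Q = QP_D^- \quad \text{modulo smoothing operators.}
\]
(Equivalently: take $\psi\in C^\infty(\R)$ with $\psi\equiv 1$ on $[0,\infty)$ and $\psi\equiv 0$ on $(-\infty,-c/2]$; then $Q\psi(D)$ is smoothing because $\psi(d)=0$ on $WF(Q)$, and $I-P_D^- = \psi(D)(I-P_D^-)$, so $Q(I-P_D^-)=Q\psi(D)(I-P_D^-)$ is smoothing.) With this in hand,
\[
Qu_\infty = QP_D^-\,(H-\omega_0-i0)^{-1}|D+i|^{-s}\bigl(|D+i|^s f\bigr) + (\text{smoothing})u_\infty \in L^2(X),
\]
with no commutator to estimate and no need for the Dyatlov--Zworski fallback. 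In fact this same argument shows your commutator is smoothing (write $[P_D^-,Q]=P_D^- Q(I-P_D^-)-(I-P_D^-)QP_D^-$), so your route can be salvaged; but it is simpler just to put the cutoff on the right of $Q$ from the start, as the paper does.
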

Note that this statement is true as soon as we have an escape function for the energy $\omega _0$. 

\noindent
Let us insist on the fact that $u(t)$ stays in $L^2$ for any finite time. Its $L^2$ norm will in general converge to $+\infty$ as $t\to \infty$. 
This asymptotics says  that $u(t)e^{i\omega_0 t}$ converges in $\cH^{-1/2-0}$ as $t\to +\infty$, modulo a  function in $L^\infty_t (L^2)$. 

\begin{proof}
Proposition \ref{uinfty-prop} is now a simple application of Mourre's theory.

\medskip
\noindent
$\bullet$
Theorem \ref{mourre-thm} provides the existence of a closed interval $I$ which contains $0$ such that
\begin{itemize}
\item[(i)] $H-\omega_0$  has no eigenvalue in $I$;
\item[(ii)]   the resolvent
$({H}-\omega_0 - z)^{-1} $ defined for $\Im (z) \ne 0$ admits boundary values at the points  $\omega \in I $, which are H\"older continuous
 in the spaces $O_{1/2+0}$. 
\end{itemize} 
We deduce  that the spectral density of the smooth forcing $f$
  which is given by the Sokhotski-Plemelj formula
 $$ \nu(\lambda) = \frac{1}{ 2i\pi } (({H}-\omega_0- \lambda- i0 )^{-1} f - ({H}-\omega_0- \lambda +  i0 )^{-1} f) \,,$$
 is H\"older continuous with respect to $\lambda$~: $\nu\in C^{0,\mu} ( I, \cH^{-1/2-0})$.
 
In other words,  $\nu $ satisfies the assumptions of the Lemma \ref{lem-int}
with $\cB= O_{-1/2-0}$ and $\cB_0=L^2$. 
The decomposition of the oscillating integral  obtained in Lemma \ref{lem-int}  gives  therefore the expected behavior for $u(t)$.

\medskip
\noindent
$\bullet$
Theorem \ref{mourre-thm2} shows that the resolvent $(H-\omega_0-\lambda - i0)^{-1}$ has no singularity  in the vicinity of unstable cycles.

Let $\gamma$ be an unstable cycle  in $\Sigma_{\omega_0}$, $B_\gamma$ its basin of  repulsion.
Denote by $Q$ a localization operator whose microlocal support lies in a conic neighborhood of the cone
$\Gamma $  generated by $B_\gamma$.
Going back to the construction of the escape function close to $\gamma$, we see that in local coordinates
$$d_{\gamma}( x,y,\xi, \eta) =\lambda  \eta \phi_{\gamma } (y) $$
with $\lambda <0$. In particular $d _{\gamma}$ is strictly negative near $\Gamma $. 
In other words, this means that 
$$  Q = QP_D^-\,.$$ modulo smoothing operators. 

We then conclude that for any $s>1$
$$ \begin{aligned}
Qu_\infty &=   QP_D^- (H-\omega_0-\lambda - i0)^{-1} f \\
& =   QP_D^- (H-\omega_0 -\lambda - i0)^{-1} |D+i|^{-s}\left(  |D+i|^{s}  f\right) \in L^2 (X)\,,
\end{aligned}
$$
which concludes the proof.
\end{proof}

\begin{Rmk}By using a  stronger version of Theorem \ref{mourre-thm2}
 (see \cite{mourre2}  Corollary I.3, Equation (II)), one can  get a  description of the evolution, and not only of the resolvent. In particular, one can prove that
$u(t)$ stays bounded in $L^2$ near the unstable cycles as $t\to +\infty$. 
\end{Rmk}


\section{Precise description of $u_\infty $}\label{wavefront-sec}

Thanks to the normal form defined in Lemma \ref{normalform-lem}, one can actually obtain  a more precise description of $u_\infty$ near any closed leaf $\gamma$. In particular this gives almost explicit formula for the wavefront.
 
\begin{Thm}\label{theo:OIF}
 Under the assumptions (M0), (M1) and (M2), for any $f\in C^\infty(X)$,
the distribution $u_\infty = (H-\omega_0-i0)^{-1} f $ is smooth outside the projections on $X$ of the stable closed leaves.
If $\gamma  $ is such a stable closed leave generating a cone $\Gamma $, then $u_\infty $ is,
microlocally near $\Gamma$, a Fourier integral distribution of order $0$,
whose conic Lagrangian manifold is $\Gamma $ and  whose principal symbol  is a non-homogeneous symbol of order $0$  on $\Gamma $ 
 invariant by the dynamics of $X_h$.

In more explicit  terms, denote by $(x,y,\xi, \eta )$ the coordinates associated with the normal form on the basin $B_\gamma$  
so that  $\gamma  $ projects onto $y=0$.
Let $\chi \in C^\infty_c(\R)$ be a test function which is identically equal to 1 near $0$.
Then the Fourier transform with respect to $y$ of $\chi u_\infty$ satisfies
$$\widehat{\chi u_\infty}(x,\eta)\equiv  \indc_{\eta\geq 0} \sum_{j=0} ^\infty u_{j} (x,\eta )$$ where   $ u_{j}$ is a symbol of degree $-j$.
Furthermore the half density  $u_0\sqrt{dx d\eta }$ is  invariant by the restriction of the Hamiltonian dynamics $X_h$ to $\Gamma $.
\end{Thm}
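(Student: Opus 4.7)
The first claim (smoothness off the projections of stable leaves) is essentially the wavefront statement of Proposition~\ref{uinfty-prop}, sharpened: away from the cones generated by stable cycles, one still has an escape function for the reversed dynamics, so Theorem~\ref{mourre-thm2} (applied with $P_D^+$ in place of $P_D^-$, or equivalently to $-H$) shows that $Qu_\infty\in L^2$ for any pseudo-differential cutoff $Q$ microlocalized outside these cones. Iterating with powers of $D$ upgrades $L^2$ to $C^\infty$ on such a conic chart, which, after projection to $X$, gives smoothness outside the projections of the stable closed leaves. The bulk of the work is therefore the local description of $u_\infty$ in a conic neighborhood $U_\gamma$ of $\Gamma$.

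The plan near $\gamma$ is to use Lemma~\ref{h0-lem} to replace $H-\omega_0$ by the model operator. One conjugates $H-\omega_0$ by an elliptic Fourier integral operator implementing the canonical transformation to the $(x,y,\xi,\eta)$ normal form, divides by the elliptic factor $\Phi^2$, and uses the vanishing sub-principal symbol in $(M0)$ to arrange that, modulo a pseudo-differential operator of order $-\infty$,
\[
H-\omega_0 \;\equiv\; E \cdot H_0, \qquad H_0={\rm Op}\!\left(\tfrac{\xi}{\eta}-\lambda y\right),
\]
with $E$ elliptic of order $0$. Then $u_\infty$ solves $H_0 u = g + O_{C^\infty}$, with $g$ smooth, selected by the $+i0$ rule. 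Taking the partial Fourier transform in $y$, writing $\hat u(x,\eta)$, the equation becomes the transport equation
\[
\tfrac{1}{\eta}\,\partial_x \hat u + \lambda\,\partial_\eta \hat u \;=\; i\hat g,
\]
whose characteristics are precisely the Hamiltonian trajectories $\dot x = 1/\eta$, $\dot\eta=\lambda$ on the cone $\Gamma=\{y=0\}$. The $+i0$ prescription selects the solution that vanishes on the incoming branch: since $\dot\eta=\lambda>0$ on a stable cycle, this forces support in $\eta\geq 0$ (the factor $\mathbf{1}_{\eta\geq 0}$ in the statement) and integrates the source along characteristics from $\eta=0^+$ up to $\eta$. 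The resulting $\hat u_0(x,\eta)$ is classical in $\eta$ of order $0$, smooth in $x\in\T$, and its half-density $u_0\sqrt{dx\,d\eta}$ is preserved by the flow precisely because the subprincipal symbol of $H_0$ vanishes and the symplectic volume restricted to $\Gamma$ is $dx\wedge d\eta$.

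Lower-order terms are built by the standard iterative procedure for Lagrangian distributions. Having constructed $u_0$ of order $0$, compute $H_0\,{\rm Op}(\mathbf{1}_{\eta\geq 0}u_0)-g$; by symbolic calculus this is a symbol of order $-1$ whose principal symbol gives the inhomogeneous term for the next transport equation, from which $u_1$ is obtained by the same integration, and so on. A Borel summation $\hat u \sim \mathbf{1}_{\eta\geq 0}\sum_{j\geq 0} u_j$ produces a candidate Lagrangian state whose image under $H-\omega_0$ differs from $f$ by a smoothing term. Proposition~\ref{uinfty-prop}, together with the uniqueness of the limit absorbing the smoothing remainder into a function which is microlocally smooth near $\Gamma$, identifies this asymptotic sum with $u_\infty$ modulo $C^\infty$.

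The main obstacle is the apex $\eta=0$ of the cone, where $\Gamma$ fails to be a smooth conic Lagrangian submanifold and the symbol $\xi/\eta$ is singular. The characteristic function $\mathbf{1}_{\eta\geq 0}$ is forced by the $+i0$ regularization, and one must check that the transport integrals converge as $\eta\to 0^+$ (they do, because the source $g$ comes from a smooth $f$ on $X$ whose partial Fourier transform is rapidly decaying in $\eta$) and that the full Lagrangian state thus constructed has the correct transition behavior across $\eta=0$, compatible with the fact that the opposite cone $\eta<0$ carries the stable direction of the reversed flow and is microlocally smooth. This is where the role of the unstable cycles, handled by Theorem~\ref{mourre-thm2}, is used a second time to rule out any contribution from $\eta<0$ and close the argument.
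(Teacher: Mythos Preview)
Your overall strategy---reduce to the model $H_0$ near each cycle, then solve a transport equation on the cone $\Gamma$---is close in spirit to the paper's, but there is a genuine gap at the step ``iterating with powers of $D$ upgrades $L^2$ to $C^\infty$''. Theorem~\ref{mourre-thm2} only gives $P_D^- u_\infty\in L^2$; there is no mechanism in Mourre theory alone, and certainly not by commuting with powers of $D$, that bootstraps this to microlocal smoothness. The operator $H-\omega_0$ is characteristic on $\Sigma_{\omega_0}$, so no elliptic regularity is available, and commuting with $D$ produces order-$0$ errors that do not gain regularity.

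The paper fills this gap with a structural dichotomy that you do not prove: it first classifies \emph{all} microlocal solutions of $(H-\omega_0)u\in C^\infty$ near any cycle $\gamma$ as
\[
u\equiv A\Bigl(\sum_{k\in\Z} v_k\,(y+i0)^{-1+ik/\lambda}e^{ikx}\Bigr),
\]
obtained by conjugating to $H_0$ (Proposition~\ref{pdonormal-prop}) and then solving the ODE in $y$ mode-by-mode (Lemma~\ref{prop:sol}). The crucial observation (Lemma~\ref{growth-lem}) is that none of the building blocks $(y+i0)^{-1+i\alpha}$ lies in $L^2_{\rm loc}$, so a solution that is microlocally $L^2$ near a cycle must have all $v_k=0$ and hence be smooth there. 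Combining this with Proposition~\ref{uinfty-prop} gives smoothness near the unstable cycles, and then invariance of the wavefront set under $X_h$ propagates smoothness throughout the unstable basins. Your proposal skips this dichotomy entirely.

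The same gap reappears in your identification step. After constructing a parametrix $\tilde u$ with $(H-\omega_0)\tilde u-f=r\in C^\infty$, you write $u_\infty=\tilde u+(H-\omega_0-i0)^{-1}r$ and assert the last term is microlocally smooth near $\Gamma$. But $r$ smooth does \emph{not} imply $(H-\omega_0-i0)^{-1}r$ is smooth near the stable cones---indeed $u_\infty$ itself is of this form and is singular there. What you actually need is that the difference $u_\infty-\tilde u$, being a microlocal solution of the homogeneous equation with some additional property selected by the $+i0$ rule, must be smooth; and the only route to that is again the explicit classification above. The paper avoids this issue by never constructing a parametrix: once it knows the coefficients $(v_k)$ at the \emph{stable} cycles are rapidly decreasing (forced by smoothness in the overlapping unstable basin), the Fourier transform $\sum_k \gamma_{k/\lambda}v_k\,\eta_+^{-ik/\lambda}e^{ikx}$ is manifestly a classical symbol of order $0$ supported in $\eta\ge 0$, which is exactly the Lagrangian-distribution statement.
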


Once again the challenge here is to transfer informations we have on the classical Hamiltonian dynamics, especially the local representation of $h$ near the closed leaves, to get informations on the solutions to $ (H-\omega_0) v \in C^\infty(X)$. We will then need a counterpart of the normal form at the level of operators.

\subsection{Pseudo-differential normal form}\label{ss:pseudo-normal}

Let $\gamma$ be a cycle on $\Sigma_{\omega_0}$, and $\Gamma$ the cone associated to $\gamma$.
In coordinates associated with the normal form defined in Lemma \ref{normalform-lem},
\[ \Gamma :=\{ (x,y,\xi,\eta) \,/\,  \xi =\lambda y\eta \hbox{ and } \eta >0  \} \subset T^\star \left( \T_x \times \R_y \right)
\setminus 0  \] 
By Lemma \ref{h0-lem},  there exists a conic neighborhood $U$ of the cone  generated by the basin $B_\gamma$   
\begin{equation}
\label{U-def}
 U :=\{ (x,y;\xi,\eta )| ~ |\xi |< c \eta \} \subset T^\star X  \setminus 0  
 \end{equation}
such that   the 
 principal symbol    of $H$ can be written locally on $U$
$$\begin{aligned}
h(x,y,\xi,\eta)&-\omega_0=\Phi^2(x,y,\xi, \eta) h_0(x,y,\xi,\eta) \\
& \hbox{ with } h_0(x,y,\xi,\eta)= \frac{\xi}{\eta}-\lambda  y \hbox{ for some }\lambda \neq 0\,.
\end{aligned}
$$
We further know that the sub-principal symbol of $H$ vanishes.
  
We would therefore  like    to define a reference operator $H_0$ of principal symbol $h_0$ and zero subprincipal symbol. The difficulty is that $h_0 $ is not an admissible  symbol, being not smooth at $\eta =0$.
We therefore choose for $H_0 $ any pseudo-differential operator of degree $0$ 
 whose full symbol is $h_0 $ in the cone $U$ and which is elliptic outside
$U$. Note that the symbol of $H_0$ cannot be  real valued since the sign of $h_0$ changes  on $\Gamma $.  
 
\medskip

We then have the following   pseudo-differential normal form result.

\begin{Prop}\label{pdonormal-prop}
Let $H$ be a   self-adjoint pseudo-differential operator, with principal symbol
$h = h_0 \Phi^2$ in $U$   and vanishing sub-principal symbol. There exists an
elliptic pseudo-differential operator $A$ of principal symbol $1/\Phi$
so that 
\[ A^\star H{ A} -   {H_0}=R   \]
where   $R$  is a smoothing operator when acting on functions which are microlocalized on $U$, i.e.  its full symbol 
is fast decaying in $U$.  
\end{Prop}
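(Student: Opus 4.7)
The plan is the standard pseudo-differential normal form: first match the principal symbol of $A$ to $1/\Phi$ in $U$, then iteratively remove lower-order obstructions by correcting $A$ with symbols of decreasing order, where each correction amounts to solving a cohomological equation that reduces on $\Gamma$ to a transport equation along the explicit flow of $X_{h_0}$.

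\medskip

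\emph{Initial step.} I would choose a smooth real positive symbol $a_0 \in S^0$ equal to $1/\Phi$ on a conic neighborhood of the closure of $U$ and bounded below by a positive constant globally, so that $A_0 := \mathrm{Op}^w(a_0)$ is self-adjoint and globally elliptic of order $0$. By the Moyal expansion, the full Weyl symbol of $A_0 H A_0$ agrees with $a_0^2 h = h_0$ at leading order in $U$. The order $-1$ term vanishes on $U$ for two reasons combined: the sub-principal symbol of $H$ is zero by (M0), and the symmetric placement of $A_0$ around $H$ kills the Moyal Poisson-bracket contribution via the identity $\{ah,a\}+a\{a,h\}=0$ for any real $a,h$. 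Since the full symbol of $H_0$ on $U$ is exactly $h_0$, this gives $A_0 H A_0 - H_0 \in \mathrm{Op}(S^{-2})$ on $U$.

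\medskip

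\emph{Iteration.} Suppose we have constructed real $a_{-j} \in S^{-j}$ for $j=2,\dots,k$ such that $A_k := \mathrm{Op}^w(a_0+a_{-2}+\cdots+a_{-k})$ satisfies $A_k H A_k - H_0 = \mathrm{Op}^w(r_{-k-1})$ modulo $\mathrm{Op}(S^{-k-2})$ on $U$, with $r_{-k-1}\in S^{-k-1}$. Adding a real correction $a_{-k-1}\in S^{-k-1}$ shifts the order $-k-1$ symbol by $2\Phi h_0\, a_{-k-1}$ plus known Moyal lower-order terms. Cancelling $r_{-k-1}$ thus reduces, away from $\Gamma$, to algebraic division by the elliptic factor $h_0$; on $\Gamma$, it reduces to a transport equation of the form
\[ X_{h_0}\, a_{-k-1}\rstr_{\Gamma} = g \]
for a known $g\in S^{-k-1}(\Gamma)$. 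Using the explicit flow
\[ x(t)=x_0+\tfrac{1}{\lambda}\log\!\big(1+\tfrac{\lambda t}{\eta_0}\big),\qquad \eta(t)=\eta_0+\lambda t \]
established in Section~2.3, one integrates $g$ in the time direction along which $\eta\to+\infty$. The decay $|g|\lesssim \langle\eta\rangle^{-k-1}$ combined with the linear growth of $\eta(t)$ ensures that the resulting primitive satisfies uniform $S^{-k-1}$ estimates on every subcone of $U$, and a standard cutoff-and-extension procedure glues the pieces into a global symbol.

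\medskip

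\emph{Conclusion and obstacle.} Borel-summing the $a_{-j}$ produces a symbol $a \sim \sum_{j\geq 0} a_{-j}$ in $S^0$ equal to $1/\Phi$ at leading order, and $A := \mathrm{Op}^w(a)$ is then elliptic and self-adjoint. The full symbol of $A^* H A - H_0 = A H A - H_0$ is fast decaying on every conic subset of $U$, so $R := A^* H A - H_0$ is smoothing when composed with any pseudo-differential localizer microsupported in $U$, as required. The main obstacle is the transport-equation step: one must integrate along $X_{h_0}$ in the correct direction of escape so that the primitive retains symbolic bounds uniformly as one approaches the cycle $\Gamma$ from within $U$, and must carefully track how the Moyal correction terms enter the cohomological equation at each order. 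This is exactly where the hyperbolicity of $\gamma$ (via $\lambda\neq 0$) and the explicit normal form of Lemma~\ref{normalform-lem} are used at the operator level, rather than merely at the level of the classical Hamiltonian.
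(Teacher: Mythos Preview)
Your initial step is correct and matches the paper: Weyl-quantizing $1/\Phi$ and using the vanishing sub-principal symbol of $H$ indeed yields $A_0^\star H A_0 \equiv_U H_0 + P_{-2}$.

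The iterative step, however, has a genuine gap. When you add a correction $a_{-k-1}\in S^{-k-1}$ to the symbol of $A_k$, the change in the full symbol of $A_k^\star H A_k$ at order $-(k+1)$ is exactly what you wrote, $2\Phi h_0\, a_{-k-1}$, and this term \emph{vanishes on $\Gamma$}. There is no transport term $X_{h_0}a_{-k-1}$ at this order: the Poisson-bracket contributions from the Moyal expansion are one order lower, in $S^{-k-2}$. Hence the equation you actually face at order $-(k+1)$ is the algebraic division $a_{-k-1}=-r_{-k-1}/(2\Phi h_0)$, which is singular on $\Gamma$ unless $r_{-k-1}$ already vanishes there, and nothing in your construction forces that. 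The ``transport equation on $\Gamma$'' you invoke simply does not arise from the additive mechanism you set up.

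The paper avoids this by using multiplicative corrections rather than additive ones: after the first step one conjugates $H_0+P_{n+1}$ by the unitary $e^{iA_n}$ with $A_n$ self-adjoint of order $-n$. The leading effect is the commutator $i[H_0,A_n]$, whose principal symbol is the genuine Poisson bracket $\{h_0,a_n\}$ of order $-(n+1)$. This is where the cohomological equation
\[
\{h_0,a_n\}+p_{n+1}=0 \quad \text{in } U
\]
actually comes from. Note also that it must be solved on all of $U$, not merely on $\Gamma$; the paper does this by Fourier decomposition in $x$ and solving the resulting singular ODE in $(y,s)=(y,\xi/\eta)$ via the explicit integral
\[
\alpha_{n,k}(y,s)=\frac{1}{\lambda}\int_0^1 \tau^{\,n-1-ik/\lambda}\,\rho_{n,k}\!\left(y+\tfrac{(\tau-1)s}{\lambda},\,\tau s\right)d\tau,
\]
which automatically produces a smooth symbol of the correct order. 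Your flow-integration idea is morally the same object, but restricted to $\Gamma$ it would not suffice even if the transport equation were correctly derived.
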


\begin{proof} 
We proceed by induction. It is enough to consider symbols in $U$. 
In what follows $\equiv_U  $ means that the difference is smoothing in $U$. 

\noindent

Defining ${A}_0= {\rm Op}^W (1/\Phi)$, we get
first
\[ A_0^\star H {A}_0  \equiv _U  {H_0} + {P}_{2}  \]
for some  pseudo-differential operator  ${P}_{2}$ of order -2. 
This uses the fact that the sub-principal symbol of ${A}_0 ^\star {H} {A}_0$ vanishes; this follows from
the identity which extends the formula for the principal symbol of the commutator:
\[ {\rm sub } (PQ )={\rm sub }(P) \sigma _p (P)+\sigma _p (Q) {\rm sub }(P) +
\frac{1}{2i}\{  \sigma _p (P),\sigma _p (Q)\} \]
and the fact that ${\rm sub }(A_0)=0$ since we have used the Weyl quantization.

We then have   to find, for any integer $n\geq 1$,    some self-adjoint pseudo-differential operators $A_n$
 and  $P_{n+2}$ of respective orders $-n$ and  $-(n+2)$ so that
\[ e^{-iA_n }({H_0} + {P}_{n+1} )e^{iA_n}\equiv_U{H_0}+ P_{n+2 }\]
This gives the following co-homological equation in $U$ for the  principal symbol $a_n$ of $A_n$
\[ \{ h_0 , a_n \} + p_{n+1} =0 \hbox{ in } U\,.\]

In order to solve this equation,
 we decompose $a_n$ and $p_{n+1}$  into Fourier series with respect to  $x$. As we expect $a_n$ to be homogeneous of degree $-n$, we set
\[ a_n=\eta ^{-n} \sum_{k\in \Z}\alpha_{n,k} \left(y, \frac{\xi }{\eta }\right) e^{ikx }, ~p_{n+1}=
\eta ^{-(n+1)} \sum_{k\in \Z}\rho_{n,k} \left(y, \frac{\xi }{\eta }\right) e^{ikx },\]
and we get, using the variable $s=\xi / \eta $:
\[ s \left( \partial _y \alpha_{n,k} + \lambda \partial _s \alpha_{n,k}  \right) +(\lambda n -ik ) \alpha_{n,k} =\rho_{n,k} \]
This singular differential equation admits a unique smooth solution in $\R^2_{y,s}$
 given by
\[ \alpha_{n,k} (y,s)=\frac{1}{\lambda }\int_0^1 \tau ^{n-1-\frac{ik}{\lambda }}\rho_{n,k} \left(y+ \frac{(\tau -1)s}{\lambda },  \tau s    \right) d\tau \]
Furthermore, since the partial derivatives of $\alpha_{n,k}$ are  controlled by 
those of $\rho_{n,k}$ uniformly in $k$, we obtain that the solution $a_n$ has the same regularity as $p_{n+1}$.  
\end{proof}

\subsection{Solutions of  $(H-\omega_0) u \in C^\infty$  near the closed leaves}

The general theory tells us that the wavefront set of any solution to $(H-\omega_0) u \in C^\infty$ is contained in the characteristic manifold
$\Sigma _{\omega _0} $ and is invariant by the dynamics of $X_h$. The idea is then to use the reference operator $H_0$ to get an explicit formula for the singularity. 

\begin{Prop}\label{cor:solutions}
Any distribution $u$ so that  $(H-\omega_0) u$ is smooth,
is given, microlocally near each
closed cycle $\gamma$ in the normal form chart,  by an expansion of the form
\[ u=A \left(  \sum_{k\in \Z} v_k(y+ i0) ^{-1+ ik/ \lambda }e^{ikx} \right) \]
where $A$ is  an elliptic pseudo-differential operator of degree $0$, and the growth of  $(v_k)$ is controlled by (\ref{growth-condition}).

In particular, such a solution is smooth microlocally near a closed cycle $\gamma$
as soon as it is microlocally  $L^2$ near that cycle. 
\end{Prop}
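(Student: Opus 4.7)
The plan is to conjugate the equation $(H-\omega_0)u\in C^\infty$ to the model equation $H_0 v \in C^\infty$ via the pseudo-differential normal form of Proposition~\ref{pdonormal-prop}, solve the model equation explicitly by separation of variables in the $(x,y)$ chart of Lemma~\ref{normalform-lem}, and transfer the answer back through the elliptic conjugator $A$.

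\textbf{Reduction to the model.} Proposition~\ref{pdonormal-prop} yields an elliptic pseudo-differential operator $A$ of principal symbol $1/\Phi$ with $A^{\star}(H-\omega_0)A = H_0 + R$ microlocally in $U$, where $R$ is smoothing. Since $A$ is microlocally elliptic on $\Gamma$, we may set $v:=A^{-1}u$ microlocally on $U$, so that $(H-\omega_0)u\in C^\infty$ translates to $H_0 v\in C^\infty$ microlocally on $U$. In the cone $\eta>0$ (which contains $\Gamma$), $H_0$ has principal symbol $h_0=\xi/\eta-\lambda y$ and vanishing sub-principal symbol.

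\textbf{Model equation.} Apply a partial Fourier transform in $y$. In the $(x,\eta)$ representation $H_0$ acts, modulo smoothing remainders, as $\eta^{-1}D_x - i\lambda\,\partial_\eta$, and expanding $\widehat{v}(x,\eta)=\sum_{k\in\Z}\alpha_k(\eta)\,e^{ikx}$ decouples the equation into a family of first-order singular ODEs
\begin{equation*}
\Bigl(\frac{k}{\eta}-i\lambda\,\partial_\eta\Bigr)\alpha_k(\eta)\;\in\;\cS(\R^+_\eta),
\end{equation*}
whose homogeneous solutions on $\eta>0$ are the complex powers $\alpha_k(\eta)=v_k\,\eta^{-ik/\lambda}$. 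The inhomogeneous piece and the contribution from $\eta\leq 0$, which sit outside the microlocal region of interest, contribute only a smooth correction. Using the classical distributional identity $\mathcal{F}^{-1}_\eta\!\bigl[\indc_{\eta>0}\,\eta^{-ik/\lambda}\bigr](y) \propto (y+i0)^{-1+ik/\lambda}$ and absorbing constants into $v_k$, one obtains
\begin{equation*}
v\;\equiv\;\sum_{k\in\Z} v_k\,(y+i0)^{-1+ik/\lambda}\,e^{ikx}\qquad(\bmod\ C^\infty),
\end{equation*}
and then $u=Av$ is precisely the announced formula. The growth condition~(\ref{growth-condition}) on $(v_k)$ is the polynomial bound in $|k|$ (corrected by the Gamma factors produced by the Fourier inversion) required to ensure convergence of the series in the sense of distributions.

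\textbf{$L^2$-criterion and main obstacle.} Since $\bigl|(y+i0)^{-1+ik/\lambda}\bigr|=|y|^{-1}$ near $y=0$, each elementary solution fails to be $L^2_{\rm loc}$. By Plancherel in $(x,y)$ the microlocal $L^2$-norm of $v$ on any conic neighborhood of $\Gamma$ reduces, via the Fourier representation above, to a positive multiple of $\sum_k|v_k|^2\int_0^\infty d\eta$, which is finite only when every $v_k$ vanishes; in that case $v$, and hence $u=Av$, is microlocally smooth. The main technical obstacle is to handle rigorously the singularity at $\eta=0$ together with the smoothing remainder $R$: one must introduce a symbolic cutoff in $\eta$ compatible with the normal-form chart, check that the small-$\eta$ and $\eta<0$ contributions remain smooth, and derive uniform polynomial bounds in $k$ on the $\alpha_k$'s (echoing the cohomological argument from the proof of Proposition~\ref{pdonormal-prop}) so that the Fourier series defines a genuine distribution on $\T_x\times\R_y$.
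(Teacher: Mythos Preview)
Your proposal is correct and follows the paper's strategy: conjugate via the pseudo-differential normal form of Proposition~\ref{pdonormal-prop}, solve the model equation $H_0 v\in C^\infty$ microlocally, then transfer back through the elliptic operator $A$. The only tactical difference is in how the model equation is solved. You pass to the $(x,\eta)$ representation and solve the singular first-order ODE $(k/\eta - i\lambda\,\partial_\eta)\alpha_k\in\cS$ on $\eta>0$, then invert; the paper instead left-composes $H_0$ with $\partial_y$ (which multiplies by $i\eta$ on the Fourier side and clears the $1/\eta$), obtaining the \emph{regular} operator $\partial_x-\lambda(y\partial_y+1)$, and solves the resulting ODE directly in $y$ to get $y_\pm^{-1+ik/\lambda}$, combined via Sokhotski--Plemelj into $(y\pm i0)^{-1+ik/\lambda}$ with the $(y-i0)$ branch eliminated by the wavefront constraint $\eta>0$. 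This trick dispatches exactly the $\eta=0$ obstacle you identify and makes the smoothness of the inhomogeneous piece transparent via an explicit integral formula; your Fourier-side route is equivalent but, as you note, needs more bookkeeping at $\eta=0$. The $L^2$-criterion is handled the same way in both: the elementary solutions have constant modulus in $\eta$ (equivalently $|y|^{-1}$ in $y$), so $L^2_{\rm loc}$ forces all $v_k=0$.
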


\medskip
\noindent
In order to establish this result, the first step is to solve (microlocally) the equation $H_0 v \in C^\infty$ for the reference operator $H_0$.

 \begin{Lem}\label{prop:sol}
Any  solution of $H_0 v \in C^\infty$, whose wavefront set  is contained in  the set $U$ defined by (\ref{U-def}),
  admits the following expansion, up to smooth terms,
\[ v (x,y) \equiv  \sum_{k\in \Z} v_k (y + i0) ^{-1+ ik/ \lambda }e^{ikx}\,.\] 
\end{Lem}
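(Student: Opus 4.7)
The plan is to reduce the equation $H_0 v \equiv 0 \pmod{C^\infty}$ to a family of explicit first-order ODEs by simultaneously Fourier-decomposing in $x$ (periodic) and $y$. Since the wavefront set of $v$ is confined to $U=\{|\xi|<c\eta\}$, the orientation of the cone selects the half-space $\eta>0$, so I work with the partial Fourier transform $\widehat v(x,\eta)$ of $v$ in the variable $y$, knowing that its extension to $\eta\leq 0$ is Schwartz-class modulo a $C^\infty$ error. Since $x\in\T$, I expand
\begin{equation*}
\widehat v(x,\eta)=\sum_{k\in\Z}\widehat v_k(\eta)\,e^{ikx}, \qquad \eta>0.
\end{equation*}

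In these variables, multiplication by $y$ becomes $i\partial_\eta$ and $D_x$ becomes multiplication by $k$ on the $k$-th mode, so the full Weyl symbol $h_0=\xi/\eta-\lambda y$ of $H_0$ on $U$ turns the equation $H_0v\equiv 0\pmod{C^\infty}$ into the family
\begin{equation*}
\Bigl(\tfrac{k}{\eta}-i\lambda\partial_\eta\Bigr)\widehat v_k(\eta)=r_k(\eta),\qquad \eta>0,
\end{equation*}
where the remainders $r_k$ are rapidly decreasing in $\eta$ with uniform control in $k$, coming from the microlocal smoothness of $H_0v$. The homogeneous solution is $\eta^{-ik/\lambda}$; applying variation of constants to a Schwartz $r_k$ yields another Schwartz function on $\eta>0$. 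Therefore
\begin{equation*}
\widehat v_k(\eta)\equiv c_k\,\eta^{-ik/\lambda}\,\mathbf{1}_{\eta>0}\quad(\text{modulo Schwartz in }\eta).
\end{equation*}

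To conclude, I invoke the classical identity (for the convention $\widehat f(\eta)=\int e^{-iy\eta}f(y)\,dy$)
\begin{equation*}
\widehat{(y+i0)^{-1+ik/\lambda}}(\eta)=C_k\,\eta_+^{-ik/\lambda},
\end{equation*}
where $C_k\neq 0$ depends only on $k/\lambda$. After absorbing $C_k$ into $c_k$, inverse Fourier transforming in $y$ and resumming in $x$ yield the claimed expansion $v(x,y)\equiv\sum_k v_k(y+i0)^{-1+ik/\lambda}e^{ikx}$ modulo $C^\infty$.

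The main obstacle is the uniform-in-$k$ bookkeeping needed to translate the microlocal hypothesis (wavefront set contained in $U$, $H_0v$ smooth) into genuine Schwartz bounds on the $r_k(\eta)$ so that the particular solutions produced by variation of constants really assemble into a $C^\infty$ error, and symmetrically into at most polynomial growth of the $c_k$ with respect to $k$ --- this is precisely what underlies the growth condition (\ref{growth-condition}) stated in the proposition. The behavior at $\eta=0$ is a separate concern, but it is innocuous: the cone condition $|\xi|<c\eta$ already excludes a conic neighborhood of $\{\eta=0\}$, so this region contributes only to the smooth error.
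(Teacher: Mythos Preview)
Your argument is correct and reaches the same conclusion, but it is the Fourier-dual of the paper's proof rather than the same proof. The paper never passes to the $\eta$-variable: instead it composes $H_0$ on the left with $\partial_y$, which turns the pseudo-differential operator with symbol $\xi/\eta-\lambda y$ in $U$ into the genuine differential operator $\partial_x-\lambda(y\partial_y+1)$; then a Fourier expansion in $x$ alone gives the singular Euler-type ODE $(ik-\lambda(y\partial_y+1))v_k=-\lambda g_k$ in the physical variable $y$. The explicit solution produces both $y_+^{-1+ik/\lambda}$ and $y_-^{-1+ik/\lambda}$ contributions, and the wavefront hypothesis is invoked only at the very end (via Sokhotski--Plemelj and the fact that the $(y-i0)$ pieces have Fourier support in $\eta<0$) to discard the latter. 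Your route builds the wavefront restriction in from the outset by working on $\eta>0$, so the $(y-i0)$ branch never appears; the price is that you must handle the singularity of the symbol at $\eta=0$ and justify that variation of constants against a Schwartz right-hand side gives a Schwartz correction --- which you do, and which is the exact analogue of the paper's observation that the integral term $\int_0^1 s^{-ik/\lambda}g_k(ys)\,ds$ is smooth. The uniform-in-$k$ bookkeeping you flag as the main obstacle is equally present (and equally elided) in the paper's version.
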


\begin{proof}
We know already that the wave front set of $v $ is included in the characteristic manifold
$\xi = \lambda y\eta $ intersected with $U$. Composing on the left by $D:= \partial _y$, we get that 
$v$ satisfies
$(\partial _x -\lambda (y\partial _y + 1) )v \in C^\infty $. 

Using a  Fourier decomposition in  $x$,  we are then reduced to solve
$$\left( i k  - \lambda (y\partial _y + 1)\right)v_k  = -\lambda g_k  $$
for some smooth $g_k$.
This is a simple linear ordinary differential equation  whose solution  is given, for $y>0 $,   by
\[ v_k(y)= \int_0^1 s^{-ik/\lambda} g_k (ys) ds + l_k^+ y_+   ^{-1+ik/\lambda}          \] 
A similar expression holds for $y<0$.  The first term is smooth if $g_k$ is.

Summing the Fourier expansions and using again the Sokhotski-Plemelj theorem, we get
\[ v  \equiv \sum_k  v_k^+  (y+i0) ^{-1+ ik/ \lambda }e^{ikx} + \sum_k v_k^- (y-i0) ^{-1+ ik/ \lambda }e^{ikx}\]
Now, since the Fourier transform of the second sum is supported by $\eta <0$, it should vanish because of the 
wavefront set assumption. 
\end{proof}

\medskip
\noindent
We then  need  to characterize the admissible sequences $(v_k)$.

\begin{lem}\label{growth-lem} 
The sum 
$T:= \sum v_k (y+i0)^{-1+ik/\lambda }e^{ikx} $
defines a tempered distribution if and only if
\begin{equation}
\label{growth-condition}
v_k   e^{\pi (k/\lambda)_-} \hbox{  is of polynomial growth.}
\end{equation}
Furthermore,  if   $T\in L^2_{\rm loc}$ near $y=0$, the sequence $(v_k)$ is identically zero.
\end{lem}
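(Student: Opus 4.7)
The plan is to analyse $T$ through its partial Fourier transform in the $y$ variable. The input is the standard formula
\[ \cF_y\bigl[(y+i0)^{-1+ik/\lambda}\bigr](\eta) \;=\; c_k\, \indc_{\eta > 0}\, \eta^{-ik/\lambda},\qquad c_k = \frac{2\pi\, e^{i\pi(-1+ik/\lambda)/2}}{\Gamma(1-ik/\lambda)}. \]
Combining $|e^{i\pi(-1+ik/\lambda)/2}| = e^{-\pi k/(2\lambda)}$ with Stirling's asymptotic $|\Gamma(1-ik/\lambda)| \sim \sqrt{2\pi}\,|k/\lambda|^{1/2}\, e^{-\pi|k|/(2|\lambda|)}$, a short case analysis on the sign of $\lambda$ yields the two-sided bound $|c_k| \sim C\, |k|^{-1/2}\, e^{\pi(k/\lambda)_-}$ as $|k|\to\infty$.

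With this in hand, the partial Fourier transform of $T$ takes the form
\[ \hat T_y(x,\eta) \;=\; \indc_{\eta>0} \sum_k c_k v_k\, \eta^{-ik/\lambda} e^{ikx}. \]
Since $|\eta^{-ik/\lambda}|=1$ on $\eta>0$, testing against Schwartz functions of the shape $e^{-ikx}\psi(\eta)$ reduces the pairing to a Mellin transform of $\psi$ at frequency $k/\lambda$, which is rapidly decreasing in $k$ but not identically zero for generic $\psi$. This gives both directions of the equivalence: $T$ is tempered iff the coefficients $c_k v_k$ grow at most polynomially in $k$, which via the two-sided bound on $|c_k|$ is precisely the growth condition (\ref{growth-condition}).

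For the second statement, I work on $y>0$, where each distribution $(y+i0)^{-1+ik/\lambda}$ is just the function $y^{-1} e^{ik(\log y)/\lambda}$, so that
\[ T(x,y) \;=\; y^{-1}\, F\!\left(x + \tfrac{\log y}{\lambda}\right), \qquad F(\theta):=\sum_k v_k\, e^{ik\theta}, \]
holds as a distributional identity on $\T \times (0,\delta)$. Assuming $T\in L^2_{\mathrm{loc}}$ near $y=0$, the shear change of variables $(x,y)\mapsto(\theta=x+(\log y)/\lambda,\,y)$ has unit Jacobian, so $y^{-1}F(\theta)$ belongs to $L^2(\T\times(0,\delta))$. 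Fubini and Plancherel in $\theta$ then yield $F \in L^2(\T)$ together with $\|T\|^2_{L^2(\T\times(0,\delta))} = \|F\|^2_{L^2(\T)} \int_0^\delta y^{-2}\,dy$; since that integral diverges, $\|F\|_{L^2}=0$, i.e. $v_k = 0$ for every $k$.

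The main obstacle is Step 1: keeping careful track of the phase $e^{i\pi s/2}$ and the Gamma factor so as to verify that their moduli combine to exactly $e^{\pi(k/\lambda)_-}$ in a way that is uniform with respect to the sign of $\lambda$. Once the two-sided bound on $|c_k|$ is secured, the temperedness equivalence is a routine Fourier-series argument on $\T\times\R$, and the $L^2$ rigidity is an elementary change-of-variables computation.
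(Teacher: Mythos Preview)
Your proposal is correct. For the temperedness equivalence you proceed exactly as the paper does: take the partial Fourier transform in $y$, compute the size of the coefficient $c_k$ (you invoke Stirling where the paper uses the exact identity $|\Gamma(1-i\alpha)|^2=\pi\alpha/\sinh(\pi\alpha)$, with the same outcome $|c_k|\sim C|k|^{-1/2}e^{\pi(k/\lambda)_-}$), and translate temperedness into polynomial growth of $c_kv_k$.

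For the $L^2$ rigidity you take a different route. The paper simply invokes orthogonality of the modes $e^{ikx}$: if $T\in L^2_{\rm loc}$ near $y=0$ then each mode $v_k(y+i0)^{-1+ik/\lambda}$ must itself lie in $L^2_{\rm loc}$ in $y$, and since $|(y+i0)^{-1+ik/\lambda}|=|y|^{-1}\notin L^2$ this forces $v_k=0$. Your argument instead writes $T|_{y>0}=y^{-1}F\bigl(x+\lambda^{-1}\log y\bigr)$ and exploits the divergence of $\int_0^\delta y^{-2}\,dy$ after the unit-Jacobian shear $(x,y)\mapsto(x+\lambda^{-1}\log y,\,y)$. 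Both arguments are short; yours has the virtue of making the underlying dilation-translation structure of the solutions explicit, while the paper's is a one-liner once orthogonality in $x$ is noticed. In fact your computation is just the paper's argument after the change of variables, since Plancherel in $\theta$ recovers precisely the sum $\sum_k|v_k|^2$.
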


\begin{proof}
We recall that the  Fourier transform extends to tempered distributions (see \cite{GS64}, section 2.3). In particular we have,
 for $\alpha \in \R\setminus 0 $:
\[   {\cal F} \left((y+i0)^{-1+i\alpha}\right)(\eta)=- \frac{2\pi i e^{-\alpha \pi/2}}{\Gamma (1-i\alpha)}\eta _+^{-i\alpha } \equiv \gamma_\alpha  \eta _+^{-i\alpha }\]
We know also that
\[ |\Gamma (1-i\alpha )|= \sqrt{\frac{\pi \alpha }{\sinh \pi \alpha}}\] 
so that
\[ |\gamma _\alpha| = 2\pi  e^{-\alpha \pi/2}\sqrt{\frac{\sinh \pi \alpha}{\pi \alpha}} \sim _{\alpha \to \infty}Ê\sqrt{ \frac{2\pi}{|\alpha |}}e^{\pi \alpha_-}\,. \]
 
 The condition for $T$ to be a tempered distribution is that the series of its  Fourier coefficients grows at most polynomially, which is exactly condition (\ref{growth-condition}).
 
 Since none of the elementary functions $(y+i0)^{-1+i\alpha}$ is in $L^2_{loc}$, the only way that $T\in L^2_{\rm loc}$ near $y=0$ is that all Fourier coefficients vanish.
\end{proof}

\bigskip
Equipped with this characterization of the solutions to $H_0 v \in C^\infty$, we can now deduce the structure of  singularities in $u_\infty$ using  the normal form.

\begin{proof}[Proof of Proposition \ref{cor:solutions}]
Let $u$ be a solution to $(H-\omega_0) u \in C^\infty$.
We know that 
$WF(u)\subset \Sigma_{\omega_0}$.

\medskip
\noindent
Using a microlocal partition of unity on $\Sigma_{\omega_0}$, we can decompose 
$$ u =\sum_{\gamma} u_\gamma $$  
where $u_\gamma$ is a solution to  $(H-\omega_0) u _\gamma \in C^\infty$, microlocalised on the cone $\Gamma_\gamma$ generated by the cycle $\gamma$.
We have then $WF(u_\gamma )\subset \Gamma_\gamma$.

\medskip
\noindent
By Proposition \ref{pdonormal-prop}, there exists an operator $A$  elliptic on a conic neighborhood of $\Gamma_\gamma$ such that the equation $Hu_\gamma \in C^\infty $ rewrites
$$H_0A^{-1} u_\gamma  \in C^\infty, \quad WF( A^{-1} u_\gamma) \subset \Gamma_\gamma $$
We get then, from Lemma \ref{prop:sol} and \ref{growth-lem}
$$A^{-1} u\equiv  \sum_k  v_k  (y+i0) ^{-1+ ik/ \lambda }e^{ikx}$$. 
\end{proof}

\subsection{Proof of Theorem \ref{theo:OIF}}

\noindent
$\bullet$ By Proposition \ref{uinfty-prop}, we know   that $u_\infty = (H-\omega_0-i0)^{-1} f$  is $L^2$ near the unstable cycles.
 Then, by Proposition \ref{cor:solutions},
we deduce  that $u_\infty $ is smooth near the unstable cycles.

\begin{Rmk} Note  that the same argument shows that any $L^2$ eigenfunction of $H$  is actually smooth.
\end{Rmk}

\noindent
$\bullet$ Since the wavefront set of $u_\infty$  is invariant by the dynamics, and that $u_\infty $ is smooth near the unstable cycles, 
we further obtain  that $u_\infty $ is smooth in the unstable basins.
In view of the general form of the solutions of $(H-\omega _0) u \in C^\infty$ on the cone generated by any basin $B_\gamma$, this regularity implies that 
$(v_k)$ is fast decaying.

\noindent
$\bullet$ Any  distribution  microlocalized on $U$ of the
form
$$v(x,y)=\sum _k v_k (y+i0)^{-1+ik/\lambda } e^{ikx },$$
with   fast decaying coefficients $(v_ke^{\pi (k/\lambda)_-})$ is 
 a Fourier integral distribution associated to $\Gamma $.
This follows easily by taking the $y-$Fourier transform
$$\hat{v}(x,\eta)=\sum \gamma _{k/\lambda } v_k  \eta_+^{-ik/\lambda }e^{ikx} $$
which is a symbol of degree $0$  in $\eta $. 
Then  applying the operator $A$ keeps that property.

\begin{Rmk}
Note that, once we know that the wavefront set is located on the cones
 generated by stable  orbits, this singular behavior could be also obtained by semiclassical arguments, or by boundary layer techniques.
\end{Rmk}


\section{Conclusion}

We have shown that the phenomenon of concentration of the energy on attractors is a very general feature of forced dynamics 
governed by a  pseudo-differential operator with principal symbol  homogeneous of degree 0, associated to a non singular foliation. 

\bigskip
In the recent paper \cite{CdV}, the first author has extended this work to more singular situations. 
There are still  the assumptions that $\Sigma_{\omega _0} $ is nondegenerate, and  that it satisfies a Morse-Smale property which
 is crucial to build escape functions and apply Mourre theory.
But  the assumption that the projection $\pi $ is a finite covering  is removed by using suitable canonical transformations
and their quantizations as in \cite{weinstein,weinstein2}. We can then admit singular foliations with singular
 hyperbolic points which are foci, nodes or saddles.
Actually, these singular points and the corresponding generic normal forms are already studied in the context of implicit differential 
equations (see \cite{Ar88}).

\bigskip
However, even extended to singular foliations, this general theory fails to apply to the specific situation of internal waves observed in lab experiments:
\begin{itemize}
\item we would indeed need to understand how to catch the effects of boundaries which create discontinuities in the frequency space;
\item furthermore it would be necessary to add viscous effects, which are no more negligible when the wavelengths become large
 (see \cite{rieutord, ogilvie}).
\end{itemize}
These questions are major challenges for the mathematical analysis.

\bigskip
Another interesting perspective comes from the following remark. Although it is linear, the system we have studied here exhibit
 the main features of wave turbulence.
We have indeed proved that
\begin{itemize}
\item with a stationary forcing at wavelengths $O(1)$, small scales will be excited;
\item the energy cascade is given by  the frequency distribution  in the wavefront set $WF(u_\infty)$.
\end{itemize}
These properties are clearly related to the quasi-resonant mechanism, or in other words to  the fact that the
 spectrum of the operator is continuous. This could be an indication for the study of turbulence due to some weak coupling of waves.

\bigskip

\appendix

\section{Appendix: the integrable case}

Let us consider the operator $H$ on $L^2(X,|dx|)$, where $X$ is the torus $(\R/2\pi \Z)^d $,
defined by
\[\forall  n \ne 0,~ H e_n =h(n )e_n \]
and $H e_0=0$, 
with 
$e_n (x)={\rm exp}(i\langle n | x \rangle )$ for $n \in \Z^d$
and $h:\R^d \setminus 0 \to  \R $ is smooth and homogeneous of degree $0$.
The spectrum of $H$ is the union of $\{0\} $ and of the interval $J:=[ \min h ,\max h  ]$,  and is pure point dense
with eigenvalues  $\big(h(n )\big)_{n\in \Z^d}$.

\medskip
The solution to the forced wave equation 
is then given by
\begin{equation}
\label{integrable}
 u(t) =\sum_{h(n )\ne 0,~n \in \Z^d\setminus 0} a_n \frac{1-e^{-it h(n)}}{h(n )}e_n +
it \sum _{h(n )= 0,~n \in \Z^d\setminus 0}a_n    e_n
\end{equation}
where the $a_n $ are the Fourier coefficients of the forcing $f$
(assuming without loss of generality that $a_0=0$).


\subsection{The resonant case}

Consider first the case when $0$ is an eigenvalue of $H$, i.e. there exists $n_0 \ne 0 $ with $h(n_0 )=0$.
Let us assume in addition  that te non degeneracy condition is satisfied:  $h'(n_0) \neq 0$.

Using the euclidean division of $n$ by $n_0$, we get that for $n$ large enough 
$$h(n) \neq 0 \quad \Rightarrow \quad |h(n)|\geq \frac12 |h'(n_0)| { n_0 \over n }  \,.$$
Hence the first part of the sum in (\ref{integrable}) is bounded in $L^2$ as soon as 
$$\sum |  a_n|^2 |n |^2 <\infty ,$$
 i.e. as soon as the forcing is in the Sobolev space $H^1 (X)$.

In this case
\[ u(t) =it f_0 + O_{L^2}(1) \]
where $f_0$ is the orthogonal projection of $f$ onto $\Ker H$. 

\subsection{The non resonant case}

We now consider the non resonant case, with the diophantine condition
\[\forall n \in Z^d \setminus 0,~  |h(n)|\geq C|n |^{-\alpha } \]
with $C>0,~\alpha >0$. 
Then, if $f$ is smooth enough (depending on $\alpha$), the solution to the forced wave equation given by (\ref{integrable}) remains bounded in $L^2$.

\medskip
In order to compare this result to the non integrable case, we will translate the diophantine condition on the symbol  $h$ into a dynamical condition.
Let  us look at the  classical dynamics on the set $h(p)=0$. 
The momentum is constant, so that we can look at the dynamics on a torus $X\times \{ p_0 \}$ with $h(p_0)=0$.
Assume for simplicity that $h$ vanishes only on the line generated by $p_0$ with $|p_0|=1$, and that we have the non degeneracy condition $h'(p_0) \neq 0$. 
The dynamics is linear, given by the vector field
\[ Y=\sum_{j=1}^d  \frac{\d h}{\d p_j}(p_0)\d _{x_j} \]  
For any $n$, define $n_1 = n/|n|$. We have, using $Y.p_0=0$, 
\[ h(n_1) -h(p_0)= Y.n_1 + O(|n_1 -p_0|^2)\]
\begin{itemize}
\item If $|n_1 -p_0|\leq c/|n |^{\alpha /2} $, we get
$|Y.n_1 | \geq C_1/|n |^{\alpha } $ with $C_1>0$.
\item 
If $|n_1 -p_0|\geq c/|n |^{\alpha /2} $,
we get  $| h(n_1) -h(p_0)|\geq C_2 /|n |^{\alpha /2}$, and
hence  $|Y.n_1 | \geq C_3/|n |^{\alpha /2}$.
\end{itemize}
Finally we get that $Y$ satisfies a Diophantine condition. 
A similar reasoning shows that the converse is true.

\begin{Prop}
The following diophantine conditions are equivalent:
$$\exists C,\alpha >0 ,\quad 
\forall n \in \Z^d\setminus 0,~ |h(n) |\geq C|n |^{-\alpha }\,,$$
$$\exists D,\beta >0, \quad 
\forall n \in \Z^d\setminus 0,~  |Y.n | \geq D|n |^{-\beta }\,.$$
\end{Prop}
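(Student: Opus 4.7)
The plan is to transfer the two Diophantine conditions into one another through a Taylor expansion of $h$ at the zero $p_0$. By degree-$0$ homogeneity $h(n)=h(n_1)$ for $n_1=n/|n|$, and Euler's relation gives $\nabla h(p_0)\cdot p_0=0$, so $Y\cdot n_1=\nabla h(p_0)\cdot(n_1-p_0)$ and
\[
h(n_1)=Y\cdot n_1+O(|n_1-p_0|^2)
\]
near $p_0$, with an analogous identity at $-p_0$ coming from $\nabla h(-p_0)=-\nabla h(p_0)$. Thus $|h|$ and $|Y\cdot\,|$ agree to leading order near their common zero on the sphere, so any polynomial lower bound on one should propagate to the other up to a controlled loss of exponent.

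For the implication $h$-Diophantine $\Rightarrow Y$-Diophantine, I would case-split on the size of $|n_1\pm p_0|$ relative to a fractional power of $|n|$. In the near regime $|n_1-p_0|\le c|n|^{-\alpha/2}$, the Taylor remainder is bounded by $C c^2|n|^{-\alpha}$, which is strictly dominated by the hypothesis $|h(n_1)|\ge C_h|n|^{-\alpha}$ once $c$ is small; this yields $|Y\cdot n_1|\ge\tfrac12 C_h|n|^{-\alpha}$ and hence $|Y\cdot n|\ge\tfrac12 C_h|n|^{1-\alpha}$. In the far regime $|n_1-p_0|\ge c|n|^{-\alpha/2}$, the assumption that $h$ vanishes only on $\R p_0$ together with $\nabla h(p_0)\ne 0$ produces a lower bound $|h(n_1)|\ge c''|n|^{-\alpha/2}$ (linear close to $\pm p_0$, uniformly bounded below away from them), and feeding this into the Taylor identity gives $|Y\cdot n_1|\ge c'''|n|^{-\alpha/2}$, i.e.\ $|Y\cdot n|\ge c'''|n|^{1-\alpha/2}$. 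The two regimes combine into a single Diophantine estimate $|Y\cdot n|\ge D|n|^{-\beta}$ with, say, $\beta=\alpha$.

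The converse $Y\Rightarrow h$ is symmetric: the same Taylor identity turns a lower bound on $|Y\cdot n|$ into one on $|h(n)|$, with the analogous near/far case split and the roles of $\alpha$ and $\beta$ exchanged. The step I would work through most carefully is the lower bound $|h(n_1)|\ge c'|n_1-p_0|$ in a fixed neighborhood of $p_0$ (and its analog at $-p_0$): in the paper's two-dimensional sphere $S^1$ this follows immediately from the fact that $\nabla h(p_0)$ is a nonzero tangent vector (so $h|_{S^1}$ has a simple zero at $p_0$) together with compactness; in higher dimension a similar quantitative non-degeneracy must be extracted from the assumption ``$h$ vanishes only on $\R p_0$''. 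Once this is in place, patching the near and far regimes is routine.
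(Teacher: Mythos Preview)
Your proposal is correct and matches the paper's argument essentially line for line: the Taylor expansion $h(n_1)=Y\cdot n_1+O(|n_1-p_0|^2)$ via Euler's relation, the near/far case split at the threshold $|n_1-p_0|\sim c|n|^{-\alpha/2}$, and the symmetric reasoning for the converse are exactly what the paper does in the paragraph preceding the proposition. Your version is in fact slightly more careful---you make explicit the role of $-p_0$, and you correctly flag that the lower bound $|h(n_1)|\gtrsim|n_1-p_0|$ (and the companion bound for $|Y\cdot n_1|$ away from $\pm p_0$) uses that the ambient sphere is $S^1$; the paper glosses over this with a bare ``hence''.
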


\noindent
{\bf Acknowledgements.} We would like to thank E. Ghys  for enlightening discussions on escape functions. We  also thank J. Sj\"ostrand,
 S. Dyatlov and M. Zworski
for their very useful comments on the first version of this paper.
Finally we express our gratitude to T. Dauxois and D. Bresch for their careful reading and their suggestions to make the paper accessible to a broader audience.


\bibliographystyle{plain}

\end{document}